\newif\ifconference
\newcommand{\ignore}[1]{}
\definecolor{darkgreen}{rgb}{0.00, 0.43, 0.00}
\newcommand\mc{\mathcal}
\newcommand\Mm{\mathcal{M}}
\newcommand\Rr{\mathcal{R}}
\newcommand\Pp{\mathcal{P}}
\newcommand\Dd{\mathcal{D}}
\newcommand\Ss{\mathcal{S}}
\newcommand\Ii{\mathcal{I}}
\DeclareMathOperator{\Var}{Var}
\newtheorem{theorem}{Theorem}
\newtheorem{definition}[theorem]{Definition}
\newtheorem{lemma}[theorem]{Lemma}
\newtheorem{fact}[theorem]{Fact}
\newtheorem{claim}[theorem]{Claim}
\theoremstyle{remark}
\newtheorem*{remark}{Remark}
\newcommand\blfootnote[1]{%
  \begingroup
  \renewcommand\thefootnote{}\footnote{#1}%
  \addtocounter{footnote}{-1}%
  \endgroup
}
\let\oldparagraph\paragraph
\renewcommand{\paragraph}[1]{\oldparagraph{\it #1}}
\date{}
\title{\Large Massively Parallel Computation on Embedded Planar Graphs\thanks{The authors are part of BARC, Basic Algorithms Research Copenhagen, supported by the VILLUM Foundation grant 16582.}}
\title{Massively Parallel Computation on Embedded Planar Graphs}
\author{Jacob Holm\thanks{Basic Algorithms Research Copenhagen, University of Copenhagen, \texttt{jaho@di.ku.dk}.} \and Jakub Tětek\thanks{Basic Algorithms Research Copenhagen, University of Copenhagen, \texttt{j.tetek@gmail.com}.}}
\author{
\begin{minipage}{12em}
\centering 
Jacob Holm \\ \texttt{jaho@di.ku.dk}
\end{minipage}
\begin{minipage}{12em}
\centering
Jakub Tětek\\\texttt{j.tetek@gmail.com}\\
\end{minipage}
\vspace{1em}\\
%
	Basic Algorithms Research Copenhagen,\\ University of Copenhagen
}
\begin{document}
\maketitle

\ifconference\else
\blfootnote{The authors are part of BARC, Basic Algorithms Research Copenhagen, supported by the VILLUM Foundation grant 16582.}
\fi

\ifconference
\fancyfoot[R]{\scriptsize{Copyright \textcopyright\ 2023 by SIAM\\
Unauthorized reproduction of this article is prohibited}}
\else
\thispagestyle{empty}
\fi

\begin{abstract}
\ifconference \small\baselineskip=9pt \fi
Many of the classic graph problems cannot be solved in the Massively Parallel Computation setting (MPC) with strongly sublinear space per machine and $o(\log n)$ rounds, unless the 1-vs-2 cycles conjecture is false. This is true even on planar graphs. Such problems include, for example, counting connected components, bipartition, minimum spanning tree problem, (approximate) shortest paths, and (approximate) diameter/radius.

In this paper, we show a way to get around this limitation. Specifically, we show that if we have a ``nice'' (for example, straight-line) embedding of the input graph, all the mentioned problems can be solved with $O(n^{2/3+\epsilon})$ space per machine in $O(1)$ rounds.
In conjunction with existing algorithms for computing the Delaunay triangulation, our results imply an MPC algorithm for exact Euclidean minimum spanning thee (EMST) that uses $O(n^{2/3 + \epsilon})$ space per machine and finishes in $O(1)$ rounds. This is the first improvement over a straightforward use of the standard Borůvka's algorithm with the Dauleanay triangulation algorithm of Goodrich [SODA 1997] which results in $\Theta(\log n)$ rounds. This also partially negatively answers a question of Andoni, Nikolov, Onak, and Yaroslavtsev [STOC 2014], asking for lower bounds for exact EMST.

We extend our algorithms to work with embeddings consisting of curves that are not ``too squiggly" (as formalized by the total absolute curvature). We do this via a new lemma which we believe is of independent interest and could be used to parameterize other geometric problems by the total absolute curvature. We also state several open problems regarding massively parallel computation on planar graphs.
\end{abstract}

\newpage
\clearpage
\setcounter{page}{1}
\section{Introduction}

In massively parallel computation (MPC), many of the classic graph problems require $\Omega(\log n)$ rounds if we have $O(n^{1-\epsilon})$ space per machine, under the 1-vs-2-cycles conjecture~\cite{beame2017communication,Nanongkai2020}. This includes, for example, counting connected components, bipartition, minimum spanning tree problem, (approximate) shortest paths, and (approximate) diameter/radius. Moreover, the lower bounds hold even for very restricted graph classes, such as planar graphs with bounded degrees.

However, when dealing with a planar graph, we would usually have available an embedding of the graph in the plane, along with the graph itself --- after all, the graph is presumably planar exactly because it is embedded in the plane. This is the case, for example, for maps, (single-layer) circuit boards, computational geometry applications, and more. This leads to a natural question: can we use the embedding to get around these lower bounds?

In this paper, we answer this question positively. Specifically, we give algorithms for all the above-mentioned problems in $O(1)$ rounds if we have $\Omega(n^{2/3+\epsilon})$ space per machine. Throughout the paper, we assume a straight-line embedding. In \Cref{sec:beyond_straight_lines}, we show that our results also generalize to embeddings with edges as differentiable curves, with the complexity being parameterized by their total absolute curvature. We achieve this by introducing a new lemma which we believe may be useful for getting other geometric problems parameterized by the total absolute curvature of the input. Our algorithms also generalize to the case when the graph is not planar but has $\ll n^2$ crossings.


All our algorithms are based on using an $r$-division of the input graph. This is the first time that $r$-divisions are used in the context of MPC. We believe the reason this has not been done before is the following. If we have $\Omega(n)$ space per machine, then the whole graph fits onto one machine, and we can solve any problem in a single round. On the other hand, if we have $O(n^{1-\epsilon})$ space per machine, it is impossible to find an $r$-division in $o(\log n)$ rounds under the $1$-vs-$2$-cycles conjecture, because otherwise our algorithms contradict the conjecture. We get around this by assuming a plane embedding of the input graph, allowing us to find an $r$-division. To the best of our knowledge, this paper is also the first to notice that having a geometric embedding of a planar graph can be used to solve some graph problems more efficiently.

Our results imply an exact MPC algorithm for the Euclidean minimum spanning tree problem that uses
$O(1)$ rounds and $O(n^{2/3+\epsilon})$ space per machine, for any $\epsilon > 0$. This partially answers a question of \citet{Andoni2014}, who asked if it is possible to find exact euclidean MST in this complexity. This is in contrast to their lower bound which shows that in $\ell_\infty$ norm and $\Theta(\log n)$ dimensions, this is not possible under the 1-vs-2-cycles conjecture.

Our algorithm for Euclidean MST works as follows. We start by using one of the existing algorithms \cite{Goodrich1997,Kramer2020} to compute a Delaunay triangulation of the input set. This can be done in $O(1)$ rounds and $O(n^\epsilon)$ space per machine. It is well-known that the EMST of a point set is a subset of its Delaunay triangulation. To get the euclidean minimum spanning tree, we then may use our MST algorithm on the Delaunay triangulation that uses $O(1)$ rounds and $O(n^{2/3+\epsilon})$ space per machine. The Delaunay triangulation is an embedded planar graph with a straight-line embedding, and the assumptions of our algorithm are thus satisfied.
One can easily get an algorithm with $O(\log n)$ rounds complexity, by using Borůvka's MST algorithm in place of our more efficient algorithm. The only previously known algorithm that uses $o(\log n)$ rounds and $n^{1-\Omega(1)}$ space per machine is that of \citet{Andoni2014} which only gives an approximate answer.

As a starting point, our MPC algorithms use a modification of the algorithm of \citet{chazelle2011online} for finding an $r$-division of a triangulation in sublinear time.  As a side-note, we show in Appendix~\ref{sec:estimate_Lipschitz_parameters} that the algorithm of \citet{chazelle2011online} can also be used to estimate additive Lipschitz graph parameters. This includes properties such as maximum matching, maximum independent set, or minimum dominating set. 
This class of problems has been considered before by \citet{Newman2013}.
While \citet{Newman2013} shows how to estimate a more general class of problems, they assumed that the graph has degrees bounded by $d$. Our algorithm works in general planar graphs. Moreover, our algorithm is simple and has complexity $O(\sqrt{n}\log^{3/2} n/\epsilon^{2.5})$. The approach of \cite{Newman2013} when used in conjunction with the property-testing algorithm of \citet{Kumar2021} results in complexity $O(
d^{686}/\epsilon^{30654})$ for $d$ being the maximum degree, and is unlikely to be practical.

Separators and $r$-divisions have been used before in parallel algorithms for planar graphs in the PRAM model \cite{klein1993linear,subramanian1995efficient,gianinazzi2020parallel,traff2000simple,cohen1993efficient}. However, that line of work hits very different challenges than we do. The difficulties in this paper are $(1)$ it is unclear how to compute the $r$-division and $(2)$ even given an $r$-division, it is unclear how to best solve the problem since the amount of local memory is limited. The second problem is not present in the PRAM model, while the first has a very different solution in PRAM \cite{klein1993linear}.

\subsection{Previous work}

\subsubsection{Massively parallel computation}

Until recently, no MPC algorithms have been known for planar graphs in the setting with $n^{1-\Omega(1)}$ memory per machine that would have better performance than the best known algorithm for general graphs. Recently, two papers that focus on bounded-arboricity graphs (which includes planar graphs) appeared. Specifically, in \cite{Behnezhad2019} the authors show algorithms for the maximum matching and maximum independent set problems that use $O(n^{\epsilon})$ space for any $\epsilon>0$ and $O(\log^2 \log n)$ rounds in bounded-arboricity graphs. This has been later improved to $O(\log \log n)$ by \citet{Ghaffari2020}. Still, these algorithms only use the bound on the arboricity which can also be satisfied by graphs that are far from being planar. No algorithm specific to planar graphs was previously known in the setting with $n^{1-\Omega(1)}$ memory per machine.

The $1$-vs-$2$-cycles conjecture states that for any $\epsilon > 0$, one needs $\Omega(\log n)$ rounds to tell apart one cycle of length $n$ and two cycles of length $n/2$ when using $n^{1-\Omega(1)}$ space per machine. This has been proven for some special classes of algorithms; for an overview of these results, see \cite[Section~3.2]{Nanongkai2020}. This conjecture is known to be equivalent to many common combinatorial problems \cite{Nanongkai2020}. Specifically, any of the problems that we consider in \Cref{sec:mpc_algorithms} need $\Omega(\log n)$ rounds with strongly sublinear space if this conjecture holds. We show that for any $\epsilon > 0$, we can get an algorithm with $O(1)$ rounds and $O(n^{2/3+\epsilon})$ space per machine by using the embedding. Our algorithms thus give a conditional separation between the settings with and without an embedding in MPC. Algorithms with constant round complexity for these problems were previously only obtained in a stronger model called Adaptive Massively Parallel Computation \cite{Behnezhad2020}.

\subsubsection{Algorithms for geometric graphs}
While we are aware of combinatorial embeddings being used in the previous work (such as in \cite{kao1993towards}), we are not aware of any papers specific to the setting where we have a geometrically embedded planar graph as the input. 

It is NP-hard to get an embedding with the smallest possible number of crossings \cite{Garey1983}, and it thus may give us additional power if we assume that we are given an embedding with a small number of crossings. \citet{Eppstein2010} have shown that in a graph with $O(n/\log^{(c)})$ crossings for a constant $c$, they are able to find an $r$-division in linear time, thus allowing them to get linear time algorithms for several problems, including the single source shortest paths problem.


\subsubsection{PRAM algorithms on planar graphs}
Separators and $r$-divisions have been used previously in the context of PRAM, see e.g.\ \cite{klein1993linear,subramanian1995efficient,gianinazzi2020parallel,traff2000simple,cohen1993efficient}. 
\emph{Combinatorial} embeddings have been used before in algorithms for planar graphs, see e.g.\ \cite{thorup2004compact,kao1993towards,atallah1997efficient}. Moreover, an embedding may be computed efficiently even in PRAM \cite{klein1988efficient}, making it often unnecessary to assume that the embedding is part of the input both in sequential and PRAM computation. As we said above, we show that this is not the case in MPC. We are unaware of previous use of \emph{geometric} embeddings in algorithms.

\subsection{Techniques}
As a starting point, we use the algorithm of \citet{chazelle2011online} for finding an $r$-division of a triangulation in sublinear time. We define a notion of a hybrid $r$-division and show that the algorithm from \cite{chazelle2011online} can be modified to find a hybrid $r$-division in general planar graphs. In MPC, it is then possible to turn the hybrid $r$-division into a non-hybrid one. This can in turn be used to give algorithms for the above-mentioned problems. One obstacle to achieving the $O(n^{2/3+\epsilon})$ per machine space complexity is that in this space complexity, we are only able to find an $r$-division with boundary per region of $O(n^{1-\epsilon'})$ where $\epsilon'$ depends on $\epsilon$. For all the above-mentioned problems we are able to get around this rather large boundary size by giving a recursive algorithm.

Our algorithms in MPC rely on the inner workings of the sublinear-time $r$-division algorithm. We thus start by describing our modification of the algorithm from \cite{chazelle2011online}. We then describe below how this can be used in MPC.

\subsubsection{Computing a hybrid $r$-division in sublinear time}


As it is useful for the understanding of computation of $r$-divisions in MPC, we now sketch a modification of the algorithm by \citet{chazelle2011online}.
We partition the plane into polygons (which we will call \emph{sectors}) such that all vertices that lie in one polygon/sector belong to the same region. We build a point location data structure for this plane partition and for each polygon in the partition, we store which region its vertices belong to. This together allows us to efficiently determine which region a vertex lies in, just based on its coordinates.

It remains to specify how to find a suitable partition of the plane and how to choose for each polygon, which region its vertices will be in. We will make sure that the partition divides the vertices into ``uniformly-sized chunks"; specifically, we make sure that for some parameter $s$, each sector contains at most some $s$ vertices and is intersected by at most $s$ edges, and at the same time there are $O(n \log (n)/s)$ sectors.
Such a partition of the plane can be computed by a modified version of the classic algorithm of \citet{Clarkson1989} -- we sample $O(n \log (n)/s)$ edges and vertices and compute the trapezoid decomposition.

We decide to which region a polygon's vertices belong as follows. We build a graph with vertex set being the set of polygons in the plane partition and an edge between vertices corresponding to two polygons if and only if those polygons are in a certain (quite subtle) sense adjacent\footnote{The formal definition assumes that each polygon contains some subset of its geometric boundary. The relation of two polygons being adjacent then depends on what polygons the boundaries belong to.}. 
This ``sector'' graph is planar and has $O(n \log (n)/s)$ vertices and as such, we can find an $r/s$-division with $O(n\log(n)/r)$ regions and boundary size per region $O(\sqrt{r/s})$. For each region $R$ in this $r/s$-division, we make a region consisting of vertices lying in polygons whose corresponding vertices lie in $R$. The size of each region is then $O(s) \cdot r/s = O(r)$. For each vertex in the boundary of $R$, we get $O(s)$ vertices and edges in the boundary of the new region. Therefore, the size of the 
boundary of any region is $O(\sqrt{s r})$. We also build a point location data structure, which enables us to efficiently tell which regions a vertex lies in.

\subsubsection{MPC algorithms}

In this section, we start by showing a Las Vegas algorithm that computes an $r$-division of the input graph in $O(1)$ rounds, both expected and with high probability. We then use this to design recursive algorithms that solve various problems in expected $O(1)$ rounds with space $O(n^{2/3+\epsilon})$ for any constant $\epsilon > 0$. We explain why we need this much space below (usually, an algorithm with complexity $O(n^{1-\epsilon})$ can be modified by using recursion to get down to complexity $O(n^{\epsilon})$ for any constant $\epsilon>0$).

\paragraph{Computing an $r$-division.} To compute an $r$-division, we make use of the sublinear-time algorithm for $r$-divisions. Specifically, we sample in parallel an appropriate number of edges and vertices, and we send them all to one machine. If we pick the number of sampled edges and vertices to be large enough, we then may execute the sublinear-time algorithm on this one machine. This results in an oracle that, given a vertex, returns the region the vertex belongs to. Moreover, the oracle does not perform any queries and only needs to know the coordinates of the vertex that we are querying. We set the parameters in such a way that the oracle fits onto one machine. We can now distribute the oracle to all machines using the broadcast trees. We can now, locally on each machine, determine which region some given vertex lies in. This allows us to move the edges around in such a way that each region in the $r$-division is stored on consecutive machines while any machine stores only edges belonging to only one region.

\paragraph{Sketch of our algorithm for counting connected components.} We now use our algorithm for computing $r$-divisions to get algorithms for several classic graph problems. For illustration, we sketch the simplest of our algorithms -- an algorithm for counting connected components in a graph. 
One of the ideas is that of (recursively) compressing a subgraph, to get a smaller graph, such that a solution on this compressed graph can be used to recover a solution on the bigger original graph. The main difference between our algorithms for different problems lies in how we compress a subgraph. The idea of compressing a subgraph is related to the approach of \citet{Eppstein1996} who use subgraph compression for graph properties (not parameters) in the context of dynamic algorithms for planar graphs.

In order to make the recursion work, we solve a more general problem. Namely, we assume that we are given a graph $G$ with some subset of vertices $\nabla(G)$ that are \emph{marked}. We compute the number of connected components that have empty intersection with $\nabla(G)$ while we ``compress" the other components by computing a graph $G'$ with the following properties: it contains all vertices in $\nabla(G)$, it has $O(\nabla(G))$ vertices, and two vertices in $\nabla(G)$ are connected in $G'$ if and only if they are connected in $G$.
%

This problem can be solved recursively. In the base case, the whole graph fits onto one machine, and we may solve the problem by a simple sequential algorithm. We now describe the recursive case.

We find an $r$-division $\Rr$ for $r = |G|^{1-\epsilon}$ for some $\epsilon>0$. For each region $R \in \Rr$, we mark its boundary in addition to the vertices that are already marked in $G$. We then recursively find the number of connected components that do not intersect $\nabla(R)$ and a graph on $O(\nabla(R))$ vertices that connects two vertices in $\nabla(R)$ iff they are connected in $R$. Taking the union of such graphs over all $R \in \Rr$, we get a graph in which two vertices of $\partial(\Rr) \cup \nabla(G)$ are connected only if they are connected in $G$. We call this graph $H$. If we choose the parameters right, the graph $H$ will fit onto one machine. This allows us to further compress this graph, resulting in a graph with $O(\nabla(G))$ vertices in which two vertices in $\nabla(G)$ are connected iff they are connected in $G$. We compute the number of connected components that have empty intersection with $\nabla(G)$ by adding the respective numbers from the recursive calls with the number of connected components in $H$ that have empty intersection with $\nabla(G)$.

In each successive recursive call, the size of the graph is $|G|^{1-\epsilon}$ for a constant $\epsilon > 0$. The recursion ends if $|G| \leq O(\Ss)$ for $\Ss \geq n^{2/3}$. We thus get that the depth of recursion is $O(1)$. The round complexity is thus  $O(1)$ in expectation and with high probability. This also ensures that the sets $\nabla(G)$ do not get too large, as in each successive call, the size of this set increases, as we will prove, by at most $|\partial(R)| \leq o(n^{1/3}) \leq O(\Ss)$.

\paragraph{Why we need $\Omega(n^{2/3})$ space.} The requirement that $\Ss \geq \Omega(n^{2/3 + \epsilon})$ comes from the following two facts. First, we need that $|\partial(\Rr)| \leq \Ss$ (otherwise we couldn't fit the compressed graph of size $O(|\partial(\Rr)| + |\nabla(G)|)$ onto one machine). 
Second, we need to make sure that we can construct the $r$-division oracle on one machine, which requires the space per machine to be $\Ss \geq \Omega(\frac{n \log n}{s})$.

Thus we need $\Ss\geq \Omega(\max\{ \sqrt{sr}\frac{n\log n}{r} , \frac{n\log n}{s} \})$, which is minimized for $s=\sqrt[3]{r}$. Since we also need $r=O(n^{1-\epsilon})$,
this implies that we need $\Omega(n^{2/3+\epsilon/3}\log n)$ space per machine, and $O(n^{2/3+\epsilon})$ space turns out to be sufficient.

It is usually the case that if one gets an algorithm with $O(n^{1-\epsilon})$ space per machine, it is possible to also get an algorithm with the same asymptotic round complexity and $O(n^{\epsilon})$ space per machine by recursively using the same approach. Why is this not straightforwardly possible in our case? We could compute in such small space a smaller graph on which we would need to recurse. However, it is not clear how to efficiently find an embedding of this smaller graph. Our approach fundamentally depends on having this embedding, and it is thus not clear if this approach is feasible in our situation.

\subsection{Beyond straight-line embeddings}
The above algorithm in fact works whenever the input consists of $x$-monotone curves. We pick an angle $\theta \sim \mathit{Unif}([0,2\pi])$, rotate the whole input by $\theta$ angle around the origin, and subdivide all curves so as to make all the resulting curves are $x$-monotone. This can be done by dividing the curve whenever its tangent at a point is vertical. (We may perform the rotation lazily, whenever we want to use the coordinates of a vertex.) We then prove the following lemma: Consider a ``nice" curve $C$ and mark all points such that the tangent of the curve forms angle $\theta$ with the $x$-axis; then the expected number of marked points is $O(A(C))$ where $A(C)$ is the total absolute curvature of $C$. Applying this lemma to the rotated input, we get that the expected number of points at which the tangent of an edge is vertical, is $O(A(G))$ for $A(G)$ being the total absolute curvature of all the edges.

\subsubsection{Estimating additive Lipschitz parameters}
Suppose we have a graph parameter $\Pi$ that changes by $O(1)$ by removing any single vertex and such that for disjoint graphs $G_1,G_2$, it holds that $\Pi(G_1 \cup G_2) = \Pi(G_1) + \Pi(G_2)$. Given a graph $G$, we want to estimate $\Pi(G)$.
We find an $r$-division for appropriately chosen $r$ with $O(\epsilon n)$ vertices and edges in the boundaries. We will want to be able to perform BFS within one region. In order to do this, we remove all vertices with degree $\geq \Theta(1/\epsilon)$. We let $G'$ be a graph obtained from $G$ by removing all boundary vertices, all vertices incident to boundary edges, and all vertices with degree $\geq \Theta(1/\epsilon)$. There are only $O(\epsilon n)$ such vertices and thus $|\Pi(G) - \Pi(G')| \leq O(\epsilon n)$. By additivity, $\Pi(G') = \sum_{H \in cc(G')} \Pi(H)$ where $cc(G')$ denotes the set of connected components of $G'$. We estimate this sum using the Horvitz-Thompson estimator: let $v$ be a random vertex, and let $H_v$ be the connected component of $G'$ that $v$ lies in; $n \Pi(H)/|H_v|$ is then an unbiased estimator of $\Pi(G')$. We take the average of sufficiently many independent copies and use this as our final estimate.

\section{Preliminaries} \label{sec:prelims}
We now give several definitions. All the definitions in this section are commonly used and are not new to this paper. We then give definitions at the beginning of \Cref{sec:sublinear_separators} that are either new to this paper or that we use slightly differently from their common usage.

\subsection{Balanced separators and \texorpdfstring{$r$}{r}-divisions}
The following definition comes (in a slightly different form) from \citet{Frederickson1987}.
\begin{definition}
A division $\Rr$ is a system of subsets of $V(G)$ such that each edge's endpoints lie in common $R \in \Rr$, which we call a region. An $r$-division is a division such that $(1)$ each $R \in \Rr$ contains $\leq r$ vertices, $(2)$ it holds $|\Rr| \leq O(n/r)$, and $(3)$ the boundary\footnote{Boundary of $R \in \Rr$, denoted $\partial(R)$, is defined as $R \cap (\bigcup_{R' \in \Rr \setminus \{R\}} R')$} of each $R \in \Rr$ has size $|\partial(R)|\leq O(\sqrt{r})$.
\end{definition}


It is well known that planar graphs have $r$-divisions for any value of $r$ \cite{Lipton1979}. 
Moreover, $r$-divisions can be computed very efficiently:

\begin{fact}[\citet{Klein2013}]
Let $G$ be a planar graph. 
There is an algorithm that computes an $r$-division in linear time, for any given $r \geq \Omega(1)$.
\end{fact}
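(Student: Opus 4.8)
The plan is to follow the classical recursive balanced-separator paradigm and then engineer away its logarithmic overhead. First I would invoke a linear-time balanced \emph{cycle}-separator subroutine --- the BFS-based Lipton--Tarjan/Miller construction run on a triangulation of $G$ --- which produces a simple cycle on $O(\sqrt{n})$ vertices whose interior and exterior each contain at most $\frac{2}{3}n$ of the remaining vertices. Working with cycle separators rather than arbitrary vertex subsets is what keeps every recursively produced piece ``nicely structured'' (bounded number of holes), which is needed both for recursing and for the final cleanup. Starting from $G$, I repeatedly split each current piece by a balanced cycle separator, placing the separator vertices in both children, and stop splitting a piece once it has at most $r$ vertices. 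Each split shrinks a piece's vertex count by a constant factor (once it is larger than some constant), so the recursion tree has depth $O(\log(n/r))$ and its leaves partition the edges into $O(n/r)$ pieces of $O(r)$ vertices each. A final Frederickson-style cleanup --- merging small pieces and redistributing shared vertices --- turns the leaves into a genuine $r$-division (at most $r$ vertices per region, $O(n/r)$ regions, $O(\sqrt r)$ boundary per region) in additional $O(n)$ time. Combined with the $O(\text{size})$ cost of a separator computation, this already gives an $O(n\log(n/r))$-time algorithm.

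The obstruction to \emph{linear} time is that each of the $\Theta(\log(n/r))$ levels can cost $\Theta(n)$, for two distinct reasons: (a) finding the separator of a piece by a fresh BFS costs time proportional to the piece's size, and the piece sizes at a fixed level sum to $\Theta(n)$; and (b) performing a split --- partitioning a piece's adjacency lists between its two children --- again costs time proportional to the piece's size. So a linear-time algorithm must arrange that neither separator-finding nor splitting ever touches the whole interior of a non-leaf piece.

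Removing both of these $\Theta(n)$-per-level costs is the hard part, and the technical heart of the result. For (a), the fix is to maintain for each piece a BFS-like rooted spanning structure \emph{inherited} from its parent, obtained by restricting the parent's structure to the piece and repairing it only along the separating cycle; this lets a balanced separator of a piece be read off in time $O(\sqrt{\text{piece size}})$ instead of $O(\text{piece size})$, and summing $\sqrt{\text{size}}$ over the $\approx 2^i$ pieces at level $i$ gives at most $\sqrt{n\,2^i}$, so the total over all levels is $\sqrt{n}\sum_i 2^{i/2}=O(n)$ (a geometric series dominated by its last term). For (b), the fix is never to materialize the interior of a non-leaf piece: pieces are represented implicitly, a split merely records the separating cycle and assigns each hole to one side, and a piece is expanded into an explicit subgraph only once it becomes a leaf --- and $\sum_{\text{leaves}}(\text{leaf size})=O(n)$. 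The main difficulty is making these two optimizations coexist with the planar embedding and the evolving hole structure: one has to prove that the cheaply inherited spanning structure still admits a balanced separator of size $O(\sqrt{\text{size}})$ at every level, and that the implicit representation still supports the separator queries and the cleanup pass. Granting this, reading off the leaves of the tree and running the cleanup pass produces an $r$-division in $O(n)$ time, for every $r \geq \Omega(1)$.
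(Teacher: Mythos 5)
The paper itself offers no proof of this statement; it is stated as a \emph{Fact} and attributed to \citet{Klein2013} (Klein, Mozes, and Sommer), whose result it simply imports as a black box. So there is no in-paper argument to compare yours against. Your blind reconstruction does, however, track the actual Klein--Mozes--Sommer strategy reasonably faithfully: recursive balanced \emph{cycle} separators to keep pieces with a bounded number of holes, a Frederickson-style cleanup to convert the leaves into a genuine $r$-division, the identification that the naive recursion costs $\Theta(n)$ per level for two separate reasons (re-running BFS, and materializing child pieces), and the two corresponding fixes (inheriting and locally repairing a rooted spanning structure so a separator can be extracted in time proportional to the separator size rather than the piece size, and keeping non-leaf pieces implicit). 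Your Cauchy--Schwarz / geometric-series accounting $\sum_i \sqrt{n\cdot 2^i}=O(n/\sqrt{r})=O(n)$ is the right shape of bound. The one honest caveat --- which you flag yourself with ``granting this'' --- is that the coexistence of the inherited spanning structure, the planar embedding, and the evolving hole structure is where essentially all of the technical work in \citet{Klein2013} lives, and a sketch at this level does not discharge it; as a self-contained proof it is incomplete, but as a reconstruction of the cited result's architecture it is sound.
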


%
%


Throughout this paper, we assume that $r$-divisions are stored in a ``dual" representation, where we store for each vertex the set of regions it belongs to, instead of storing for each region the set of vertices it contains.  We assume that the representation implements the following operations in worst case $O(1)$ time: membership, cardinality, iterate through elements (taking $O(1)$ time per element). These operations are implemented for example by Cuckoo hash tables.

\subsection{Trapezoid Decomposition}
Trapezoid decomposition is a geometric structure, originally designed to solve the point location problem -- one of the fundamental problems in computational geometry. It has also been used in \cite{Clarkson1989} for computing cuttings of the plane (defined in \Cref{sec:sublinear_separators}); this is how we use trapezoid decompositions in this paper. This is usually defined for a set of lines, but we give a more general definition for a set of curves. We will use this in \Cref{sec:beyond_straight_lines} where we deal with non-straight-line embeddings.
\begin{definition}
Given a set $S$ of $x$-monotone curves, for each curve's endpoints add the maximal possible vertical line segment that touches the endpoint and does not intersect any of the curves in $S$. Taking the union of these line segments  with $S$, the trapezoid decomposition of $S$ is defined as the set of connected parts of the plane.
\end{definition}


A trapezoid decomposition of a set of line segments can be computed in time $O(n \log n)$ using for example the algorithm by \citet{Mulmuley1990}. The algorithm also produces a data structure that allows us to determine which connected part of the plane a point lies in in time $O(\log n)$. This is useful for us as it means that we get algorithms that not only perform few queries, but also have good time complexity.

\subsection{Sublinear-time models of computation}
In the sublinear time regime, we do not have enough time to even read the whole input. This means that it is of crucial importance what queries we may use to access the data. We use a model defined by the following queries: (1) return a vertex picked uniformly at random, (2) pick an edge uniformly at random, (3) given a vertex $v$ and $i \leq \deg(v)$, return the $i$-th neighbor of $v$.

While this model has been considered before, a more standard model allows only for queries $(1),(3)$. For us, this difference is not significant. The reason is that random edge queries can be approximately simulated in time $O(\frac{\alpha n}{m} \cdot \frac{\log ^{3} n}{\varepsilon})$ for $\alpha$ being the arboricity of the input graph using the algorithm of \citet{Eden2019}. For planar graphs, it holds that $\alpha \leq 3$. Conveniently, our algorithms works (after slightly changing some parameters) even when we may sample edges only approximately uniformly, say for $\epsilon = 1/2$. This allows us to use the algorithm from \cite{Eden2019} to implement the random edge queries.

In \Cref{sec:beyond_straight_lines}, we deal with the case of non-straight-line embeddings. In that section, we assume that any neighborhood query also returns the connecting edge. Given an edge, we assume that we may $(a)$ given $\theta[0,2\pi)$ find all points on the edge whose tangent has angle $\theta$ with the $x$-axis, and $(b)$ for a ray we may find the first intersection of the ray and the edge.

\subsection{Massively parallel computation} \label{sec:prelim_mpc}
Massively parallel computation (MPC) is a theoretical model introduced by \citet{Karloff2010} that attempts to capture the popular MapReduce framework for parallel computation. Since its inception, this model has received a lot of attention. We now briefly introduce the model. For a more in-depth treatment, see the lecture notes by \citet{Ghaffari2019}.

In MPC, the computation is performed by $\Mm$ machines in synchronous rounds. Each machine has $\Ss$ words of local memory. At the beginning of the computation, the input -- represented by $N$ words -- is evenly partitioned on $O(N/\Ss)$ machines. Each round looks as follows. At the beginning of a round, each machine is storing the messages that were sent to it in the previous round, in addition to the contents of the memory that it had at the end of the previous round. Each machine then may perform arbitrary computation. Then, each machine stores in a predefined place in its memory all the messages that it wants to send, each message having one recipient. At the end of the computation, each machine stores part of the input. For example, in the case of the minimum spanning tree, we want that the minimum spanning tree is stored on some given $O(n/\Ss)$ machines.
The complexity of MPC algorithms is mainly measured in terms of the round complexity. To simplify the exposition, we assume that one machine can store $O(\Ss)$ words (instead of just $\Ss$).

\subsubsection{Broadcast and convergence-cast trees}
If we have a message of size $k$, we cannot simply broadcast it to all other machines unless $\Ss \geq k \Mm$ as a machine has to store all its outgoing messages at the end of a round. Broadcast trees can be used to accomplish this in $O(1)$ rounds and $k \Mm^\epsilon$ space per machine for some $\epsilon>0$. Similarly, a machine cannot receive a message of size $k$ from all other machines. However, if we have an associative operation $\circ$ and messages $m_1, \cdots, m_{\Mm}$, one may compute $\bigcirc_{i=1}^\Mm m_i$ using the converge-cast trees using per-machine space $k \Mm^{\epsilon}$. See \cite{Ghaffari2019} for explanation.

\subsection{Notation}
We use $V(G),E(G)$ to denote the vertex and edge set of a graph $G$ respectively. In a weighted graph, we denote the weight of an edge $e$ by $w(e)$ and the length (the total weight) of a path $P$ by $w(P)$. We use $cc(G)$ to denote the set of connected components of a graph $G$. For $G$ being an \emph{embedded graph}, we use $cr(G)$ to denote the number of pairwise crossings between edges of the embedding. For $V' \subseteq V(G)$, we use $G[V']$ to denote the subgraph of $G$ induced by $V'$. For a graph $G$ and $V' \subseteq V(G)$, we define the boundary of $V'$, denoted $\partial_G(V')$, as the subset of $V'$ of vertices that have at least one neighbor outside of $V'$. We define $\partial_G^E(V')$ as the set of edges that have one endpoint in $V'$ and the other in $V\setminus V'$. When the graph $G$ is clear from the context, we drop the subscript. We denote the order (number of vertices) of a graph $G$ by $|G|$, and the size (number of edges) by $\|G\|$. We use $d_G(u,v)$ to denote the distance from $u$ to $v$ in the graph $G$. We may drop the subscript if $G$ is clear from the context. We use $D_G$ be the complete graph with $w(uv) = d_G(u,v)$.

\section{MPC algorithm for finding an \texorpdfstring{$r$}{r}-division of a planar graph}


We now give a parallel algorithm for computing $r$-pseudodivisions. This algorithm will be central to all our algorithms in the rest of this paper. The algorithm is based on a modification of the algorithm of \citet{chazelle2011online} for computing an $r$-pseudodivision of triangulations in sublinear time. We explain that algorithm in detail in \Cref{sec:sublinear_separators}, and now we only give a summary of the result that is sufficient for the use in this section.

\begin{theorem}[Summary of \Cref{thm:sublinear_separators}.] \label[theorem]{thm:summary_of_sublinear_separators}
There is an algorithm that, for a parameter $s$, when given $O(n \log (n) /s)$ uniformly chosen vertices and edges, returns a data structure of size $O(n \log (n) /s)$, that, with high probability, for some hybrid $r$-pseudodivision $\Rr$ with region boundary size of $O(\sqrt{sr})$ and $O(n \log (n)/r)$ regions, answers queries on $\Rr$. That is, given a vertex, it returns the set of regions that it lies in.
\end{theorem}

\begin{algorithm}
$s \leftarrow n \log n/\Ss$\\

Sample $\Theta(\frac{n \log n}{s})$ edges and vertices (for large enough constant in $\Theta$); send them all to machine 1\\
Order the sampled edges and vertices on machine 1 randomly\\
Use \Cref{thm:summary_of_sublinear_separators} on machine
with parameters
$r,s$%
; This gives us oracle $\Dd$ \label{line:get_an_oracle}\\
Distribute $\Dd$ to all machines\\
On each machine, calculate for each region the number of vertices from that region stored on the machine; aggregate these numbers on machine 1 \label{line:compute_regions_sizes}\\
Compute on machine 1 the total size of each region by adding together for each region the numbers computed in the previous step \label{line:agregate_region_sizes}\\
On machine 1, compute a mapping from the set of regions to the machines such that region $R \in \Dd$ is mapped to $\lceil|R|/\Ss\rceil$ consecutive machines \label{line:construct_mapping}\\
For each edge $uv$ such that $\Dd(u) \cap \Dd(v) = \emptyset$, add $v$ to $\Dd(u)$\\
Distribute to all machines the mapping from regions to machines; move the edges accordingly \label{line:shuffle_edges}\\

Verify that region sizes are $\leq r$, number of regions $O(n \log (n)/r)$, and boundaries of all regions have size $O(\sqrt{rn\log n/\Ss})$; restart the algorithm if not (note that the total boundary size can be computed as the difference between $n$ and the total size of regions) \label{line:check_result}\\
\caption{Compute a non-hybrid strong $r$-pseudodivision in MPC} \label{alg:MPC_meta_alg}
\end{algorithm}

\begin{lemma} \label{lem:mpc_meta_alg}
Let us have $r \geq n \log n / \Ss^{1-\epsilon}$ for $\epsilon>0$ being some constant. After execution of \Cref{alg:MPC_meta_alg}, there exists an $r$-pseudodivision of $G$ with $O(n \log (n)/r)$ regions, boundary of size per region $O(\sqrt{r n \log n /\Ss})$ such that the layout of edges in memory satisfies that
\begin{enumerate}
    \item each machine only stores edges of one region
    \item each region is stored on consecutive machines
    \item $O(n/\Ss)$ machines are used
    \item on the first machine, there is stored for each region the range of machines that it is stored on
\end{enumerate}
The algorithm uses $O(\Ss)$ space per machine, $O(n/\Ss)$ machines, and performs $O(1)$ rounds both expected and with high probability.
\end{lemma}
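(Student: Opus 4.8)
The plan is to verify each claimed property of \Cref{alg:MPC_meta_alg} by tracking the data movement line by line and invoking \Cref{thm:sublinear_separators} (together with \Cref{lem:main_algorithm_preprocess_r_div}) for correctness of the oracle, then handling the round-complexity bookkeeping with broadcast/converge-cast trees. First I would argue that the oracle $\Dd$ produced on \cref{line:get_an_oracle} is, with probability at least $1-O(\delta)=1-O(1/n)$, a hybrid $r$-pseudodivision oracle with boundary size $O(\sqrt{sr})$ per region and $O(n\log n/r)$ regions, since the MPC algorithm is simply running \Cref{alg:partition_plane} and \Cref{alg:main_algorithm_preprocess_r_div} on machine $1$ where the random edge/vertex queries are served from a pre-sampled uniformly-random batch (the remark after \Cref{lem:main_algorithm_preprocess_r_div} even tolerates approximately-uniform samples). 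Plugging $s=n\log n/\Ss$ gives boundary size $O(\sqrt{rn\log n/\Ss})$ per region, and converting the hybrid pseudodivision into a (non-hybrid) pseudodivision by the "for each boundary edge $uv$, add $v$ to $\Dd(u)$" step (exactly the linear-time reduction described after \Cref{def:hybrid_division}) only doubles boundary sizes, preserving all three bounds up to constants and keeping $|R|\le O(r)$; I'd absorb the $O(r)$ back into $r$ by rescaling.

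Next I would check feasibility of running everything on one machine and of the broadcast steps. The number of samples on \cref{line:get_an_oracle} is $\Theta(n\log n/s)=\Theta(\Ss)$, so they fit on machine $1$; the trapezoidal map of $\Theta(\Ss)$ segments, the sector graph, and the $r/s$-division of that $O(\Ss)$-vertex planar graph (via \Cref{fact...}, the linear-time Klein–Mozes–Sommer algorithm) all fit in $O(\Ss)$ space; and the resulting oracle $\Dd$ is a point-location structure on $O(\Ss)$ faces plus a region-label per face, hence $O(\Ss)$ words, so it can be broadcast to all $O(n/\Ss)$ machines in $O(1)$ rounds using broadcast trees with $O(\Ss\cdot(n/\Ss)^\epsilon)\subseteq O(\Ss^{1+\epsilon'})$ — here I need to be slightly careful that this stays $O(\Ss)$; the paper's convention that "one machine can store $O(\Ss)$ words" and the fact that broadcast trees use $k\Mm^\epsilon$ space means we want $\Ss\cdot(n/\Ss)^\epsilon=O(\Ss)$, which is fine. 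For \cref{line:compute_regions_sizes,line:agregate_region_sizes}, each machine locally queries $\Dd$ on its $O(\Ss)$ stored vertices (no queries to $G$, each in $O(\log(n/s))$ time) and produces a vector of partial counts; these are summed over machines by a converge-cast in $O(1)$ rounds, giving machine $1$ all $O(n\log n/r)$ region sizes; from these, machine $1$ computes the prefix-sum-based mapping from regions to $\lceil|R|/\Ss\rceil$ consecutive machines (\cref{line:construct_mapping}), which is then broadcast (\cref{line:shuffle_edges}), after which every machine can relabel the destination of each of its edges locally and route them, a standard $O(1)$-round shuffle since total data is $O(n)$ and — crucially — each region's total size (including the at most doubled boundary) is $O(r)=O(\Ss^{1-\epsilon}\cdot\ldots)$ but more to the point $\sum_R |R|=O(n)$ so $O(n/\Ss)$ machines suffice. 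This establishes properties 1–4.

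Finally, for the round complexity I would note that every step is either a local computation or one of: a constant number of broadcasts, a constant number of converge-casts, and one all-to-all routing of $O(n)$ total data, each $O(1)$ rounds. The verification on \cref{line:check_result} is again local-plus-converge-cast: total boundary size $= n - \sum_R|R|$ is known on machine $1$, region count and max region size are aggregated similarly, and by \Cref{thm:sublinear_separators} (and the remark's tolerance to sampling error) the check passes with probability $\ge 1-O(1/n)$; on failure we restart. Thus the expected number of iterations is $1+O(1/n)=O(1)$, giving $O(1)$ rounds in expectation, and since each attempt succeeds independently with probability $\ge 1-O(1/n)$, with high probability we finish after $O(1)$ attempts as well. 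The condition $r\ge n\log n/\Ss^{1-\epsilon}$ is used to guarantee $s=n\log n/\Ss$ satisfies $s\le r/\Ss^{\epsil} \cdot$ (something) so that $r'=r/s=\Ss^{1-\epsilon}\cdot\ldots\ge\Omega(1)$, i.e. the planar $r'$-division of the sector graph is non-degenerate and the standard bounds apply.

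I expect the main obstacle to be the careful accounting of space in the broadcast/converge-cast steps and in the final routing: one must confirm that the oracle and the per-region count vectors are genuinely of size $O(\Ss)$ (not merely $O(\Ss\,\mathrm{polylog})$ in a way that breaks the broadcast-tree space budget), and that converting the hybrid pseudodivision to a non-hybrid one by adding cross-boundary endpoints does not inflate any region beyond $O(r)$ nor any boundary beyond $O(\sqrt{rn\log n/\Ss})$ — both follow because each boundary vertex/edge is counted $O(1)$ times, but the constants need to be tracked to land exactly on the stated bounds after rescaling $r$.
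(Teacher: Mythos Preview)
Your approach is essentially the paper's: invoke \Cref{thm:sublinear_separators} for the oracle on machine~1, plug in $s=n\log n/\Ss$ to get the boundary bound, broadcast $\Dd$, converge-cast region sizes, compute the prefix-sum mapping, shuffle, and restart on failure with probability $O(1/n)$. The structure is correct.

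However, your account of the role of the hypothesis $r\ge n\log n/\Ss^{1-\epsilon}$ is off. You use it only to conclude $r'=r/s\ge\Omega(1)$, but that is not the point. The paper uses the hypothesis to bound the \emph{number of regions}: $O(n\log n/r)\le O(\Ss^{1-\epsilon})$. This is what makes the per-region count vector on \cref{line:compute_regions_sizes} small enough that the converge-cast on \cref{line:agregate_region_sizes} fits in $O(\Ss)$ space after multiplying by the $\Mm^{\epsilon'}$ overhead of the tree (see \Cref{sec:prelim_mpc}). Without the $\epsilon$ slack, the converge-cast message size would be $\Theta(\Ss)$ and the tree overhead would blow the budget; this is exactly why the lemma assumes $r\ge n\log n/\Ss^{1-\epsilon}$ rather than $r\ge n\log n/\Ss$.

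Relatedly, your space claim for broadcasting $\Dd$ is wrong as written: you assert $\Ss\cdot(n/\Ss)^{\epsilon}=O(\Ss)$, ``which is fine,'' but for $\Ss=o(n)$ and constant $\epsilon>0$ this is false. The paper does not justify the broadcast of $\Dd$ via the generic $k\Mm^{\epsilon}$ bound either (it simply asserts line~5 uses $O(\Ss)$ space), so you should not pretend the arithmetic works when it does not; either appeal to the paper's convention that machines hold $O(\Ss)$ words and treat the distribution of $\Dd$ as the paper does, or note this as a point that requires a more careful primitive. In any case, do not write an equality that is false.
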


\begin{proof}
We first focus on correctness of the output. We then argue that the algorithm indeed uses $O(\Ss)$ space per machine and expected $O(1)$ rounds. When the algorithm finished, it means that it has not been restarted on \cref{line:check_result}. Therefore, $\Dd$ represents an $r$-pseudodivision with $O(n \log (n)/r)$ regions and with per-region boundary sizes $O(\sqrt{rn \log n/\Ss})$. The conditions $(1),(2)$ are then satisfied by the mapping computed on \cref{line:construct_mapping}, according to which the edges are mapped to the machines (as we move them accordingly on \cref{line:shuffle_edges}).
Moreover, each region $R \in \Dd$ is stored on $\lceil |R| / \Ss \rceil$ machines, meaning that in total, $O(n/\Ss)$ machines are used, thus satisfying point $(3)$. Moreover, the mapping required in point $(4)$ is computed and stored on machine 1 on \cref{line:construct_mapping}.

By \Cref{thm:sublinear_separators}, $\Dd$ is consistent with an $r$-pseudodivision with $O(n \log(n)/r)$ regions and boundary size per region $\sqrt{s r} = O(\sqrt{r n \log n / \Ss})$ (the equality holds by our choice of $s$) with probability at least $1-O(1/n)$.
We restart the algorithm only when this is not the case, which happens with probability $1-O(\delta) = 1-O(1/n)$; we thus make $O(1)$ restarts in expectation and with high probability.  All lines except from line \cref{line:agregate_region_sizes} clearly take $O(1)$ rounds. \Cref{line:agregate_region_sizes} can be implemented in $O(1)$ rounds using convergence-cast trees (see \Cref{sec:prelim_mpc}).
Thus, each line can be implemented in $O(1)$ rounds, leading to total of $O(1)$ rounds in expectation and with high probability.

It remains to argue that only $O(\Ss)$ space per machine is used. By the assumption $s = n\log n/\Ss$ and $\delta= 1/n$, it holds that the number of sampled edges on line 2 is $O(\frac{n (\log n + \log \delta^{-1})}{s}) = O(\Ss)$. Lines 3,4,5 then do not use asymptotically more space.  Specifically, the space complexity of $\Dd$ is $O(\Ss)$.
We are assuming $r \geq n \log n / \Ss^{1-\epsilon}$. The number of regions is $O(\frac{n (\log n + \log \delta^{-1})}{r}) \leq O(\Ss^{1-\epsilon})$. \Cref{line:compute_regions_sizes} thus uses $O(\Ss^{1-\epsilon})$ space per machine. \Cref{line:agregate_region_sizes} can be implemented in $O(\Ss)$ space per machine using converge-cast trees. Note that the convergence-cast trees need extra $\Mm^{\epsilon'}$ space in addition to the size of one machine for some $\epsilon' > 0$; this is the reason we need $r \geq n \log n/\Ss^{1-\epsilon}$ for some $\epsilon>0$ and not just $r \geq n \log n/\Ss$. The remaining lines can also be implemented in $O(\Ss^{1-\epsilon})$ space per machine. 
\end{proof}

\section{MPC algorithms on embedded planar graphs}\label{sec:mpc_algorithms}
We now show how an $r$-pseudodivision can be used to give MPC algorithms for several different problems. As a warm-up, we start with counting connected components. We then show algorithms for bipartition and the minimum spanning tree problem. We then give an algorithm for a problem that generalizes the approximate shortest path problem. We use this to also approximate the diameter/radius of a graph.

The solutions to the mentioned problems follow a common theme. 
Our approach works as follows. We show that when we have a graph $G$, we may replace an induced subgraph of $G$ by some other smaller graph that ``compresses" the induced subgraph, and this does not change the structure of the solutions. This is inspired by and similar to the approach of \citet{Eppstein1996}. We compute these ``compressing graphs" as follows. We find an $|G|^{1-\epsilon}$-pseudodivision $\Rr$ for some $\epsilon>0$, recursively compress each region $R$ into a small graph $H_R$. We then use these graphs $H_R$ to find a graph that compresses the whole graph $G$. We then use this to find a solution to the problem on hand. Specifically, we find an $|G|^{1-\epsilon}$-pseudodivision, and for each region, we compress the rest of the graph except for the region, and then use it to recursively find a solution for that region, passing to the recursive call a compressed version of the rest of the graph.

This does not work in a straightforward way for all above-mentioned problems. For example, for computing (approximately) shortest paths, we need to solve a more general problem in order to make the recursion feasible.

Note that since each machine is only storing vertices of a single region (using \Cref{lem:mpc_meta_alg}), it is straightforward to separately execute the algorithm on each group of machines that store one region. After sending all the graphs $H_R$ for $R \in \Rr$ to one machine, we use this single machine to perform the rest of the computation.

Throughout this section, we use $|G|$ to denote the number of vertices and $\|G\|$ to denote the number of edges of $G$ in a given recursive call. We use $n,m$ to denote the respective values in the original call. In this section, we need to compute $r$-pseudodivisions of various graphs. This can be done using \Cref{alg:MPC_meta_alg}; we do not make this explicit in the rest of this section for the sake of brevity.

The proofs in this section are structured such that we always state a theorem, then we state a stronger claim that implies the theorem. Finally, we give a proof of such claim.

\subsection{Counting connected components}
\begin{theorem} \label[theorem]{thm:MPC_connected_components}
Assume that we are given a graph $G = (V,E)$. For any fixed $\epsilon > 0$, there is an algorithm that returns the number of connected components of $G$; it performs $O(1)$ rounds both in expectation and with high probability, and uses $O(\Ss)$ space per machine and $O(\Mm)$ machines for $\Ss = n^{2/3+\epsilon}$, $\Mm = n^{1/3-\epsilon}$. \end{theorem}

In fact, we prove a stronger claim. We need this in order to be able to give a recursive algorithm. We may recover this theorem by setting $\nabla(G) = \emptyset$ and using $\ell$ as the answer. Note that $O(0) = 0$.

\begin{claim}
Let us have a graph $G$ and a set of marked vertices $\nabla(G)$ with $|\nabla(G)| = k \leq O(\Ss)$. Then, for any fixed $\epsilon > 0$, there is an algorithm that returns a graph $H$ with $V(H) \supseteq \nabla(G)$ on $O(k)$ vertices and a number $\ell$ such that $cc(G) = cc(H) + \ell$ and two vertices in $\nabla(G)$ are connected in $H$ iff they are connected in $G$; it performs $O(1)$ rounds both in expectation and with high probability, and uses $O(\Ss)$ space per machine and $O(\Mm)$ machines for $\Ss = n^{2/3+\epsilon}$, $\Mm = n^{1/3-\epsilon}$.
\end{claim}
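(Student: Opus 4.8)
The plan is to prove the claim by strong induction on $|G|$, using the recursive structure sketched in the introduction. First I would set up the base case: once $|G| \leq c\Ss$ for a suitable constant $c$, the whole graph fits on one machine (recall $\Ss = n^{2/3+\epsilon}$ and the graph is planar so $\|G\| \leq 3|G|$), and we simply run a sequential algorithm — compute $cc(G)$, contract each connected component containing a marked vertex down to a single representative vertex (or, more carefully, produce $H := G[\{$marked vertices$\}]$ augmented with one auxiliary edge for each pair of marked vertices in the same component, or just store adjacency of marked vertices), discard the components disjoint from $\nabla(G)$ after counting them into $\ell$, and return $(H,\ell)$. Correctness here is immediate: $cc(G) = (\text{number of components meeting } \nabla(G)) + \ell$, and by construction $cc(H)$ equals the number of components of $G$ meeting $\nabla(G)$, so $cc(G) = cc(H) + \ell$; two marked vertices are connected in $H$ iff in $G$ by construction. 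Also $|V(H)| = |\nabla(G)| = k = O(k)$.

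For the recursive case, I would invoke \Cref{lem:mpc_meta_alg} with $r = |G|^{1-\epsilon'}$ for an appropriate $\epsilon'$ (chosen so that the space bound $\Ss \geq n^{2/3+\epsilon}$ is respected throughout, taking into account that $|\partial(\Rr)| = O(\sqrt{r|G|\log|G|/\Ss})$ and the total boundary plus $k$ must fit on one machine) to obtain an $r$-pseudodivision $\Rr$ laid out so that each region sits on its own block of consecutive machines. For each region $R$, set $\nabla(R) := (\nabla(G) \cap R) \cup \partial(R)$; since $|\partial(R)| = O(\sqrt{r|G|\log|G|/\Ss}) = o(|G|^{1/3})$ and the regions are essentially disjoint except on boundaries, one checks $\sum_R |\nabla(R)| \leq |\nabla(G)| + O(|\partial(\Rr)|) = O(\Ss)$, and each individual $|\nabla(R)| \leq O(\Ss)$, so the inductive hypothesis applies to each $(R,\nabla(R))$ — run these recursive calls in parallel on the disjoint machine-blocks, getting $(H_R,\ell_R)$ with $cc(R) = cc(H_R) + \ell_R$, $|V(H_R)| = O(|\nabla(R)|)$, and marked-vertex connectivity preserved. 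Let $H' := \bigcup_R H_R$ (identifying shared boundary vertices). Then I claim $\sum_R \ell_R$ counts exactly those components of $G$ that meet no region-boundary and no original marked vertex — i.e. components entirely interior to a single region and disjoint from $\nabla(G)$ — because every other component of $G$ touches $\partial(\Rr) \cup \nabla(G) \subseteq V(H')$ and two vertices of $\partial(\Rr)\cup\nabla(G)$ are connected in $H'$ iff in $G$ (a path in $G$ between them decomposes into segments each internal to a region and thus preserved in the corresponding $H_R$; conversely edges of $H'$ only connect things already connected in $G$). Hence $cc(G) = cc(H') + \sum_R \ell_R$.

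Now $H'$ has $O(\sum_R |\nabla(R)|) = O(|\partial(\Rr)| + k) = O(\Ss)$ vertices, so (since it is stored on $O(1)$ machines after a gather, or can be moved to one machine) we can run the sequential base-case procedure on $(H', \nabla(G))$ to get $(H, \ell'')$ with $cc(H') = cc(H) + \ell''$, $V(H) \supseteq \nabla(G)$, $|V(H)| = O(k)$, and marked-vertex connectivity between vertices of $\nabla(G)$ preserved from $H'$ — hence from $G$. Output $(H, \ell)$ with $\ell := \ell'' + \sum_R \ell_R$; then $cc(G) = cc(H') + \sum_R \ell_R = cc(H) + \ell$ as required. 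For the round and space analysis: each recursion level reduces $|G|$ to $|G|^{1-\epsilon'}$, so after $O(1)$ levels we reach the base case, the depth is $O(1)$ in expectation and with high probability (using the high-probability round bound of \Cref{lem:mpc_meta_alg} at each node and a union bound over the $O(\mathrm{poly})$ many recursive subproblems), and all per-level work (the pseudodivision, the two aggregation/gather steps, the sequential merge) uses $O(\Ss)$ space per machine and $O(\Mm)$ machines. The main obstacle I anticipate is bookkeeping the space/parameter constraints consistently down the recursion — ensuring that $r = |G|^{1-\epsilon'}$ stays in the regime $r \geq |G|\log|G|/\Ss^{1-\epsilon''}$ demanded by \Cref{lem:mpc_meta_alg} while simultaneously keeping $|\partial(\Rr)| + k = O(\Ss)$ at every level (this is exactly the $\Omega(n^{2/3})$-space discussion from the introduction) — and carefully justifying that the marked sets $\nabla(R)$ never blow past $O(\Ss)$; the connectivity-preservation argument itself is a routine path-decomposition.
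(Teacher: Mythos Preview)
Your proposal is correct and follows essentially the same recursive scheme as the paper: compute an $r$-pseudodivision with $r=|G|^{1-\epsilon}$, recurse on each region with $\nabla(R)=\partial(R)\cup(\nabla(G)\cap R)$, union the returned graphs into $G'$, and compress $G'$ against $\nabla(G)$ on one machine while summing the $\ell_R$'s with the count of components of $G'$ missing $\nabla(G)$. One small point worth tightening: in the base case the paper builds $H$ as a union of \emph{stars} (one per component meeting $\nabla(G)$, with the marked vertices as leaves), which guarantees $O(k)$ edges as well as $O(k)$ vertices; your ``add an edge for each pair of marked vertices in the same component'' variant can produce $\Theta(k^2)$ edges, which would jeopardize the $O(\Ss)$ space bound when you later gather $\bigcup_R H_R$ onto one machine.
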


\begin{proof}
We give a recursive algorithm. The base case uses a straightforward sequential algorithm, while the recursive case is based on \Cref{alg:MPC_meta_alg}.

\paragraph{Base case.} We start by the base case when $n \leq O(\Ss)$. In this case, the whole input fits onto one machine. For each maximal subset $S \subseteq\nabla(G)$ of vertices that are connected in $G$, we add to $H$ a star whose ray vertices are exactly the vertices $S$. It is easy to see that two vertices from $\nabla(G)$ are connected in $H$ iff they are connected in $G$. We set $\ell$ to be the number of connected components that do not have any vertex marked.
In the rest of the proof, we focus on the recursive case (when $n > \Ss$).

\paragraph{Recursive case.} We compute an $r$-pseudodivision $\Rr$ of $G$ for $r = |G|^{1-\epsilon}$. Recursively, we compute for each region $R \in \Rr$ a graph $H_R$ with $\nabla(R) \subseteq V(H)$ for $\nabla(R) = \partial(R) \cup (\nabla(G)\cap R)$ and $\ell_R$ such that two vertices in $\nabla(R)$ are connected in $H_R$ iff they are connected in $G$ and $\ell_R$ is the number of connected components that have empty intersection with $\nabla(R)$. In order to perform the recursive call, we need that $\nabla(R) \leq O(\Ss)$, which holds as we will argue below.

Let $G' = \bigcup_{R \in \Rr} H_R$. We now compute the graph $H$ as follows.
We compute the connected components of $G'$ and for each maximal set $S$ of vertices of $\nabla(G)$ that are connected in $G'$, we add to $H$ a star with its ray vertices being $S$. Let $\ell_{G'}$ be the number of connected components in $G'$ that have empty intersection with $\nabla(G)$. We set $\ell = \ell_{G'} + \sum_{R \in \Rr} \ell_R$.

We now prove correctness. We have from each recursive call that two vertices in $\nabla(R)$ are connected in $H_R$ iff they are connected in $G[R]$. Consider two vertices $u,v \in \nabla(G)$. We argue that they are connected in $H$ if and only if they are connected in $G$.

Consider a $uv$-path $P_H$ in $H$ and let $v_i$ the $i$-th intersection of $P_H$ with $\partial(\Rr) \cup \nabla(G)$. Consider the section of $P_H$ connecting $v_i$ and $v_{i+1}$ for some $i$. This section is contained in $H_{R_i}$ for some $R_i \in \Rr$. It holds $v_i,v_{i+1} \in \nabla(R_i)$, and we thus have from the recursive call that $v_i$ and $v_{i+1}$ is also connected in $G[R_i]$; denote one such connecting path by $P_{H,i}$. Consider the union $\bigcup_i P_{H,i}$. This is clearly a connected graph that contains $u$ and $v$, and these two vertices are thus connected in $G$.

Similarly, consider a $uv$-path $P_G$ in $G$ and let $v_i$ be the $i$-th intersection of $P_G$ with $\partial(\Rr) \cup \nabla(G)$. Consider the section of $P_G$ connecting $v_i$ and $v_{i+1}$ for some $i$. This path section is contained in one region, which we call $R_i$. The vertices $v_i$ and $v_{i+1}$ are thus connected in $G[R_i]$. We thus have from the recursive call that $v_i$ and $v_{i+1}$ are connected in $H_{R_i}$. Denote one such connecting path by $P_{G,i}$. Similarly to the above case, $\bigcup_i P_{G,i}$ is a a connected graph that contains $u$ and $v$, and these two vertices are thus connected in $H$.

The sum $\sum_{R \in \Rr} \ell_R$ counts exactly the connected components of $G$ that have empty intersection with $\bigcup_{R \in \Rr} \nabla(R) = V(G')$ (and thus also with $\nabla(G)$). $\ell_{G'}$ counts the connected components that have non-empty intersection with $G'$ but have empty intersection with $\nabla(G)$. Therefore, $\ell_{G'} + \sum_{R \in \Rr} \ell_R$ counts connected components that have non-empty intersection with $\nabla(G)$, as we desired.

We now argue that the algorithm uses $O(\Ss)$ space per machine, that it performs in expectation $O(1)$ rounds and that in all recursive calls and for all regions $R$, it holds $|\nabla(R)| \leq O(\Ss)$.

In each successive level of recursion, we execute the algorithm on a graph of size $O(|G|^{1-\epsilon})$. We stop when $|G| \leq O(\Ss)$.  Since $\Ss=\Omega(n^{2/3+\epsilon})$ this implies that for any fixed $\epsilon>0$, the depth of recursion is constant.
Since $r = |G|^{1-\epsilon}$, the boundary size per region is $O(\sqrt{r n \log n/\Ss}) \leq o(n^{1/3})$. This means that in each level of recursion, the size of $\nabla(G)$ increases by at most $o(n^{1/3})$.  Since there are $O(1)$ levels of recursion, it holds in all recursive calls that $|\nabla(G)| \leq o(n^{1/3}) \leq O(\Ss)$.

In our algorithm, excluding the call of \Cref{alg:MPC_meta_alg}, we perform $O(1)$ MPC rounds. \Cref{alg:MPC_meta_alg} takes $O(1)$ rounds both expected and with high probability. There are $O(1)$ levels of recursions, resulting in round complexity of $O(1)$ in expectation and with high probability.

It holds $|H| = O(|\nabla(G)|) \leq O(|G'|)$. To prove that the algorithm uses $O(\Ss)$ space per machine, it is thus sufficient to prove that $|G'| \leq O(\Ss)$. It holds $|G'| \leq |\partial(G)| + \sum_{R \in \Rr} |\partial(R)|$. As we have already argued, $|\partial(G)| \leq O(\Ss)$. We have $|\partial(R)| \leq O(\sqrt{rn \log n/\Ss})$ by \Cref{lem:mpc_meta_alg}. The total size of the boundary is thus $O(\sqrt{rn \log n/\Ss}) \cdot O(\frac{n \log n}{r}) \leq O(\Ss)$, as we wanted to show.
\end{proof}

\subsection{Bipartition}
Given a graph $G$ and a set $S \subseteq V(G)$, we let the set of bipartitions $B_G(S)$ be the set of subsets $S'$ of $S$ such that there is a bipartition of $G$ such that all vertices in $S'$ are in one part and all vertices $S \setminus S'$ in the other. Note that if $G$ is not bipartite, $B_G(S)=\emptyset$ for all $S\subseteq V(G)$.  We will need the following lemma about $B_G(S)$. Throughout this section, when we talk of a $2$-coloring, we call the colors white and black. Before showing the algorithm, we will need several lemmas.

The following lemma states, intuitively speaking, that we may replace any connected component $C$ by a simpler graph without changing the set of bipartitions $B_G(S)$.
\begin{lemma} \label{lem:replace_component_keeps_bipartitions}
Let us have a graph $G$  and a set $S \subseteq V(G)$, and consider one connected component $C$ of $G$. Let us have a $2$-coloring of $C$. Let $G'$ be a graph obtained from $G$ by replacing $C$ with a star whose ray vertices are $C \cap S$ and subdividing edges whose ray vertices are black in the $2$-coloring.
Then $B_G(S) = B_{G'}(S)$.
\end{lemma}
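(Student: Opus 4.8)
The plan is to show the two inclusions $B_G(S) \subseteq B_{G'}(S)$ and $B_{G'}(S) \subseteq B_G(S)$ by translating $2$-colorings of $G$ into $2$-colorings of $G'$ and vice versa, using the fixed $2$-coloring $\chi$ of $C$ as the ``dictionary''. The key structural observation is that a proper $2$-coloring of a connected graph is unique up to swapping the two colors; so on the component $C$, a proper $2$-coloring either equals $\chi$ or equals $\bar\chi$ (the color-flip of $\chi$). Since the star replacing $C$ is itself connected and bipartite, the same rigidity applies to it: the center has one color, all ray vertices the other. The edge subdivisions on rays that were black under $\chi$ are exactly the device that lets a ray vertex take either color relative to the center — a path of length $2$ forces its endpoints to agree in color, a path of length $1$ forces them to disagree — so the net effect is that in $G'$, a ray vertex $v \in C \cap S$ is forced to have the same color as it has in $\chi$ relative to whatever color the center gets. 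This matches the constraint that $C$ imposes on $S \cap C$ in $G$.

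First I would set up notation: let $S_C = S \cap C$, and for any $S' \subseteq S$ write $S'_C = S' \cap C$ and $S'_{\text{out}} = S' \setminus C$. Observe that since the modification is entirely local to $C$, and $G$ and $G'$ agree outside $C$ (and agree on which vertices of $S$ lie outside $C$), membership of $S'$ in $B_G(S)$ versus $B_{G'}(S)$ decouples into: (a) a condition on $S'_{\text{out}}$ coming from the rest of the graph, which is identical for $G$ and $G'$, and (b) a condition on $S'_C$ coming from $C$ (resp. its replacement). So it suffices to show: $S'_C \in B_C(S_C)$ if and only if $S'_C$ is realizable as a bipartition-restriction of the star-plus-subdivisions gadget. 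For direction (a)$\Rightarrow$(b): given a proper $2$-coloring of $C$ with $S'_C$ white, by rigidity it is $\chi$ or $\bar\chi$; WLOG $\chi$ (else flip). Then $S'_C = \{v \in S_C : \chi(v) = \text{white}\}$, i.e. $S'_C$ is exactly the set of rays that are white under $\chi$. Color the gadget: center black, each white ray white (length-$1$ edge, colors differ — consistent), each black ray: it sits at distance $2$ from center through the subdivision vertex, color the subdivision vertex white and the ray black — consistent, and the black ray is not in $S'_C$. This gives a proper $2$-coloring of the gadget realizing $S'_C$, and it glues with the unchanged coloring of $G \setminus C$ because the gadget shares with the rest of the graph only the vertices of $S_C$ (wait — need to double-check the gadget is vertex-disjoint from the rest of $G$ except at $C \cap S$; if $C$ had edges to the rest of $G$ then $C$ would not be a connected component, so indeed $C$, hence the gadget, meets the rest of $G$ only within... actually $C$ being a connected component means $C$ has no edges leaving it at all, so the coloring of $C$ and of $G\setminus C$ are independent — this makes the gluing trivial). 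The reverse direction (b)$\Rightarrow$(a) runs symmetrically: a proper $2$-coloring of the gadget has center monochromatic and rays the opposite color except the black-$\chi$ rays which, forced through a length-$2$ path, match the center; reading off which rays of $S_C$ are white recovers either $\chi$ or $\bar\chi$ restricted to $S_C$, which extends to a proper $2$-coloring of $C$.

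The main obstacle I expect is getting the subdivision bookkeeping exactly right and arguing the rigidity carefully when $C \cap S$ is small or empty — e.g. if $C \cap S = \emptyset$ the star is a single center vertex with no rays, and one must check the degenerate gadget still imposes no constraint, matching that $C$ imposes no constraint on $\emptyset$; and if $C$ is a single vertex or bipartite-but-with-isolated-ray-vertices the "subdivide edges whose ray vertices are black" phrasing needs the reading that we subdivide the star-edge from center to a black ray. A secondary subtlety: the lemma says "replacing $C$ with a star" — I should confirm the center of the star is a fresh vertex not in $S$, so it never constrains which subsets of $S$ are realizable; this is implicit and I would state it. Once these edge cases are dispatched, the core argument is just the uniqueness-up-to-swap of proper $2$-colorings on connected graphs applied twice.
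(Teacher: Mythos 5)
Your proposal is correct and takes essentially the same approach as the paper: both decouple $B_G(S)$ into a factor from $C$ and a factor from $G\setminus C$ (which are independent since $C$ is a connected component), reduce to comparing $B_C(S\cap C)$ with the bipartition set of the replacing gadget, and invoke the rigidity (uniqueness up to color swap) of proper $2$-colorings on connected bipartite graphs. The paper leaves the gadget verification as ``by construction'' while you spell out the length-$1$ versus length-$2$ parity argument, but the structure of the proof is the same.
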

\begin{proof}
  First observe that since $C$ is a connected component of $G$, $B_G(S)$ can be expressed in terms of  $B_{G\setminus C}(S\setminus C)$ and $B_C(S\cap C)$ as
  \[
    B_G(S) = \{ S_1\cup S_2 
    \mid S_1\in B_{G\setminus C}(S\setminus C)
    \wedge S_2\in B_C(S\cap C) \}    
  \]
  Similarly, let $C'=G'\setminus(G\setminus C)$ be the connected component replacing $C$ in $G'$ then
  \[
    B_{G'}(S) = \{ S_1\cup S_2 
    \mid S_1\in B_{G'\setminus C'}(S\setminus C')
    \wedge S_2\in B_{C'}(S\cap C') \}  
  \]
  Now note that $G'\setminus C'=G\setminus C$ and $S\setminus C'=S\setminus C$, so trivially $B_{G'\setminus C'}(S\setminus C')=B_{G\setminus C}(S\setminus C)$.  Also $S\cap C'=S\cap C$ by construction, so we can name that set $S'$ and what remains to be proven is $B_{C'}(S')=B_C(S')$. Since $C$ and $C'$ are both connected and bipartite, each have a unique $2$-coloring (up to naming of colors).
  Let $W\subseteq S'$ be the set of vertices that are white in the $2$-coloring of $C$. Then $B_C(S')=\{W,S'\setminus W\}$, and by construction we also have $B_{C'}(S')=\{W,S'\setminus W\}$.
  \ignore{Is this proof OK?}
\end{proof}

\begin{lemma} \label{lem:replace_subgraph_extending_coloring}
Let us have a graph $G$ and a subset of its vertices $S \subseteq V(G)$
. Let $G'$ be a graph resulting by replacing $G[S]$ by a graph $H$ with $B_{G[S]}(\partial(S)) = B_H(\partial(H))$. Then for any $2$-coloring $K$ of $G$, it holds that $K |_{(G \setminus S) \cup \partial(S)}$ can be extended to a $2$-coloring of $G'$, and for any $2$-coloring $K'$ of $G'$, it holds that $K' |_{(G \setminus S) \cup \partial(S)}$ can be extended to a $2$-coloring of $G$.
\end{lemma}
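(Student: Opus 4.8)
The plan is to reduce \Cref{lem:replace_subgraph_extending_coloring} to \Cref{lem:replace_component_keeps_bipartitions} (or, more directly, to the fact about $B_{G[S]}$ versus $B_H$ that is hypothesized), by observing that both directions are symmetric: it suffices to prove that any $2$-coloring $K$ of $G$ restricts to a coloring of $(G\setminus S)\cup\partial(S)$ that extends to a $2$-coloring of $G'$, since $G$ and $G'$ play interchangeable roles (the hypothesis $B_{G[S]}(\partial(S)) = B_H(\partial(H))$ is symmetric, and $\partial(S) = \partial(H)$ since replacing $G[S]$ by $H$ keeps the same attachment vertices). So I would prove one direction and then invoke symmetry.

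For the forward direction: let $K$ be a $2$-coloring of $G$, and let $W = \{v \in \partial(S) : K(v) = \text{white}\}$. Since $K$ properly $2$-colors $G$, in particular it properly $2$-colors $G[S]$, so the partition $(W, \partial(S)\setminus W)$ of $\partial(S)$ is witnessed by a bipartition of $G[S]$; hence $W \in B_{G[S]}(\partial(S))$. By hypothesis $W \in B_H(\partial(H))$ as well, so there is a $2$-coloring $K_H$ of $H$ that agrees with $K$ on $\partial(H) = \partial(S)$ (after a global swap of colors within $H$ if needed, which is harmless since $\partial(S)$ is where the two must match, and membership in $B_H$ guarantees a coloring realizing exactly the split $(W,\partial(S)\setminus W)$). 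Now define a coloring $K'$ of $G'$ by using $K$ on $V(G')\setminus V(H) = V(G)\setminus S$ (together with its values on $\partial(S)$, which lies in the overlap) and using $K_H$ on $V(H)$. These two partial colorings agree on $\partial(S)$, so $K'$ is well-defined; it is proper on edges inside $G\setminus S$ because $K$ is, proper on edges inside $H$ because $K_H$ is, and proper on edges between $\partial(S)$ and the rest of $G\setminus S$ because those edges are already present in $G$ and $K$ is proper there. Thus $K'$ is a $2$-coloring of $G'$ extending $K|_{(G\setminus S)\cup\partial(S)}$.

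The main subtlety --- and the step I would be most careful about --- is the color-naming ambiguity: membership $W \in B_H(\partial(H))$ only says \emph{some} bipartition of $H$ puts $W$ on one side, but "one side" could be either color, and if $H$ is disconnected the different components could be flipped independently. The clean way around this is exactly as in the proof of \Cref{lem:replace_component_keeps_bipartitions}: $B_H(\partial(H))$ is defined as a set of subsets of $\partial(H)$ where $S'\in B_H(\partial(H))$ iff there is a bipartition of $H$ with $S'$ on one part; so $W \in B_H(\partial(H))$ directly yields a $2$-coloring of $H$ with $W$ white and $\partial(H)\setminus W$ black, which is precisely what agrees with $K$ on $\partial(S)$. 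No per-component flipping is needed because the bipartition is chosen globally to realize the split $(W, \partial(H)\setminus W)$. One should also note the edge case where $S$ or $\partial(S)$ is empty, or $G$ is not bipartite (then $G'$ need not be either, but the statement is vacuous or trivial as there is no $2$-coloring to extend in the relevant direction --- though one must check that if $G$ is $2$-colorable then so is $G'$, which the argument above gives, and conversely by symmetry). Finally, the reverse direction follows by applying the forward direction with the roles of $G$ and $G'$ (equivalently $G[S]$ and $H$) swapped, using that $B_H(\partial(H)) = B_{G[S]}(\partial(S))$ and $\partial(H) = \partial(S)$.
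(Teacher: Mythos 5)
Your proof is correct and matches the paper's argument: both extract the set $W$ of white boundary vertices, use $W\in B_{G[S]}(\partial(S))=B_H(\partial(H))$ to obtain a $2$-coloring of $H$ agreeing with $K$ on $\partial(S)$, glue the two colorings, and handle the converse symmetrically. The paper writes out both directions explicitly rather than invoking symmetry, and adds a (not strictly necessary) WLOG reduction about components disjoint from $\partial(S)$, but these are cosmetic differences only.
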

\begin{proof}
Observe that $B_{G[S]}(\partial(S))\neq\emptyset$ if and only if $G[S]$ is bipartite, and that $B_H(\partial(H))\neq\emptyset$ if and only if $H$ is bipartite. Thus $B_{G[S]}(\partial(S))=B_H(\partial(H))$ implies that $G[S]$ is bipartite if and only if $H$ is bipartite.  If $G[S]$ and $H$ are not bipartite, $G$ and $G'$ are not bipartite and there is nothing to prove, so suppose they are bipartite.  If $G[S]$ or $H$ has a connected component that does not intersect $\partial(S)=\partial(H)$ then this component is colored independently of the rest of the graph so we may assume that each component of $G[S]$ and $H$ intersects $\partial(S)=\partial(H)$.

Then any $2$-coloring of $K$ of $G$ includes  $2$-colorings of $G[S]$ and $(G\setminus S)\cup\partial(S)$ that agree on $\partial(S)$. Let $W\subseteq\partial(S)$ be the white nodes in this coloring. Then $W\in B_{G[S]}(\partial(S))=B_H(\partial(H))$, so $K|_{\partial(S)}$ extends to a $2$-coloring $K'|_{H}$ of $H$, and then trivially to a $2$-coloring $K'$ of $G'$ by setting $K'|_{G'\setminus H}=K|_{G\setminus S}$.

Conversely, any $2$-coloring $K'$ of $G'$ includes  $2$-colorings of $H$ and $(G\setminus S)\cup\partial(S)$ that agree on $\partial(H)$. Let $W\subseteq\partial(H)$ be the white nodes in this coloring. Then $W\in B_H(\partial(H))=B_{G[S]}(\partial(S))$, so $K'|_{\partial(H)}$ extends to a $2$-coloring $K|_{G[S]}$ of $G[S]$, and then trivially to a $2$-coloring $K$ of $G$ by setting $K|_{G\setminus S}=K'|_{G'\setminus H}$.
\ignore{Is this proof OK?}
\end{proof}

The following lemma states, intuitively speaking, that we may replace an induced subgraph $G[S]$ by some graph $H$ which behaves the same as $G[S]$ with respect to its boundary and some set $T$, and this does not change how $G$ behaves with respect to $T$. This will allow us to construct the graph $H$ in a recursive fashion.
\begin{lemma} \label{lem:replace_subgraph_keeps_bipartitions}
Let us have a graph $G$, two sets $S,T \subseteq V(G)$, 
and let us have a graph $H$ with 
$B_H(\partial(S) \cup (T \cap S)) = B_{G[S]}(\partial(S) \cup (T \cap S))$. Let $G'$ be a graph obtained from $G$ by replacing $G[S]$ by $H$. Then $B_{G}(T) = B_{G'}(T)$.
\end{lemma}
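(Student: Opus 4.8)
The plan is to prove the two inclusions $B_G(T)\subseteq B_{G'}(T)$ and $B_{G'}(T)\subseteq B_G(T)$ separately, by a ``cut and paste'' argument on proper $2$-colorings in the spirit of \Cref{lem:replace_component_keeps_bipartitions} and \Cref{lem:replace_subgraph_extending_coloring}; the present statement refines the latter by additionally keeping track of the colors on $T\cap S$. Since the hypothesis $B_H(\partial(S)\cup(T\cap S))=B_{G[S]}(\partial(S)\cup(T\cap S))$ is symmetric in $H$ and $G[S]$ and the replacement merely swaps their roles, it is enough to carry out one inclusion in detail and obtain the other by relabelling; moreover, since $B_G(T)=\emptyset$ whenever $G$ is not bipartite, the two inclusions together also dispose of the degenerate case with no separate bipartiteness argument. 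First I would pin down the replacement: write $\partial$ for the common set $\partial(S)=\partial(H)$, record that $V(G')$ is the disjoint union of $V(G)\setminus S$ and $V(H)$, that the only edges of $G'$ joining these two parts are the $G$-edges incident to $\partial$, and hence that $\partial_{G'}(V(H))=\partial$ and $T\cap V(H)=T\cap S$.

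For $B_G(T)\subseteq B_{G'}(T)$: take $S'\in B_G(T)$, witnessed by a proper $2$-coloring $K$ of $G$ whose white part meets $T$ in exactly $S'$, and let $W$ be the set of vertices of $\partial\cup(T\cap S)$ that are white under $K$. The restriction $K|_{G[S]}$ is a proper $2$-coloring of $G[S]$ with white trace $W$ on $\partial\cup(T\cap S)$, so $W\in B_{G[S]}(\partial\cup(T\cap S))=B_H(\partial\cup(T\cap S))$, and we may choose a proper $2$-coloring $K_H$ of $H$ with white trace exactly $W$ on $\partial\cup(T\cap S)$. The key observation is that $K_H$ and $K$ agree on $\partial$, since both make precisely $W\cap\partial$ white there; hence the coloring $K'$ that equals $K_H$ on $V(H)$ and $K$ on $V(G)\setminus S$ is well defined and proper --- edges inside $V(H)$ by $K_H$, edges inside $V(G)\setminus S$ by $K$, and an edge $uv$ of $G'$ with $u\in\partial$ and $v\in V(G)\setminus S$ because $K'(u)=K(u)$, $K'(v)=K(v)$ and $uv\in E(G)$. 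Finally the white part of $K'$ meets $T$ in $\bigl(S'\cap(T\setminus S)\bigr)\cup\bigl(W\cap(T\cap S)\bigr)=S'$ by the definition of $W$, so $S'\in B_{G'}(T)$. The reverse inclusion is the identical argument with the roles of $H$ and $G[S]$ interchanged.

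I expect the only real difficulty to be the bookkeeping: making precise what ``replacing $G[S]$ by $H$'' does to boundary vertices and edges, and checking that the single set $W$ simultaneously lies in both bipartition sets (by hypothesis), forces agreement of the two chosen colorings on $\partial$ (so the glued coloring is proper), and determines the colors on $T\cap S$ (so membership in $B_{\cdot}(T)$ is preserved). A more algebraic route would instead observe that $B_G(T)$ is determined by the graph $G[(V(G)\setminus S)\cup\partial]$ together with $B_{G[S]}(\partial\cup(T\cap S))$ through a ``product with agreement on $\partial$'' construction, as in the proof of \Cref{lem:replace_component_keeps_bipartitions}, and that the replacement changes only the factor $B_{G[S]}(\partial\cup(T\cap S))$, which by hypothesis is unchanged as a set; but the cut-and-paste argument above is more self-contained.
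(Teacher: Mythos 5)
Your proposal is correct, and it takes a genuinely different route from the paper. The paper's proof builds on \Cref{lem:replace_subgraph_extending_coloring}: it first shows that $G$ is bipartite iff $G'$ is, then reduces to the case where every connected component meets $T$, then further reduces to $G$ and $G'$ connected, and finally exploits the uniqueness (up to color names) of the $2$-coloring of a connected bipartite graph, together with the fact that every vertex of $X=\partial(S)\cup(T\cap S)$ has its color determined by the coloring of $\partial(S)$. Your proof avoids all of this structural machinery: you prove the two inclusions $B_G(T)\subseteq B_{G'}(T)$ and $B_{G'}(T)\subseteq B_G(T)$ directly by a single cut-and-paste step, gluing a coloring of $H$ (chosen to match the trace $W$ on $X$) onto the restriction of a given coloring of $G$ to $V(G)\setminus S$, with agreement on $\partial(S)$ guaranteed because both colorings have the same white set $W\cap\partial(S)$ there. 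This sidesteps the connected-case reduction and the reliance on coloring uniqueness, handles the non-bipartite case for free (both inclusions are vacuous/forcing), and yields a shorter, more self-contained argument. The trade-off is that the paper's version is organized to reuse \Cref{lem:replace_subgraph_extending_coloring}, which it has already established and also needs for algorithm $(2)$; your version instead carries out the gluing inline, which is slightly more work locally but independent of that auxiliary lemma. Both are valid; yours is arguably the cleaner proof of this particular statement.
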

\begin{proof}
    Let $X=\partial(S)\cup (T\cap S)$, and observe that by our assumptions, we have $B_H(X)=B_{G[S]}(X)$ and $X\subseteq V[H]$. 
    In particular, we have $\partial(H)=\partial(S)\subseteq X$ so $B_H(\partial(H))=\{Y\cap \partial(H)\mid Y\in B_H(X)\}=\{Y\cap \partial(S)\mid Y\in B_{G[S]}(X)\}=B_{G[S]}(\partial(S))$. Thus by \Cref{lem:replace_subgraph_extending_coloring} $G$ is bipartite if and only if $G'$ is bipartite.
    If $G$ and $G'$ are not bipartite, $B_G(T)=B_{G'}(T)=\emptyset$ and we are done. Suppose therefore that $G$ and $G'$ are bipartite.  Then any connected component in $G$ or $G'$ that does not intersect $T$ has no effect on $B_G(T)$ and $B_{G'}(T)$ so we may assume without loss of generality that every connected component of $G$ and $G'$ intersect $T$.  Then $B_H(X)=B_{G[S]}(X)$ implies that the connected components of $H$ and $G[S]$ partition $X$ the same way, and thus the connected components of $G$ and $G'$ partition $T$ the same way.
    If we can show the result when $G$ and $G'$ are connected, then the general case follows easily by induction on the number of connected components.  Suppose therefore that $G$ and $G'$ are connected and bipartite, and therefore each have a unique $2$-coloring up to the naming of colors.
    
    By \Cref{lem:replace_subgraph_extending_coloring} we have $B_G((G\setminus S)\cup\partial(S))=B_{G'}((G\setminus S)\cup\partial(S))$, and for any $W\in B_G((G\setminus S)\cup\partial(S))$, we can let $K$ and $K'$ be the unique $2$-colorings of $G$ and $G'$ such that every vertex in $W$ is white.
    
    Then in particular $K$ and $K'$ agree on $T\cap((G\setminus S)\cup\partial(S))$ and on $\partial(S)$.
    Since $G$ and $G'$ are both connected, every node in $X$ is connected to a node in $\partial(S)$, and has its color uniquely determined by the coloring of $\partial(S)$. Since $B_H(X)=B_{G[S]}(X)$, $K$ and $K'$ must therefore agree on the color for every vertex in $X$ as well, and in particular they must also agree on $T\cap X$.
    Thus $K$ and $K'$ agree on $T$ and  $B_G(T)=B_{G'}(T)$. 
    \ignore{Is this proof OK?}
\end{proof}

\begin{theorem}
Assume that we are given a graph $G = (V,E)$. For any $\epsilon > 0$, there is an algorithm that computes a $2$-coloring of $G$ or determines that no such $2$-coloring exists; it performs $O(1)$ rounds in expectation and with high probability, and uses $O(\Ss)$ space per machine and $O(\Mm)$ machines for $\Ss = n^{2/3+\epsilon}$, $\Mm = n^{1/3-\epsilon}$.
\end{theorem}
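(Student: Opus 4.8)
The plan is to reuse, essentially verbatim, the recursion-and-compression template already established for counting connected components (\Cref{thm:MPC_connected_components}), with the invariant ``connected/not connected'' replaced by the set of bipartitions $B_G(\cdot)$, and with the auxiliary lemmas \Cref{lem:replace_component_keeps_bipartitions,lem:replace_subgraph_keeps_bipartitions} (and \Cref{lem:replace_subgraph_extending_coloring}) doing the work that connectivity did there. As in that proof I would state a stronger, recursion-friendly claim: given $G$ together with a marked set $\nabla(G)$ with $|\nabla(G)| \le O(\Ss)$, there is an algorithm using $O(1)$ rounds (expected and with high probability), $O(\Ss)$ space per machine and $O(\Mm)$ machines that either reports ``$G$ is not bipartite'', or outputs a graph $H$ with $\nabla(G) \subseteq V(H)$, $|V(H)| = O(|\nabla(G)|)$ and $B_H(\nabla(G)) = B_G(\nabla(G))$, and in addition retains enough distributed state that, given any $W \in B_H(\nabla(G))$, it can produce in $O(1)$ further rounds a $2$-coloring of $G$ whose white vertices inside $\nabla(G)$ are exactly $W$. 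The theorem is then the case $\nabla(G) = \emptyset$: $B_H(\emptyset)$ equals $\{\emptyset\}$ iff $G$ is bipartite, in which case we $2$-color $H$ (which has $O(1)$ vertices) and run the downward extension pass, and otherwise we report that no $2$-coloring exists.

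For the base case $|G| \le O(\Ss)$ the whole graph sits on one machine: if some component of $G$ is not bipartite we report it; otherwise we fix a $2$-coloring of $G$ and build $H$ by replacing each connected component $C$ with a star on ray set $C \cap \nabla(G)$, subdividing an edge exactly when its ray endpoint is black, and iterating \Cref{lem:replace_component_keeps_bipartitions} over the components to get $B_H(\nabla(G)) = B_G(\nabla(G))$. Given $W \in B_H(\nabla(G))$ we color $\nabla(G)$ according to $W$ and extend by the essentially-unique coloring of each component; feasibility is guaranteed because $W \in B_G(\nabla(G))$.

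For the recursive case $|G| > \Ss$ I would run \Cref{alg:MPC_meta_alg} to get an $r$-pseudodivision $\Rr$ with $r = |G|^{1-\epsilon}$, set $\nabla(R) = \partial(R) \cup (\nabla(G)\cap R)$, recurse on $(G[R],\nabla(R))$ to obtain $H_R$ (immediately propagating ``not bipartite'' if any recursive call reports it), and form $G' = \bigcup_{R\in\Rr} H_R$ by gluing the $H_R$ along their shared vertices. Since $G = \bigcup_R G[R]$, each $H_R$ satisfies $B_{H_R}(\nabla(R)) = B_{G[R]}(\nabla(R))$, and $\partial_G(R)\cup(\nabla(G)\cap R) \subseteq \nabla(R) \subseteq V(H_R)$, repeatedly applying \Cref{lem:replace_subgraph_keeps_bipartitions} with $T = \nabla(G)$ (one region at a time, exactly mirroring the path-based bookkeeping of the connected-components proof for the overlapping parts) yields $B_{G'}(\nabla(G)) = B_G(\nabla(G))$. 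The bound $|G'| \le |\partial(G)| + \sum_R |\partial(R)| = O(\Ss)$ is identical to the one in \Cref{thm:MPC_connected_components}, so $G'$ fits on one machine and we run the base-case procedure on it to get the final $H$. For the downward pass: given $W \in B_G(\nabla(G)) = B_{G'}(\nabla(G))$, first obtain a $2$-coloring $K'$ of $G'$ with white$\,\cap\,\nabla(G) = W$; restricting $K'$ to each $H_R$ gives $W_R \in B_{H_R}(\nabla(R)) = B_{G[R]}(\nabla(R))$; recursing the coloring pass inside region $R$ with input $W_R$ produces a $2$-coloring of $G[R]$, and these agree on shared vertices (all of which lie in the $\nabla(R)$'s and are colored by $K'$), so their union is a $2$-coloring of $G$ with the desired white set. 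The resource analysis --- constant recursion depth because $\Ss = n^{2/3+\epsilon}$ while region orders drop by a factor $|G|^{\epsilon}$, $|\nabla(R)| = o(n^{1/3}) = O(\Ss)$ throughout, and $O(1)$ rounds per level for both passes via \Cref{lem:mpc_meta_alg} --- is word-for-word that of \Cref{thm:MPC_connected_components}.

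The step I expect to be the main obstacle is not the complexity bookkeeping (which is copied) but pinning down the interface of the stronger claim so that the ``compress up, then color down'' two-pass structure is provably correct: one must check that iterating \Cref{lem:replace_subgraph_keeps_bipartitions} across the regions is legitimate even though pseudodivision regions overlap (an edge shared by two regions is handled redundantly but consistently, as both its endpoints are then marked in both regions), and that the colorings returned by the recursive calls on different regions are mutually consistent on those overlaps --- which is exactly why the downward pass must fix the colors of $\partial(\Rr)\cup\nabla(G)$ first (through $K'$) before recursing into the regions. Once that is settled, everything else follows from \Cref{lem:mpc_meta_alg} and \Cref{lem:replace_component_keeps_bipartitions,lem:replace_subgraph_keeps_bipartitions}.
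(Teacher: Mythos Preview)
Your proposal is correct and follows essentially the same approach as the paper. The only cosmetic difference is that the paper states the stronger recursion-friendly claim as two separate algorithms --- one that builds the compressed graph $H$ with $B_H(\nabla(G))=B_G(\nabla(G))$, and one that, given a $2$-coloring of $\nabla(G)$, extends it to all of $G$ --- whereas you package both passes into a single claim; the actual content (use of \Cref{lem:replace_component_keeps_bipartitions} in the base case, iterated \Cref{lem:replace_subgraph_keeps_bipartitions} to compress up, and \Cref{lem:replace_subgraph_extending_coloring} to justify the downward coloring pass, with the resource analysis copied from \Cref{thm:MPC_connected_components}) is identical.
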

We again prove a stronger claim, in order to enable us to give a recursive algorithm. We can use it to prove the theorem by setting $\nabla(G) = \emptyset$ and using the algorithm $(2)$.
\begin{claim}
Let us have a graph $G$ and a set of marked vertices $\nabla(G)$ with $|\nabla(G)| = k \leq O(\Ss)$. Then, for any fixed $\epsilon > 0$, there is
\begin{enumerate}[(1)]
    \item an algorithm that returns a graph $H$ with $V(H) \supseteq \nabla(G)$ on $O(k)$ vertices such that $B_G(\nabla(G)) = B_H(\nabla(G))$ 
    \item an algorithm that, given a $2$-coloring of $\nabla(G)$
    computes a $2$-coloring of $G$ extending the given $2$-coloring of $\nabla(G)$, or detects that no such $2$-coloring exists
\end{enumerate}
and both algorithms perform $O(1)$ rounds both in expectation and with high probability, and use $O(\Ss)$ space per machine and $O(\Mm)$ machines for $\Ss = n^{2/3+\epsilon}$, $\Mm = n^{1/3-\epsilon}$.
\end{claim}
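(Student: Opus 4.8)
The plan is to give a recursive MPC algorithm with the same skeleton as the proof of \Cref{thm:MPC_connected_components}: \Cref{alg:MPC_meta_alg} produces the $r$-pseudodivision, we recurse on the regions, and \Cref{lem:replace_component_keeps_bipartitions} together with (an adaptation of) \Cref{lem:replace_subgraph_keeps_bipartitions} justifies the ``compression''. Here is the algorithm for part (1). In the base case $|G|\le O(\Ss)$ the whole graph fits on one machine; if $G$ is not bipartite we output $\nabla(G)$ as isolated vertices together with one disjoint triangle (so $B_H(\nabla(G))=\emptyset=B_G(\nabla(G))$), and otherwise we use the unique $2$-colouring of each connected component and replace each component by a star as in \Cref{lem:replace_component_keeps_bipartitions}, outputting the disjoint union. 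In the recursive case ($|G|>\Ss$) we compute an $r$-pseudodivision $\Rr$ for $r=|G|^{1-\epsilon}$, set $\nabla(R)=\partial(R)\cup(\nabla(G)\cap R)$, recursively run part (1) on $G[R]$ to obtain $H_R$ with $V(H_R)\supseteq\nabla(R)$, $|V(H_R)|=O(|\nabla(R)|)$ and $B_{H_R}(\nabla(R))=B_{G[R]}(\nabla(R))$, form $G'=\bigcup_{R\in\Rr}H_R$ by gluing along the shared boundary vertices (plus every isolated marked vertex of $G$), ship $G'$ to one machine, and compress it there exactly as in the base case to get $H$ with $|V(H)|=O(|\nabla(G)|)$.

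The heart of the correctness proof is the identity $B_G(T)=B_{G'}(T)$ for $T=\nabla(G)\cup\partial(\Rr)$; note that $T\cap R=\nabla(R)$ for every region $R$, since a vertex of $R$ lying in $\partial(\Rr)$ must lie in two regions and hence in $\partial(R)$. I would prove this identity directly by gluing $2$-colourings (in the spirit of the path argument in the proof of \Cref{thm:MPC_connected_components}): given a $2$-colouring of $G$, its restriction to each $\nabla(R)$ has a white-set in $B_{G[R]}(\nabla(R))=B_{H_R}(\nabla(R))$ and so extends to a $2$-colouring of $H_R$; these extensions pairwise agree on overlaps because each agrees with the original colouring on $\bigcup_R\nabla(R)$, so their union is a $2$-colouring of $G'$ with the same white-set on $T$. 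The converse direction is symmetric and uses that $\bigcup_R G[R]$ contains every edge of $G$ by the division property. Restricting white-sets to $\nabla(G)$ gives $B_G(\nabla(G))=B_{G'}(\nabla(G))$, and since the final on-one-machine step rewrites each connected component of $G'$ separately, \Cref{lem:replace_component_keeps_bipartitions} yields $B_H(\nabla(G))=B_{G'}(\nabla(G))=B_G(\nabla(G))$, as required.

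For part (2) the algorithm is again recursive. In the base case we propagate the given colouring of $\nabla(G)$ across $G$ on one machine, producing a valid extension or reporting failure. In the recursive case we compute $\Rr$ and recursively build the graphs $H_R$ (via part (1)) and $G'$ exactly as above; on one machine we test whether the white-set of the input colouring of $\nabla(G)$ lies in $B_{G'}(\nabla(G))$ ($=B_G(\nabla(G))$) and, if so, extend it to a full $2$-colouring $K'$ of $G'$, otherwise reporting failure. Then $K'$ colours $T=\nabla(G)\cup\partial(\Rr)$, so for each region $R$ we have a $2$-colouring of $\nabla(R)$; we recursively run part (2) on $G[R]$ with this colouring, which succeeds because $K'$ restricted to $V(H_R)$ is a valid $2$-colouring of $H_R$, hence $K'|_{\nabla(R)}$ has white-set in $B_{H_R}(\nabla(R))=B_{G[R]}(\nabla(R))$. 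The region colourings all agree with $K'$ on the shared boundary vertices, so their union is a $2$-colouring of $G$ extending the input colouring, and since \Cref{lem:mpc_meta_alg} lays out the regions on disjoint groups of machines, the recursive calls run in parallel and the ``union'' is free.

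The resource bounds are identical to those in the proof of \Cref{thm:MPC_connected_components}: with $r=|G|^{1-\epsilon}$ and $\Ss=n^{2/3+\epsilon}$ the recursion depth is $O(1)$, so $|\nabla(G)|$ grows by at most $O(\sqrt{rn\log n/\Ss})=o(n^{1/3})$ per level and stays $\le O(\Ss)$, $|V(G')|\le|\nabla(G)|+\sum_R|\partial(R)|=O(\Ss)$ so $G'$ fits on one machine, each level costs $O(1)$ rounds (expected and w.h.p.) by \Cref{lem:mpc_meta_alg} plus convergence-cast trees, and $O(n/\Ss)=O(\Mm)$ machines are used. The step I expect to be the main obstacle is making the identity $B_G(T)=B_{G'}(T)$ fully rigorous: the induced subgraphs $G[R]$ overlap on boundary vertices, and an edge between two boundary vertices may be represented inside several of the $H_R$ at once, so one must check that the per-region colourings always glue consistently (which reduces to them all agreeing with a single global colouring on $\bigcup_R\nabla(R)$) and that replacing all the $G[R]$ simultaneously loses no edge of $G$ — this is exactly why a direct gluing argument is cleaner here than a naive region-by-region iteration of \Cref{lem:replace_subgraph_keeps_bipartitions}, whose hypothesis can fail once an earlier replacement has deleted a boundary-to-boundary edge.
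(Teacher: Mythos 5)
Your proposal is correct and follows the same recursive skeleton as the paper's proof (pseudodivision via \Cref{alg:MPC_meta_alg}, star-compression of components via \Cref{lem:replace_component_keeps_bipartitions}, identical resource bookkeeping), but it departs in one spot, and the departure is a genuine refinement. The paper deduces $B_{G'}(\nabla(G))=B_G(\nabla(G))$ for $G'=\bigcup_{R\in\Rr}H_R$ by ``repeatedly applying \Cref{lem:replace_subgraph_keeps_bipartitions}'', i.e.\ replacing $G[R]$ by $H_R$ one region at a time. You instead prove $B_G(T)=B_{G'}(T)$ for $T=\nabla(G)\cup\partial(\Rr)$ directly, by restricting a $2$-coloring of $G$ to each $\nabla(R)$, extending into each $H_R$ using $B_{G[R]}(\nabla(R))=B_{H_R}(\nabla(R))$, and gluing (the graphs $H_R$ pairwise overlap only in $\bigcup_R\nabla(R)$, where every extension agrees with the original coloring), and symmetrically for the converse. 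You flag exactly the right reason for preferring this: after the first replacement $G_1=(G\setminus G[R_1])\cup H_{R_1}$, the induced subgraph $G_1[R_2]$ may have lost every edge $uv$ with $u,v\in R_1\cap R_2\subseteq\partial(R_1)\cap\partial(R_2)$ (the star-shaped $H_{R_1}$ adds no edges with both endpoints in $V(G)$), so the hypothesis needed for the second application of the lemma, namely $B_{H_{R_2}}(\nabla(R_2))=B_{G_1[R_2]}(\nabla(R_2))$, is not what the recursion established, which was against $G[R_2]$. Your direct argument sidesteps this and in fact matches the style the paper itself uses for \Cref{thm:MPC_connected_components}, where a direct path-gluing argument is given rather than an iterated replacement lemma. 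Your explicit handling of the non-bipartite base case (output $\nabla(G)$ as isolated vertices together with a disjoint triangle) fills another small gap that the paper leaves implicit. Part (2) and the $\Ss,\Mm$ accounting are essentially as in the paper.
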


\begin{proof}
We first give the algorithm $(1)$ and then use it to get algorithm $(2)$.

\paragraph{Base case of algorithm $(1)$.} Assume that $|G| \leq O(\Ss)$. Let us have a connected component $C$ in $G$ with non-empty intersection $S$ with $\nabla(G)$. Let us have a $2$-coloring of $C$ (we call the colors black and white). We add to $H$ a star with its ray vertices being the vertices $S$ and subdivide edges incident to vertices that are black in the $2$-coloring. By \Cref{lem:replace_component_keeps_bipartitions}, the set $B(\nabla(G))$ remains unchanged. We repeat this for all connected components and let $H$ be the resulting graph. We then have $B_H(\nabla(G)) = B_G(\nabla(G))$.
Each connected component $C$ is replaced by a graph on $O(|C \cap \nabla(G)|)$ vertices and the total size of the resulting graph is thus $O(|\nabla(G)|)$. 

\paragraph{Recursive case of algorithm $(1)$.} We compute an $r$-pseudodivision $\Rr$ of $G$ for $r = |G|^{1-\epsilon}$. Let $\nabla(R) = \partial(R) \cup (\nabla(G) \cap R)$. Recursively, we compute for each region $R \in \Rr$ a graph $H_R \supseteq \nabla(R)$ on $O(|\nabla(R)|)$ vertices, such that $B_{H_R}(\nabla(R)) = B_{G[R]}(\nabla(R))$. Note that for this, we need that $|\nabla(R)| = O(\Ss)$; we discuss this below.
Let $G' = \bigcup_{R \in \Rr} H_R$. The graph $G'$ can be obtained by replacing, one by one for each region $R \in \Rr$, the subgraph $G[R]$ by $G[H_R]$. By repeatedly applying \Cref{lem:replace_subgraph_keeps_bipartitions}, we thus have that $B_{G'}(\nabla(G)) = B_G(\nabla(G))$.

We compute a $2$-coloring of $G'$ and for each connected component $C$ of $G'$, we add to $H$ a star with its ray vertices being $C \cap \nabla(G)$ and subdivide edges whose vertex is black in the $2$-coloring. It follows by repeatedly applying \Cref{lem:replace_component_keeps_bipartitions} that $B_{G'}(\nabla(G)) = B_H(\nabla(G))$ and thus $B_G(\nabla(G)) = B_H(\nabla(G))$.

For each connected component $C$, we have added to $H$ a graph on $O(|C \cap \nabla(G)|)$ vertices, and we thus have $|H| \leq O(|\nabla(G)|)$. This proves correctness of the output.

It remains to argue that the algorithm uses $O(\Ss)$ space per machine, that it performs $O(1)$ rounds  in expectation and with high probability, and that in all recursive calls and for all regions $R$, it holds $|\nabla(R)| \leq O(\Ss)$. The argument is exactly the same as that in the proof of \Cref{thm:MPC_connected_components}, and we do not repeat it here.

\paragraph{Base case of algorithm $(2)$.} We perform a breadth first search starting from $\nabla(G)$ (note that these vertices are already colored) and whenever we see a vertex $v$ for the first time, we give it color opposite to the color of the vertex from which we discovered $v$. We then verify that the resulting $2$-coloring is proper. If it is, we return it, otherwise we have detected that no extending $2$-coloring exists.

\paragraph{Recursive case of algorithm $(2)$.} We compute an $r$-pseudodivision $\Rr$ of $G$ for $r = |G|^{1-\epsilon}$. For each $R \in \Rr$, we use algorithm $(1)$ with $\nabla(R) = \partial(R) \cup (\nabla(G) \cap R)$ to get $H_R$ with $B_{H_R}(\nabla(R)) = B_{G[R]}(\nabla(R))$. The boundary size per region is $O(\sqrt{r n \log (n)/M} \leq O(\Ss)$ and we thus have $|\nabla(R)| \leq O(\Ss)$, allowing us to use algorithm (1).

We let $H = \bigcup_{R \in \Rr} H_R$. Again, this graph can be obtained from $G$ by repeatedly replacing $G[R]$ by $H_R$ for all $R \in \Rr$.
By repeated application of \Cref{lem:replace_subgraph_extending_coloring}, we thus have that the given $2$-coloring of $\nabla(G)$ can be extended to a $2$-coloring of $G$ iff it can be extended to a $2$-coloring of $H$, and if it is extendable to some coloring $K$, then $K|_{\partial(\Rr) \cup \nabla(G)}$ is extendable to a $2$-coloring of $H$. Specifically, $H$ can be $2$-colored in a way that extends the given $2$-coloring of $\nabla(G)$. We find such $2$-coloring $K'$ by the same BFS-based algorithm that we used in the base case of this algorithm. Similarly, we get from \Cref{lem:replace_subgraph_extending_coloring} that we may extend $K'|_{\partial(\Rr) \cup \nabla(G)}$ to a $2$-coloring of $G$ and specifically, that we may extend $K' |_{\nabla(R)}$ to a $2$-coloring of $G[R]$. We may thus recursively find such $2$-coloring of $G[R]$. Since these colorings agree on $H$ and there are no edges between the interiors of regions of $\Rr$, these $2$-colorings together give a $2$-coloring of $G$. 

Again, by exactly the the same argument, this algorithm performs $O(1)$ rounds, it always holds $|\nabla(R)| \leq O(\Ss)$, and the graph $H$ fits onto one machine with memory $\Ss$.
\end{proof}

\subsection{Minimum Spanning Tree}

Throughout this section, we assume that the edges are linearly ordered. If we only have a quasi-ordering, we may extend it to a linear order. In the case of a weighted graph, this can be done by replacing the weight $w(e)$ by $(w(e), id(e))$ where $id(e)$ is the unique identifier of $e$, and comparing these ordered pairs lexicographically. This assumption guarantees uniqueness of the minimum spanning trees. We denote the minimum spanning tree of $G$ by $T(G)$.

For minimum spanning trees, the concept of compressing a graph in a way that retains the structure of the solutions has been formalized in \cite{Eppstein1996}. For sake of completeness, we now show how to compress a graph for MST below.
\begin{definition} \label{def:compression_mst}
Let us have a graph $G = (V,E)$ and a set $\nabla(G) \subseteq V$.
We say that a graph $H$ with $V(H) \supseteq \nabla(G)$ compresses $G$ with respect to $\nabla(G)$ if for any graph $G' \supseteq G$ with $\partial_{G'}(G) \subseteq \nabla(G)$%
, then for $T$ being the minimum spanning tree of $G'$ and $T'$ the minimum spanning tree of $(G' \setminus G) \cup H$, it holds $T \cap (G'\setminus G) = T' \cap (G' \setminus G)$.
\end{definition}

The following gives an alternative definition that is more useful for some of our proofs.
\begin{lemma}\label{lem:simple_condition_for_compression}
    Let $G=(V,E)$ be a graph, let $\nabla(G)\subseteq V$, and let $H$ be a graph with $V(H)\supseteq \nabla(G)$. Then the following are equivalent:
    \begin{enumerate}[(i)]
        \item $H$ compresses $G$ with respect to $\nabla(G)$.
        \item For every $u,v\in\nabla(G)$ and weight $w$, $H$ contains a $u\cdots v$ path of maximum edge weight $< w$ if and only $G$ contains such a $u\cdots v$ path.
    \end{enumerate}
\end{lemma}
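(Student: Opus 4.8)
The plan is to prove the equivalence (i) $\Leftrightarrow$ (ii) by characterizing, for both $G$ and the contracted-into graph $G'$, which edges end up in the minimum spanning tree via the classic \emph{cycle rule}: an edge $e$ with weight $w(e)$ is \emph{not} in the MST of a connected graph precisely when its endpoints are already joined by a path all of whose edges are strictly lighter than $e$ (this uses our standing assumption that all weights are distinct, so MSTs are unique). More generally, for a possibly disconnected graph, $e\in T(\cdot)$ iff $e$ is a bridge of the subgraph on edges of weight $\le w(e)$, equivalently iff there is no $u\cdots v$ path using only edges strictly lighter than $e$. So condition (ii) is exactly the statement that $H$ and $G$ induce the same such ``light-connectivity'' relation on pairs of marked vertices, for every threshold $w$.

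First I would establish the direction (ii) $\Rightarrow$ (i). Fix $G'\supseteq G$ with $\partial_{G'}(G)\subseteq\nabla(G)$, let $G'' = (G'\setminus G)\cup H$, and consider any edge $e\in G'\setminus G$ with weight $w=w(e)$; I must show $e\in T(G')$ iff $e\in T(G'')$. By the cycle rule this amounts to showing that the endpoints of $e$ are connected by a path of edges strictly lighter than $w$ in $G'$ iff they are so connected in $G''$. Given such a path in $G'$, break it at the vertices where it enters and leaves $G$; since $\partial_{G'}(G)\subseteq\nabla(G)$, each maximal subpath inside $G$ runs between two marked vertices $u,v\in\nabla(G)$ and has all edges lighter than $w$, so by (ii) there is a corresponding light $u\cdots v$ path inside $H$; splicing these replacements in gives a light path in $G''$. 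The reverse direction is symmetric, replacing the light subpaths through $H$ by light subpaths through $G$, again using (ii). Since this holds for every $e\in G'\setminus G$ and every threshold, $T(G')$ and $T(G'')$ contain exactly the same edges of $G'\setminus G$, which is (i).

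Next I would prove (i) $\Rightarrow$ (ii) by contraposition: suppose (ii) fails, so there exist $u,v\in\nabla(G)$ and a weight $w$ such that (WLOG) $G$ has a $u\cdots v$ path with all edge weights $< w$ but $H$ does not. I build a witness $G'$: take $G$ and add a single new edge $e=uv$ of weight exactly $w$ (and if $u,v$ are not already such that this keeps weights distinct, perturb via the $id$ trick; since $u,v\in\nabla(G)$ we indeed have $\partial_{G'}(G)=\{u,v\}\subseteq\nabla(G)$). In $G'$, the edge $e$ closes a cycle with a path strictly lighter than $e$, so $e\notin T(G')$. But in $(G'\setminus G)\cup H$, which is just $H$ plus the isolated-from-$G$ edge $e=uv$, the endpoints $u,v$ are \emph{not} joined by any path lighter than $w$, so $e$ is the lightest edge across some cut and hence $e\in T((G'\setminus G)\cup H)$ — unless $u,v$ lie in different components of $H$ in which case $e$ is a bridge and also in the tree. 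Either way $e\in T'\setminus T$ restricted to $G'\setminus G$, contradicting (i). (The case where $H$ has a light path but $G$ does not is handled identically with the roles swapped, now getting $e\in T\setminus T'$.)

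The main obstacle I anticipate is bookkeeping around disconnectedness and the precise handling of thresholds: the clean ``MST = lightest edge in each cut'' picture needs the graph to be connected, so I should phrase everything via the equivalent \emph{cycle property} (an edge is excluded iff it is the heaviest on some cycle) together with the bridge case, which works uniformly for disconnected graphs. I would state once, as a preliminary observation, the exact characterization ``$e\in T(K)$ iff $e$ is not on any cycle of $K$ whose other edges are all lighter than $e$, for the unique MST $T(K)$ of a graph $K$ with distinct edge weights,'' cite it as folklore (Kruskal/cycle rule), and then both directions above become short path-surgery arguments. A secondary subtlety is ensuring the added edge $e$ preserves distinctness of weights in the witness $G'$; this is immediately fixed by the same $(w(e),id(e))$ lexicographic trick already invoked at the start of the section, so it costs only a sentence.
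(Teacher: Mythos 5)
Your proof is correct and follows essentially the same structure as the paper's: both directions hinge on the cycle-rule characterization that a non-tree edge is exactly one whose endpoints are already joined by a strictly lighter path, with (ii)$\Rightarrow$(i) lifting light-path existence from $G$ vs.\ $H$ to $G'$ vs.\ $(G'\setminus G)\cup H$ (you spell out the path-splicing step that the paper leaves implicit in its final sentence), and (i)$\Rightarrow$(ii) arguing by contrapositive with an explicit witness $G'$. The one cosmetic difference is the witness: you add a single new edge $uv$ of weight $w$ and invoke the $(w(e),\mathrm{id}(e))$ perturbation to sidestep weight ties and possible parallelism with an existing $uv$ edge, whereas the paper adds a fresh vertex $x$ and a two-edge path $u$--$x$--$v$ with weights strictly between $w'$ and $w$, which avoids those collisions automatically and lets them point to the specific offending edge $xv$.
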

\begin{proof}
    (i)$\implies$(ii)
    We will prove the contrapositive. So suppose there exists $u,v\in\nabla(G)$ and weight $w$ such that exactly one of $G$ and $H$ contains a $u\cdots v$ path with maximum edge weight $w' < w$. Let $x$ be a new vertex not in either $G$ or $H$, let $ux$ and $xv$ be new edges with weights $w'<w(ux)<w(xv)<w$, and let $G'=G\cup\{ux,xv\}$. Then $xv\in G'\setminus G$ is in exactly one of $T(G')$ and $T((G'\setminus G)\cup H)$, and thus $H$ does not compress $G$ with respect to $\nabla(G)$.
    
    (ii)$\implies$(i)     
    Consider a graph $G'\supseteq G$ with $\partial_{G'}\subseteq\nabla(G)$, and minimum spanning trees $T=T(G)$ and $T'=T((G'\setminus G)\cup H)$.
    We want to show that for every edge $e\in G'\setminus G$, $e\not\in T$ if and only if $e\not\in T'$.
    Let $uv\in G'\setminus G$. Since $T$ is a minimum spanning tree in $G'$, $uv\not\in T$ if and only if there exists a $u\cdots v$ path in $G'$ where every edge has weight $<w(uv)$.
    Similarly, since $T'$ is a minimum spanning tree in $(G'\setminus G)\cup H$, $uv\not\in T'$ if and only if there exists a $u\cdots v$ path in $(G'\setminus G)\cup H$ where every edge has weight $<w(uv)$.
    By our assumption, such a $u\cdots v$ path exists in $G'$ if and only if it exists in $(G'\setminus G)\cup H$.
\end{proof}

We will need a few definitions when trying to compress a graph. We say vertices $v_1, \cdots, v_k \in V$ form an isolated path if $v_1 v_2 \cdots v_k$ is a path, $d(v_2) = 2, \cdots, d(v_{k-1}) = 2$, and $v_2,\ldots,v_{k-1}\in V\setminus\nabla(G)$.
Let $j$ be such that $v_j v_{j+1}$ is the heaviest edge of the isolated path. We say that $G$ results in $G'$ by contracting an isolated path, if $G'$ can be obtained from $G$ by removing $v_1, \cdots, v_k$ and adding an edge $v_1v_k$ whose weight is before $v_{j}v_{j+1}$ in the order. 

We now prove several sufficient conditions under which $G'$ compresses $G$ with respect to a set $\nabla(G)$.
\begin{lemma} \label{lem:sufficient_conditions_for_compression}
Let us have a graph $G$ and a subset $\nabla(G) \subseteq V(G)$. Then $H$ compresses $G$ with respect to $\nabla(G)$ for:
\begin{enumerate}[(i)]
    \item $H = T(G)$
    \item $H$ being obtained from $G$ by removing a leaf of $G$ that does not lie in $\nabla(G)$
    \item $H$ being obtained from $G$ by contracting an isolated path
    \item there exists $H'$ such that $H$ compresses $H'$ with respect to $\nabla(G)$ and $H'$ compresses $G$ with respect to the same set
    \item for some $S \subseteq V(G)$, $H$ is obtained by replacing $G[S]$ by $H'$ for some $H'$ which compresses $G[S]$ with respect to $\partial(G[S]) \cup (\nabla(G) \cap S)$
\end{enumerate}
\end{lemma}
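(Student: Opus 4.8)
The plan is to prove each of (i)--(v) by appealing to the characterization in \Cref{lem:simple_condition_for_compression}, namely that $H$ compresses $G$ with respect to $\nabla(G)$ if and only if for every $u,v \in \nabla(G)$ and every weight $w$, $H$ has a $u \cdots v$ path with all edge weights $< w$ exactly when $G$ does. This reduces everything to a statement about connectivity in the subgraphs $G_{<w}$ and $H_{<w}$ obtained by keeping only edges of weight $< w$, which is much easier to reason about than the minimum-spanning-tree definition directly. Parts (iv) and (v) are essentially free: (iv) is just transitivity of the relation "has a $u\cdots v$ path with max weight $< w$ exactly when the other does," which chains through $H'$; and (v) follows since replacing $G[S]$ by $H'$ changes $G_{<w}$ only by replacing the induced piece $G[S]_{<w}$ by $H'_{<w}$, and the compression hypothesis on $H'$ w.r.t.\ $\partial(G[S]) \cup (\nabla(G)\cap S)$ says precisely that these two pieces induce the same partition on $\partial(G[S])\cup(\nabla(G)\cap S)$, so any $u\cdots v$ path ($u,v\in\nabla(G)$) through the modified region can be rerouted within the replacement, and vice versa. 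Here the boundary vertices $\partial(G[S])$ must be included in the marked set for $H'$ precisely so that paths entering and leaving $S$ can be stitched back together.

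For (i), $H = T(G)$: in any graph, $u$ and $v$ are connected in $G_{<w}$ if and only if they are connected in $T(G)_{<w} = T(G_{<w})$ restricted appropriately — more precisely, the minimum spanning tree contains, for each threshold $w$, a spanning forest of $G_{<w}$ (this is the standard cut/cycle property of MSTs), so a $u\cdots v$ path of max weight $< w$ exists in $G$ iff one exists in $T(G)$. Since $\nabla(G)\subseteq V(T(G)) = V(G)$, condition (ii) of \Cref{lem:simple_condition_for_compression} holds. For (ii), removing a leaf $x \notin \nabla(G)$: a leaf edge $x y$ lies on no cycle and hence on no $u\cdots v$ path with $u,v \neq x$; since $u,v\in\nabla(G)$ and $x\notin\nabla(G)$, deleting $x$ and its incident edge does not affect existence of any $u\cdots v$ path for $u,v\in\nabla(G)$, at any weight threshold. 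For (iii), contracting an isolated path $v_1 v_2 \cdots v_k$ with interior degree-$2$ vertices in $V\setminus\nabla(G)$ into a single edge $v_1 v_k$ of weight just below the heaviest edge $v_j v_{j+1}$ of the path: the only $u\cdots v$ paths ($u,v\in\nabla(G)$) that use any interior vertex must traverse the whole path from $v_1$ to $v_k$ (since interior vertices have degree $2$), and such a traversal has max weight $\geq w(v_j v_{j+1})$; conversely the new edge $v_1 v_k$ offers a $v_1\cdots v_k$ connection of weight $< w(v_j v_{j+1})$ but $\geq$ everything strictly below it in the order, so for every threshold $w$ the pair $(v_1,v_k)$ — and hence every pair in $\nabla(G)$ routed through this path — is connected in $H_{<w}$ iff it was in $G_{<w}$. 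One must check the threshold bookkeeping at the single value $w = w(v_jv_{j+1})$, which is where the "weight immediately before $v_jv_{j+1}$ in the order" convention is used.

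The main obstacle is getting the weight-threshold accounting exactly right in (iii): because \Cref{lem:simple_condition_for_compression}(ii) quantifies over all weights $w$ and uses strict inequalities, the replacement edge's weight has to be chosen in the narrow gap "just below $w(v_jv_{j+1})$" so that $v_1 v_k$ is usable at exactly the same set of thresholds as the heaviest-edge path was, neither more nor fewer; handling the degenerate case where several path edges tie for heaviest (resolved by the linear extension of the order assumed at the start of the section) needs a sentence of care. Everything else is routine graph connectivity once the lemma is invoked; I would present (iv) and (v) first as the structural backbone, then (i), (ii), (iii) as the base cases.
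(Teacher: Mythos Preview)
Your proposal is correct and follows essentially the same approach as the paper: both reduce all five cases to the path-with-maximum-weight characterization of \Cref{lem:simple_condition_for_compression}, and the arguments you sketch for each case match the paper's (including the transitivity for (iv) and the break-into-subpaths-at-the-boundary argument for (v)). The paper's treatment is terser and presents the cases in order rather than doing (iv),(v) first, but the content is the same.
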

\begin{proof}
    Let $H$ be obtained from $G$ by one of the processes above. In each case, $H$ satisfies the conditions of \Cref{lem:simple_condition_for_compression}(ii), so it follows that $H$ compresses $G$ with respect to $\nabla(G)$.
    
    In detail: In case (i), the minimum spanning tree preserves for all $u,v\in V$ the smallest maximum edge weight taken over over all $u\cdots v$ paths, and thus in particular for all $u,v\in\nabla(G)$.
    
    In case (ii), the removed leaf is not on a path between any $u,v\in \nabla(G)$ so removing it changes nothing for the paths we care about.
    
    In case (iii), we explicitly preserve the maximum weight among the replaced edges, again preserving the maximum weight on all paths we care about.
    
    In case (iv), since $H'$ compresses $G$ and $H$ compresses $H'$,
    for any $u,v\in\nabla(G)$ and weight $w$, some $u\cdots v$ path in $G$ has maximum weight $<w$ if and only if some $u\cdots v$ path in $H'$ has maximum weight $<w$ if and only if some $u\cdots v$ path in $H$ has maximum weight $<w$. Thus, $H$ compresses $G$ with respect to $\nabla(G)$.
    
    In case (v) we can even prove something slightly stronger, namely that $H$ compresses $G$ with respect to $X=\nabla(G)\cup\partial(G[S])\supseteq\nabla(G)$. For this, consider $u,v\in X$ and a weight $w$. If some $u\cdots v$ path in $G$ has maximum edge weight $<w$, we can break it into subpaths where each subpath has its endpoints in $X$, and has maximum weight $<w$. Each such subpath is completely contained either in $(G\setminus G[S])\cup \partial(S)$ (and thus exists unchanged in $H$) or in $G[S]$.
    Since $H'$ compresses $G[S]$ with respect to $ \partial(G[S])\cup(\nabla(G)\cap S)$, each subpath of $u\cdots v$ contained in $G[S]$ has a replacement in $H'$ whose max weight is $<w$. Thus, $H$ contains an $u\cdots v$ walk (and therefore a $u\cdots v$ path) whose maximum edge weight is $<w$.  Conversely, if $H$ contains such a path, a completely symmetric argument gives that $G$ does. \ignore{Is this proof OK?}
\end{proof}

We now show a sequential algorithm that compresses an input graph $G$ with respect to a given set $\nabla(G)$.

\begin{lemma} \label{lem:compress_a_graph}
There is an algorithm that, given a graph $G$ and a set $\nabla(G) \subseteq V(G)$, computes $H$ that compresses $G$ with respect to $\nabla(G)$ and such that $|H| = O(|\nabla(G)|)$; it uses $O(|G|)$ space.
\end{lemma}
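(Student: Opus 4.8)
The plan is to apply the sufficient conditions of \Cref{lem:sufficient_conditions_for_compression} in sequence, starting from a forest. First, replace $G$ by its minimum spanning forest $H_0$ (a minimum spanning tree inside each connected component); by part~(i), applied in each component, $H_0$ compresses $G$ with respect to $\nabla(G)$, and it can be computed in $O(|G|)$ space by any standard minimum-spanning-tree algorithm. From now on the graph is a forest, hence of linear size, and everything that follows stays within $O(|G|)$ space. Next, exhaustively apply two local reductions: (a) delete a vertex $v\notin\nabla(G)$ of degree at most $1$; (b) suppress a vertex $v\notin\nabla(G)$ of degree $2$, i.e.\ contract the isolated path formed by $v$ together with its two neighbours, keeping (per the definition of isolated-path contraction above) the heavier of the two replaced edges. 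Part~(ii) covers reduction~(a) for leaves and part~(iii) covers reduction~(b); deleting an isolated vertex $v\notin\nabla(G)$ is not literally an instance of~(ii) but trivially preserves condition~(ii) of \Cref{lem:simple_condition_for_compression}, since $v$ lies on no $u\cdots w$ path with $u,w\in\nabla(G)$. By part~(iv), composing any sequence of such reductions again yields a graph that compresses $G$; each reduction strictly decreases the number of vertices, so the process terminates with some forest $H$ that compresses $G$ with respect to $\nabla(G)$, and since no $\nabla(G)$-vertex is ever removed and no multi-edge is ever created (a forest stays a simple forest under leaf deletion and path suppression), we have $\nabla(G)\subseteq V(H)$.

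It remains to bound $|V(H)|$. When the reductions terminate, every vertex of $H$ of degree at most $2$ lies in $\nabla(G)$, so there are at most $|\nabla(G)|$ of them. For the vertices of degree at least $3$, let $c$ be the number of connected components of the forest $H$, so $\sum_{v\in V(H)}\deg(v)=2(|V(H)|-c)$. Writing $D_0,L,D_2,B$ for the numbers of vertices of degree $0$, $1$, $2$, and at least $3$ respectively, and bounding the degree sum from below by $L+2D_2+3B$, one gets $B\le 2D_0+L-2c\le 2(D_0+L)\le 2|\nabla(G)|$, using that $D_0,L$ count only $\nabla(G)$-vertices. Hence $|V(H)|=D_0+L+D_2+B\le 3|\nabla(G)|=O(|\nabla(G)|)$.

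For the space bound: the minimum spanning forest is computed in $O(|G|)$ space; thereafter $H_0$ and every intermediate graph has size $O(|G|)$, and the reduction loop can be run on a plain adjacency-list representation together with a worklist holding the vertices that currently have degree at most $2$ and are not in $\nabla(G)$, so the whole algorithm uses $O(|G|)$ space.

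The only mildly technical ingredient is the degree-counting inequality that yields $|V(H)|=O(|\nabla(G)|)$; the rest is bookkeeping built directly on \Cref{lem:sufficient_conditions_for_compression}. The points I would be careful to get right are that reduction~(a) in the degree-$0$ case needs the small observation above rather than part~(ii) verbatim, and that the reductions genuinely keep the graph a simple forest so that parts~(ii)–(iv) keep applying.
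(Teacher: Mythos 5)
Your proof is correct and follows essentially the same approach as the paper's: compute a minimum spanning tree/forest, then repeatedly remove non-marked leaves and contract isolated paths, justify via \Cref{lem:sufficient_conditions_for_compression} parts (i)--(iv), and bound $|V(H)|$ by a degree count in a forest. You supply a few details the paper leaves implicit (the per-component MST for disconnected $G$, the degree-$0$ case via \Cref{lem:simple_condition_for_compression}, and the explicit $D_0+L+D_2+B\le 3|\nabla(G)|$ computation), but these are elaborations, not a different argument.
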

\begin{proof}
We compute $T(G)$ using a standard algorithm, repeatedly remove any leaf that is not in $\nabla(G)$ and contract all isolated paths of length at least $2$. By \Cref{lem:sufficient_conditions_for_compression} (part (i),(ii),(iii)), the resulting graph compresses $G$ with respect to $\nabla(G)$. The algorithm clearly has linear space complexity. It thus remains to argue that $|H| = O(|\nabla(G)|)$.

Any vertex in $H$ of degree $<3$ must be in $\nabla(G)$, otherwise it would have been either removed or contracted.  Thus, the number of vertices of degree $<3$ in $H$ is $\leq |\nabla(G)|$.  The claim follows as in a tree with $\leq |\nabla(G)|$ nodes of degree $<3$, the total number of vertices is $O(|\nabla(G)|)$. 
\end{proof}

\noindent
We are now ready to prove existence of an algorithm for finding the minimum spanning tree.
\begin{theorem}
Let us have a graph $G$. For any fixed $\epsilon > 0$, there is an algorithm that computes a minimum spanning tree of $G$; it performs  $O(1)$ rounds both in expectation and with high probability, and uses $O(\Ss)$ space per machine and $O(\Mm)$ machines for $\Ss = n^{2/3+\epsilon}$, $\Mm = n^{1/3-\epsilon}$.
\end{theorem}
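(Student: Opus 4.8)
The plan is to mirror the structure of the earlier MPC results: prove a stronger, recursion-friendly claim and then instantiate it with empty parameters. The claim I would prove: for any fixed $\epsilon>0$ there is an MPC algorithm that, given a graph $G$, a marked set $\nabla(G)\subseteq V(G)$ with $|\nabla(G)|\le O(\Ss)$, and a ``context'' graph $F$ with $V(F)\cap V(G)\subseteq\nabla(G)$ and $|F|\le O(\Ss)$, returns (a) a graph $H$ with $V(H)\supseteq\nabla(G)$ on $O(|\nabla(G)|)$ vertices that compresses $G$ with respect to $\nabla(G)$ in the sense of \Cref{def:compression_mst}, and (b) the edge set $T(G\cup F)\cap E(G)$; it runs in $O(1)$ rounds in expectation and with high probability, and uses $O(\Ss)$ space per machine and $O(\Mm)$ machines for $\Ss=n^{2/3+\epsilon}$, $\Mm=n^{1/3-\epsilon}$. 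The theorem is then the special case $\nabla(G)=\emptyset$ and $F$ empty, for which $T(G\cup F)\cap E(G)=T(G)$.

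In the base case $|G|\le O(\Ss)$ the whole of $G\cup F$ fits on one machine, so I compute $T(G\cup F)$ sequentially and output $T(G\cup F)\cap E(G)$, and I obtain $H$ from \Cref{lem:compress_a_graph} applied to $(G,\nabla(G))$. In the recursive case I compute an $r$-pseudodivision $\Rr$ of $G$ with $r=|G|^{1-\epsilon}$ using \Cref{alg:MPC_meta_alg}, let $E_R$ be the edges assigned to region $R$, and set $\nabla(R)=\partial(R)\cup(\nabla(G)\cap R)$. In Phase 1, I recursively compute, for each $R$, a graph $H_R$ on $O(|\nabla(R)|)$ vertices compressing $G[R]$ with respect to $\nabla(R)$ (passing an empty context, discarding the returned edge set). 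With $G'=\bigcup_R H_R$, replacing $G[R]$ by $H_R$ one region at a time and using \Cref{lem:sufficient_conditions_for_compression}(v) --- whose proof actually yields compression with respect to the enlarged set $\nabla(G)\cup\partial(\Rr)$ --- together with part (iv), shows $G'$ compresses $G$ with respect to $\nabla(G)$; running \Cref{lem:compress_a_graph} on $(G',\nabla(G))$ and applying part (iv) again gives $H$ with $|H|=O(|\nabla(G)|)$. In Phase 2, I gather the $H_R$ on one machine, broadcast $G''=F\cup G'$, form $\hat F_R=F\cup\bigcup_{R'\ne R}H_{R'}$ for each $R$, recursively compute $M_R=T(G[R]\cup\hat F_R)\cap E_R$, and return $M=\bigcup_R M_R$ along with $H$.

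Correctness of Phase 2 is the crux. Writing $G^\ast=G\cup F$ and $Q_R=F\cup\bigcup_{R'\ne R}G[R']$ so that $E(G^\ast)=E_R\sqcup E(Q_R)$, I would check: (i) $\partial_{G^\ast}(Q_R)\subseteq\nabla(R)$, because any vertex of $Q_R$ adjacent to an interior vertex of $R$ lies in $\partial(R)$ (the connecting edge is assigned to $R$); and (ii) $\hat F_R$ compresses $Q_R$ with respect to $\nabla(R)$, obtained by replacing each $G[R']$ in $Q_R$ by $H_{R'}$ and invoking \Cref{lem:sufficient_conditions_for_compression}(v),(iv) as in Phase 1, using that a compression with respect to a set is also one with respect to any subset (immediate from \Cref{lem:simple_condition_for_compression}(ii)) and that every vertex of $\partial(\Rr)\cap R'$ lies in $\partial(R')$. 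Then \Cref{def:compression_mst} with $Q_R$, $G^\ast$, $\hat F_R$ in the roles of $G$, $G'$, $H$ gives $T(G^\ast)\cap E_R=T(G[R]\cup\hat F_R)\cap E_R=M_R$, and summing over $R$ yields $M=T(G\cup F)\cap E(G)$. The recursive call on $(G[R],\nabla(R),\hat F_R)$ is legitimate: $|\nabla(R)|,|\hat F_R|\le O(\Ss)$ by the boundary bound of \Cref{lem:mpc_meta_alg}, and $V(\hat F_R)\cap V(G[R])\subseteq\nabla(R)$ by the same $\partial(\Rr)\cap R'$ observation.

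The resource analysis is identical to the proof of \Cref{thm:MPC_connected_components}: each recursive call receives a graph on $O(|G|^{1-\epsilon})+O(\Ss)$ vertices and the recursion stops at size $O(\Ss)$ with $\Ss=n^{2/3+\epsilon}$, so the depth is a constant depending only on $\epsilon$; each level performs $O(1)$ MPC rounds, dominated by \Cref{alg:MPC_meta_alg} and the broadcast of $G''$; and the accumulated marked sets and contexts stay of size $O(\Ss^{1-\epsilon'})$ for some $\epsilon'>0$ (a constant-length sum of terms each at most $O(\sqrt{rn\log n/\Ss})\cdot O(n\log n/r)=O(\Ss^{1-\epsilon'})$), exactly the slack the broadcast and convergecast trees need to fit in $O(\Ss)$ space per machine. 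I expect the only genuinely delicate part to be the Phase-2 bookkeeping of which graph compresses which with respect to which set --- in particular the choice to carry a compressed context $F$ of the rest of the graph rather than merely recursing on $G[R]$; everything MPC-specific reduces to the connected-components argument.
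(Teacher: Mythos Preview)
Your proposal is correct and follows essentially the same approach as the paper: you state the same strengthened recursion-friendly claim (a compression subroutine plus a routine that, given a compressed ``context'' of the rest of the graph, returns the MST edges inside $G$), and your Phase~1/Phase~2 are exactly the paper's algorithms~(1) and~(2), with your $F$ playing the role of the paper's $G'$ and your $\hat F_R$ its $\bar H_R\cup G'$. The only difference is presentational---you package the two subroutines into a single claim and spell out the Phase~2 correctness bookkeeping (which sets compress which with respect to which boundaries) in more detail than the paper does.
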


\noindent
It is again more convenient to prove a stronger claim. We can recover the theorem by using part $(2)$ with $\nabla(G) = \emptyset$, and $G'$ being the empty graph.
\begin{claim}
Let us have a graph $G$ and a set of marked vertices $\nabla(G)$ with $|\nabla(G)| = k \leq O(\Ss)$. Then, for any fixed $\epsilon > 0$, there is 
\begin{enumerate}[(1)]
    \item an algorithm that returns a graph $H$ with $V(H) \supseteq \nabla(G)$ on $O(k)$ vertices that compresses $G$ with respect to $\nabla(G)$
    \item an algorithm that, given a graph $G'$ with $G' \cap G \subseteq \nabla(G)$ and $\|G'\| \leq O(\Ss)$, computes $T(G \cup G') \cap G$
\end{enumerate}
and both algorithms perform $O(1)$ rounds both in expectation and with high probability, and use $O(\Ss)$ space per machine and $O(\Mm)$ machines for $\Ss = n^{2/3+\epsilon}$, $\Mm = n^{1/3-\epsilon}$.
\end{claim}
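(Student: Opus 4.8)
The plan is to prove both parts by the same recursion used for counting connected components and bipartition, with \Cref{lem:sufficient_conditions_for_compression} (especially parts (iv) and (v)) playing the role of the ``gluing'' lemmas there, and \Cref{lem:compress_a_graph} handling the one-machine base cases. Throughout, the $r$-pseudodivisions are computed via \Cref{alg:MPC_meta_alg}, so that by \Cref{lem:mpc_meta_alg} each region sits on a contiguous block of machines and can be recursed on independently, and the small graphs produced by the recursive calls can be shipped to a single machine.

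For algorithm $(1)$, the base case $|G| = O(\Ss)$ fits on one machine, so we invoke \Cref{lem:compress_a_graph} directly to get $H$ with $|H| = O(|\nabla(G)|)$ compressing $G$ with respect to $\nabla(G)$. In the recursive case, compute an $r$-pseudodivision $\Rr$ of $G$ for $r = |G|^{1-\epsilon}$; for each $R \in \Rr$ set $\nabla(R) = \partial(R) \cup (\nabla(G) \cap R)$ and recursively obtain $H_R$ with $|H_R| = O(|\nabla(R)|)$ compressing $G[R]$ with respect to $\nabla(R)$. Replacing each $G[R]$ by $H_R$ one region at a time and applying the (slightly stronger) conclusion of \Cref{lem:sufficient_conditions_for_compression}(v) repeatedly shows that $G^{\star} := \bigcup_{R\in\Rr} H_R$ compresses $G$ with respect to $\nabla(G) \cup \partial(\Rr) \supseteq \nabla(G)$. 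Since $|G^{\star}| = O\big(\sum_R |\nabla(R)|\big) = O(|\partial(\Rr)| + |\nabla(G)|) = O(\Ss)$ by the boundary bound of \Cref{lem:mpc_meta_alg} (exactly as in \Cref{thm:MPC_connected_components}), we ship $G^{\star}$ to one machine and apply \Cref{lem:compress_a_graph} once more to obtain $H$ with $|H| = O(|\nabla(G)|)$; by \Cref{lem:sufficient_conditions_for_compression}(iv), $H$ compresses $G$ with respect to $\nabla(G)$.

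For algorithm $(2)$, the base case $|G| = O(\Ss)$ is again trivial: $G \cup G'$ fits on one machine (using $\|G'\| = O(\Ss)$), so we compute $T(G \cup G')$ and return $T(G \cup G') \cap G$. In the recursive case, compute an $r$-pseudodivision $\Rr$ of $G$ for $r = |G|^{1-\epsilon}$ and run algorithm $(1)$ on each $G[R]$ with $\nabla(R) = \partial(R) \cup (\nabla(G) \cap R)$ to get $H_R$ compressing $G[R]$ with respect to $\nabla(R)$, $|H_R| = O(|\nabla(R)|)$. Fix a region $R$, let $G^{\star}_R$ be obtained from $G \cup G'$ by replacing $G[R']$ by $H_{R'}$ for every $R' \neq R$, and let $\tilde G_R := G^{\star}_R \setminus \mathrm{int}(R)$ be its part outside the interior of $R$. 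Applying \Cref{def:compression_mst} to these replacements one at a time --- checking each time that the current boundary of $G[R']$ stays inside $\nabla(R')$ (a vertex of $G[R']$ with a neighbour outside $G[R']$ has that neighbour either in $G'$, so the vertex lies in $\nabla(G)\cap R'$, or in an already-placed $H_{R''}$, so the vertex lies in $\partial(R'')\cap R' \subseteq \partial(R')$) --- gives $T(G^{\star}_R) \cap G[R] = T(G \cup G') \cap G[R]$, shows $\tilde G_R$ meets $G[R]$ only in $\nabla(R)$, and gives $\|\tilde G_R\| \le \|G'\| + \sum_{R'\neq R}\|H_{R'}\| = O(\Ss)$ (each $H_{R'}$ is a forest on $O(|\nabla(R')|)$ vertices). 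We then recurse on $G[R]$ with $\tilde G_R$ in the role of $G'$, obtaining $T(G[R] \cup \tilde G_R) \cap G[R] = T(G \cup G') \cap G[R]$, and output $\bigcup_{R}\big(T(G \cup G') \cap G[R]\big) = T(G \cup G') \cap G$, which is correct since every edge of $G$ lies in some $G[R]$ and the recursive answers agree on edges shared between regions. (Parallel edges on boundaries created when forming $\tilde G_R$ are harmless for MSTs, or can be avoided by assigning each edge a canonical region up front.)

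The accounting for round complexity, space per machine, number of machines, and the invariant $|\nabla(R)| \le o(n^{1/3}) \le O(\Ss)$ throughout is identical to the proof of \Cref{thm:MPC_connected_components}: $|G|$ drops by a $(1-\epsilon)$ power per level, so with $\Ss = n^{2/3+\epsilon}$ the recursion has $O(1)$ depth; each level adds only $O(\sqrt{r n\log n/\Ss}) = o(n^{1/3})$ boundary vertices to $\nabla$; every call to \Cref{alg:MPC_meta_alg} costs $O(1)$ rounds in expectation and with high probability; and the one-machine steps use $O(\Ss)$ space because $|G^{\star}|$, $|H|$, and $\|\tilde G_R\|$ are all $O(\Ss)$. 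I expect the main obstacle to be the bookkeeping in algorithm $(2)$: verifying that replacing the other regions one at a time is legitimate because the evolving cut vertices never leave $\nabla(R')$, so that $\tilde G_R$ genuinely compresses ``everything outside $R$'' with respect to $\nabla(R)$, and hence that the recursive call returns exactly $T(G \cup G') \cap G[R]$. Once that is set up, parts (iv) and (v) of \Cref{lem:sufficient_conditions_for_compression} together with the size estimates make the rest routine.
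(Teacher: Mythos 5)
Your proposal follows essentially the same approach as the paper: algorithm $(1)$ is the same recursion using \Cref{lem:compress_a_graph} for base/top-level compression and \Cref{lem:sufficient_conditions_for_compression} parts (iv) and (v) for gluing, and algorithm $(2)$ replaces all regions except $R$ by their compressions and recurses on $G[R]$ with the resulting auxiliary graph, just as the paper does with $\bar H_R \cup G'$. The only difference is that you spell out a bit more of the bookkeeping (verifying at each replacement step that the evolving boundary stays inside $\nabla(R')$, and noting that parallel boundary edges are harmless), which the paper leaves implicit.
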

\begin{proof}
\strut
For algorithm $(1)$, in the case when $|G| \leq O(\Ss)$, we may use \Cref{lem:compress_a_graph} to compress $G$ with respect to $\nabla(G)$. 

\paragraph{Recursive case of algorithm $(1)$.} We compute an $r$-pseudodivision $\Rr$ of $G$ for $r = |G|^{1-\epsilon}$. Let $\nabla(R) = \partial(R) \cup (\nabla(G) \cap R)$. We recursively compute $H_R$ for every $R \in \Rr$ that compresses $G[R]$ with respect to $\nabla(R)$. For this we need $|\nabla(R)| \leq O(\Ss)$. Let $G' = \bigcup_{R \in \Rr} H_R$. The graph $G'$ can be obtained from $G$ by repeatedly replacing a subgraph $G[R]$ by the graph $H_R$ for all $R \in \Rr$. It follows from \Cref{lem:sufficient_conditions_for_compression}, part $(v)$, that $G'$ compresses $G$ with respect to $\nabla(G)$. We then use \Cref{lem:compress_a_graph} to get $H$ of order $O(|\nabla(G)|)$ that compresses $G'$, and thus also $G$ (by \Cref{lem:sufficient_conditions_for_compression}, part $(iv)$), with respect to $\nabla(G)$.

It remains to argue that the algorithm uses $O(\Ss)$ space per machine, that it performs in expectation $O(1)$ rounds and that in all recursive calls and for all regions $R$, it holds $|\nabla(R)| \leq O(\Ss)$. The argument is exactly the same as that in the proof of \Cref{thm:MPC_connected_components}, and we do not repeat it here.

\paragraph{Base case of algorithm $(2)$.} When the graph fits onto one machine, we may compute the minimum spanning tree using a standard sequential algorithm.

\paragraph{Recursive case of algorithm $(2)$.} We compute an $r$-pseudodivision $\Rr$ of $G$ for $r = |G|^{1-\epsilon}$. Let $\nabla(R) = \partial(R) \cup (\nabla(G) \cap R)$. For each region $R \in \Rr$, we compute a graph $H_R$ that compresses $G[R]$ with respect to $\nabla(R)$. We let $\bar{H}_R = \bigcup_{R' \in \Rr \setminus \{R\}} H_{R'}$. The graph $G[R] \cup \bar{H}_R \cup G'$ can be obtained from $G \cup G'$ by repeatedly replacing $G[R']$ by $H_{R'}$ for all $R' \in \Rr \setminus \{R\}$. By the repeated application of 
\Cref{def:compression_mst}, we thus have that $T(G[R] \cup \bar{H}_R \cup G') \cap (R\cup G') = T(G \cup V(G')) \cap (R \cup V(G'))$. 
We distribute all the graphs $H_R$ for $R \in \Rr$ to $|\Rr|$ machines. Note that it holds 
\[
O(|\partial(\Rr)|) = \sqrt{r|G| \log |G|/\Ss} \cdot \frac{|G| \log |G|}{r} \leq O(n^{2/3 + \epsilon/2} \log^{3/2} n) \leq O(\Ss/n^{\epsilon/3})
\]
and we may thus do so in $O(1)$ rounds by broadcast trees (see \Cref{sec:prelim_mpc}).
We then for all $R \in \Rr$ (on one machine for each such $R$) recursively compute $T(G[R] \cup \bar{H}_R \cup G') \cap R$ and thus $T(G \cup G') \cap R$; we may do this as $|G' \cup \bar{H}_R| \leq |G'| + |\partial(\Rr)| \leq O(\Ss)$. Taking union over all $R \in \Rr$, we have $T(G)$.

The space per machine is $O(\Ss)$ since $|\partial(\Rr)| \leq O(\Ss)$. The expected round complexity is $O(1)$ by exactly the same argument as in the previous algorithms.
\end{proof}

\subsection{Computing spanners and shortest paths with shortcuts}
We now show an algorithm which we will use as a subroutine for finding approximate shortest paths, and radius/diameter. This problem is, in some sense, more general than these two problems. The motivation behind this is that we give a recursive algorithm and we need this more general problem to make the recursion work.

In MPC, by computing a path $P=e_1 \cdots e_k$, we mean that the edges of the path are stored in their order on the machines. That is, edges $e_{k_i+1}, \cdots, e_{k_{i+1}}$ are stored on machine $i$, for some $0 = \ell_1 < \ell_2 < \cdots < \ell_{k'} = k$ for some value of $k'$.

\begin{lemma} \label{lem:mpc_spanners_paths_with_shortcuts}
Assume that we are given a graph $G = (V,E)$ together with a set $\nabla(G) \subseteq V$ of marked vertices such that $|\nabla(G)| \leq O(n^{2/3})$. Let us have some fixed $\epsilon > 0$. There is an algorithm that
\begin{itemize}
    \item computes a graph $A = (\nabla(G),E')$ such that $|E'| \leq O(n^{(1+\epsilon)2/3})$ and for any $s,t \in \nabla(G)$, it holds $d_G(s,t) \leq d_{A}(s,t) \leq O(1)d_G(s,t)$ (that is, $A$ is a $O(1)$-spanner of $D_G[M]$).
    \item Given $s_1,t_1, \cdots, s_k,t_k \in \nabla(G)$, the algorithm computes $k' \leq k$ disjoint paths $P_1, \cdots, P_{k'}$ where $P_j$ is an $s_{i_{j-1}+1}t_{i_j}$-path where $i_0 = 0, i_{k'}=k$, and $i_j < i_{j+1}$ such that $\sum_{i=1}^{k'} w(P_i) \leq O(1) \sum_{i=1}^k d_G(s_i,t_i)$.
\end{itemize}
It performs $O(1)$ rounds both in expectation and with high probability, and uses $O(\Ss)$ space per machine and $O(\Mm)$ machines for $\Ss = n^{2/3+\epsilon}$, $\Mm = n^{1/3-\epsilon}$.
\end{lemma}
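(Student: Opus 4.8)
The plan is to follow exactly the compression template already used for connected components and MST: state a stronger recursive claim (with the size bounds phrased in terms of $|G|$ and $|\nabla(G)|$, the theorem being the instance with $|G|=n$), and induct on the recursion depth, which will again be $O(1)$ for fixed $\epsilon$. The two bullets are proved together so that the spanner built for the first can be reused inside the second. In the base case $|G|=O(\Ss)$ the whole graph sits on one machine and I would just run sequential subroutines: a greedy (or Baswana--Sen) $(2t-1)$-spanner with $t=\Theta(1/\epsilon)$ for the first bullet, and Dijkstra for the second.

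For the recursive case I would compute an $r$-pseudodivision $\Rr$ with $r=|G|^{1-\epsilon}$ via \Cref{alg:MPC_meta_alg}, set $\nabla(R)=\partial(R)\cup(\nabla(G)\cap R)$, and recurse on every $G[R]$ with marked set $\nabla(R)$; this is legitimate because $|\nabla(R)|\le|\nabla(G)|+|\partial(R)|=O(\Ss)$, exactly as argued in the proof of \Cref{thm:MPC_connected_components}. For the first bullet, let $A_R$ be the spanner returned for $R$ and put $G'=\bigcup_{R\in\Rr}A_R$. Any shortest $s$--$t$ path in $G$ with $s,t\in\nabla(G)$ breaks at its crossings with $\partial(\Rr)$ into subpaths, each contained in one region and with both endpoints in the corresponding $\nabla(R)$; replacing each subpath by its $O(1)$-approximation in the relevant $A_R$ gives $d_{G'}(s,t)\le O(1)\,d_G(s,t)$, while every edge of $G'$ is realized by a path of the same length in $G$, so $d_{G'}\ge d_G$. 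Thus $G'$ is an $O(1)$-spanner of $D_G$ restricted to $\nabla(G)\cup\partial(\Rr)$. Using $\sum_R x_R^{1+\epsilon'}\le(\sum_R x_R)^{1+\epsilon'}$ together with $|\partial(\Rr)|=O(n^{2/3+\epsilon/2}\log^{3/2}n)$ one checks $\|G'\|=O(\Ss)$, so $G'$ can be gathered on one machine and re-sparsified there by one more call to the sequential greedy spanner, producing the output $A$. Multiplicative stretches compose, the recursion depth is $O(1/\epsilon)=O(1)$, hence the final stretch is a (fixed-$\epsilon$) constant and, for $\epsilon'$ chosen small enough in terms of $\epsilon$, the edge count is $O(|\nabla(G)|^{1+\epsilon'})=O(n^{(1+\epsilon)2/3})$.

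For the second bullet, given the query pairs $(s_i,t_i)$ I would first route each pair through $G'$ (equivalently through $A$) on the one machine, obtaining for every $i$ a path $Q_i$ of length $O(1)\,d_G(s_i,t_i)$ consisting of edges of the various $A_R$. Grouping these edges by their region turns the task of ``expanding'' each $A_R$-edge into a genuine $G$-path into a bounded collection of boundary-to-boundary path queries on each $G[R]$, which I answer with the recursive call. Concatenating the region-level paths along each $Q_i$, and then merging pieces that are shared between consecutive queries, yields $k'\le k$ disjoint paths with the prescribed concatenation structure; the total length is $O(1)\sum_i d_G(s_i,t_i)$ because the stretch again compounds only over $O(1)$ levels. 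The round and space accounting is identical to the earlier MPC proofs: $O(1)$ recursion levels, $O(1)$ rounds per level (one invocation of \Cref{alg:MPC_meta_alg} plus broadcast/converge-cast of objects of size $O(\Ss/n^{\Omega(1)})$), and $O(\Ss)$ space per machine since $\|G'\|$ and $|\partial(\Rr)|$ are both $O(\Ss)$.

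The main obstacle I expect is the second bullet: carrying out the ``expand each spanner edge into a real path'' step consistently through the recursion while keeping the returned paths genuinely disjoint---in particular, resolving overlaps between the expansions of different queries by merging them into the $k'\le k$ concatenated paths without inflating the total length, and verifying that all the resulting paths admit a layout on $O(n/\Ss)$ machines in the required order. The first bullet, by contrast, is a routine adaptation of the compression scheme already developed in the paper, the only new ingredient being the invocation of a standard sequential sparse-spanner construction on the recursively compressed graph.
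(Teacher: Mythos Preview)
Your outline for the first bullet is essentially the paper's argument: compute an $r$-pseudodivision with $r=|G|^{1-\epsilon}$, recurse on each region with $\nabla(R)=\partial(R)\cup(\nabla(G)\cap R)$, take the union of the returned spanners, and verify the stretch by cutting a shortest path at its crossings with $\partial(\Rr)$. The only difference is cosmetic: you re-sparsify the union on one machine, whereas the paper just uses the union directly as $A$ (on vertex set $\partial(\Rr)\cup\nabla(G)$) and checks its size against $\Ss$.

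For the second bullet you have the right high-level picture but are missing the concrete merging device, and one step of your plan does not quite type-check. You propose to ``expand each $A_R$-edge into a genuine $G$-path'' via the recursive call and \emph{then} merge; but the recursive call does not return one path per queried pair---it returns shortcut paths that may themselves concatenate several consecutive pairs. The paper handles this as follows. After finding the shortest $s_it_i$-paths $P_i$ in $A$, it feeds to each region $H$ the ordered sequence of endpoint pairs induced on $A_H$ by the $P_i$ and recurses. It then treats every returned shortcut path as a single weighted edge, obtaining a small graph $Q(G)=\bigcup_H A'_H$ on $\partial(\Rr)\cup\nabla(G)$. The merging is done \emph{in $Q(G)$}: take its connected components, and in the component containing $s_{i_{j-1}+1}$ output the shortest path to the largest-index $t_\ell$ present there, setting $i_j=\ell$. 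Disjointness is immediate (one path per component), and the total length is at most $w(Q(G))\le\sum_i w(P_i)\le O(1)\sum_i d_G(s_i,t_i)$. Only after this are the $Q(G)$-edges expanded back to the actual $G$-paths returned by the recursion, with a prefix-sum computation to lay the edges out on consecutive machines. This connected-component-of-$Q(G)$ step is exactly the ingredient you flagged as the obstacle.
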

\begin{proof}
We give a recursive algorithm. It has two parts, each solving one part of the lemma statement.
\paragraph{Base case.} If the graph fits onto one machine, we compute a $(2k-1)$-spanner $A_G$ of $G$ for $k = \lceil 1/\epsilon \rceil$. Such graph can be computed by a standard algorithm (see e.g. \cite{Baswana2007}) in space $O(|\nabla(G)|^{1+1/k}) \leq O(n^{(1+\epsilon)2/3})$ and this is also the space complexity of the spanner.

For each $i \in [k]$, we find a shortest $s_i t_i$-path. Consider the subgraph $Q(G)$ of $G$ created by the union of these paths. For each connected component $K$ in $Q(G)$, we define $i_{\max,K} = \max\{i\,|\, t_i \in K\}$. Let $i_0 = 0$. We set the path $P_j$ for some $j$ as follows. Consider the connected component $K$ containing $s_{i_{j-1}+1}$. We let $P_j$ be the shortest $s_{i_{j-1}+1} t_{i_{\max,K}}$-path and set $i_j = i_{\max,K}$. We end when $i_j = k$.

It clearly holds $i_0 = 0$. The connected component containing $s_{i_{j-1}+1}$ also contains $t_{i_{j-1}+1}$ and thus $i_j \geq i_{j-1}+1$. The paths are disjoint as each is a subset of a different connected component of $Q(G)$. Disjointness implies the first inequality of
\[
\sum_{i=1}^{k'} w(P_i) \leq w(Q(G)) \leq \sum_{i=1}^k d_G(s_i,t_i)
\]
where the second inequality holds because $Q(G)$ is the union of shortest $s_i t_i$-paths.

\paragraph{Recursive algorithm.} We now show the recursive case. It works in two phases, one for each part of the claim. In each successive level of recursion, we execute the algorithm on a graph of size $O(|G|^{1-\epsilon})$. We stop when $|G| \leq n^{2/3+ \epsilon}$. This means that for any fixed $\epsilon$, the depth of recursion is constant. Throughout the proof, we consider $\epsilon$ to be a constant and hide multiplicative $f(\epsilon)$ in the $O$-notation.

\medskip \noindent
\emph{First phase.} We use \Cref{alg:MPC_meta_alg} with $r = |G|^{1-\epsilon}$. By \Cref{lem:mpc_meta_alg}, we get an $r$-pseudodivision $\Rr$ and a memory layout such that each region is stored on consecutive machines and no machine stores edges from multiple regions. This will allow us to run our algorithm recursively, separately on each region, executing it on the group of machines storing the region. For each region $H$, we let the set of marked vertices be $\nabla(H) = (\nabla(G) \cap V(H)) \cup \partial(H)$. It holds that $|\partial(H)| \leq O(\sqrt{r |G| \log |G| / \Ss}) \leq O(n^{2/3})$. This holds on all levels of recursion and the initial set of marked vertices also has size $O(n^{2/3})$ \footnote{With the way we use the caim, it will actually have size $2$, namely it will be equal to $\{s,t\}$. We do not, however, need this.}. The total number of marked vertices of $H$ is thus also $O(n^{2/3})$ as the number of levels of recursion is $O(1)$. 
For each $H \in \Rr$, we recursively 
compute a graph $A_H$ on the set of marked vertices $\nabla(H)$ such that for any $u,v \in \nabla(H)$, we have $d_H(u,v) \leq d_{A_H} \leq O(1)d_H(u,v)$.

Consider the graph $A = \bigcup_{H \in \Rr} A_H$ on the vertex set $V(A) = \bigcup_{H \in \Rr} \nabla(H) = \partial(\Rr) \cup \nabla(G)$. The size of this graph is $O(\frac{|G| \log |G|}{r} \cdot r^{(1+\epsilon)2/3
}) \leq O(n^{2/3 + \epsilon})$ and it thus fits onto one machine. We now argue that for any $s,t \in V(A)$, it holds $d_G(s,t) \leq d_A(s,t) \leq O(1)d_G(s,t)$. Consider a shortest $st$-path in $A$ and denote it by $P_A = v_1 v_2 \cdots v_h$ for some integer $h$. For any $i$, we have that $e_i = v_iv_{i+1} \in A_H$ for some $H \in \Rr$. Since $A_H$ is a spanner of $D_H[M(H))]$, it holds that $d_H(v_i,v_{i+1}) \leq w_A(e_i)$. We then have by the triangle inequality that
\[
d_G(s,t) \leq \sum_{i=1}^h d_H(v_i,v_{i+1}) \leq \sum_{i=1}^h w_A(e_i) = d(P_A) = d_A(s,t)
\]
We now prove the other inequality. Consider a shortest $st$-path $P_G$ in $G$. For $h= |P \cap V(A)|$ and $i \in [h]$, let $v_i$ be the $i$-th vertex of $P_G$ that lies in $V(A)$. (Note the different definition of the $v_i$'s in this case from the definition in the above argument for the other inequality). Note that $s,t$ are marked and they therefore lie in $V(A)$. For any $i$, it holds that $v_i,v_{i+1} \in H$ for some $H \in \Rr$. It holds $d_{A_H}(v_i,v_{i+1}) \leq O(1)d_H(v_i,v_{i+1})$ as $A_H$ is guaranteed to be a $O(1)$-spanner of $D_H[M(H)]$ and it holds $d_{D_H(M(H))}(v_i,v_{i+1}) = d_{H}(v_i,v_{i+1})$. By the triangle inequality, we then have that
\[
d_{A_H}(s,t) \leq \sum_{i=1}^k d_{A_H}(v_i,v_{i+1}) \leq O(1) \sum_{i=1}^k d_H(v_i,v_{i+1}) = O(1) d(P_G) = O(1) d_G(s,t)
\]
This concludes proof of the first part.

\medskip \noindent
\emph{Second phase.} For each $i \in [k]$, we compute a shortest $s_i t_i$-path $P_i$ in $A$ (note that $s_i,t_i \in \nabla(G)$ and thus also $s_i,t_i \in A$). Consider a path $P_i$ and let $e$ be the $j$-th edge of path $P_i$. We define $r_i(e) = (i,j)$ and $r_i(e) = \bot$ if $e \not \in P_i$. 
In what follows, we will be comparing these values; we use the lexicographic order for the tuples and we let $\bot$ to be the smallest element. Let $G_P$ be the union of these paths. For each edge $e \in E(G_P)$, let $r_{\max}(e) = \max_{i\in [k]} r_i(e)$.
Let $i$ be the largest such that $e \in P_i$, we define $t_{\max}(e)$ to be the endpoint of $e$ that comes later in $P_i$. Consider some subgraph $H \in \Rr$. We let $p_0 = (0,0)$. We set values $p_\ell, s_{H,\ell},t_{H,\ell}$ as follows. Consider the smallest pair $(i,j)$ such that $(i,j) \geq p_{\ell-1}$ and such that the $j$-th edge $e$ of $P_i$ lies in $E(A_H)$ (we stop if there is no such tuple). Let $u$ be the endpoint of $e = uv$ that comes first in $P_i$. If $u \neq t_{\max}(e)$, we set $s_{H,\ell} = u$, $t_{H,\ell} = v$, $p_{\ell} = r_{\max}(e)$.

Once we have set the values $s_{H,i},t_{H,i}$ like this, we recursively compute a $O(1)$-approximately shortest $s_{H,i}t_{H,i}$-path in $H$. We let $P_{H,1}, \cdots, P_{H,k_H}$ be the returned paths, where $k_H$ is the number of the paths returned by the recursive call. We let $A_H'$ be the graph on $\nabla(H)$ where for each of these $k_H$ paths, we add an edge between the endpoints of the path with its weight being the length of that path.

Let $Q(G) = \bigcup_{H \in \Rr} A_H'$. We set the values $i_j$ as follows. We let $i_0 = 0$. We consider the connected component of $Q(G)$ that contains $s_{i_{j-1}+1}$ and we let $\ell$ be the greatest such that $t_{\ell}$ lies in this component. We set $i_{j} = \ell$. We let $P'_j$ be the shortest $s_{i_{j-1}+1} t_{i_j}$-path in $Q(G)$. We terminate when $i_j = k$. Note that if this is not the case, we may continue this process. Let $k'$ be the number of such paths that we find.

We now consider one path $P'_j$ (recall that this is a path in the graph $Q(G)$) and we show how to compute a $s_{i_j}t_{i_{j+1}}$-path $P''_j$ in $G$ such that $d_G(P''_j) \leq O(1) d_{Q(G)}(P'_j)$. We will then do this separately for each path $P'_j$. Consider $e_i = uv$, the $i$-th edge in $P'_j$. Since this edge is in $A'_H$ for some $H \in \Rr$, it has to be the case (by the definition of the graph $A'_H$) that we have computed $uv$-path $P_{H,\ell}$ for some value $\ell$ in one of the recursive calls. Let us define $P_{j,i} = P_{H,\ell}$.
We let $P''_j = \bigcup_{i=1}^{|P_j'|} P_{j,i}$. We will show that these paths satisfy the conditions from the statement. First, we show how to rearrange the edges in memory so that each path is stored by consecutive machines in order in which the edges appear in the path.

We compute the prefix sums of the lengths of the paths $P_{j,1}, \cdots, P_{j,|P_j'|}$ as well as the prefix sums of the lengths of the paths $P''_1, \cdots, P''_{k'}$. We call these prefix sums $S_{j,1}, \cdots, S_{j,|P_j'|}$ and $S_1, \cdots, S_{k'}$.
Suppose that the $i$-th edge $e = uv$ of $P_j$ for some $i,j$ lies in $H \in \Rr$ and $u$ comes before $v$ in the order of $P_i$. We send to the machines storing $H$ the value $|(i',j'), s.t. (i',j') \leq (i,j)| = S_{i-1} + S_{i,j-1}$. The position of an edge $e$, denoted $p(e)$, in the desired order in which the edges are to be stored is equal to this value plus the number of edges in $P_{H,i}$ that come before $e$. We then store $e$ on position $p(e) \mod \Ss$ of machine $\lfloor p(e) / \Ss \rfloor$. This ensures that the paths are stored on the machines in the above-mentioned order.

We now argue correctness. The algorithm returns paths $P''_j$ for $j \in [k']$. The path $P''_j$ is an $s_{i_{j-1}+1}t_{i_j}$-path. It clearly holds $i_0 = 0$. Since the connected component of $Q(G)$ that contains $s_{i_{j-1}+1}$, we get that it also contains $t_{i_{j-1}+1}$. We therefore set $i_j \geq i_{j-1}+1$ for all $j \in [k']$. Finally, it holds $i_{k'} = k$; otherwise, consider $i_{k'}$, it has to be the case that $i_{k'} < k$ in which case there is $s_{k'+1}$ in some connected component and the algorithm did not terminate -- a contradiction. It remains to argue that the condition on the total length of the paths $P'_j$ is satisfied.

We will now argue that $\sum_{i=1}^{k'} w(P_i) \leq O(1) \sum_{i=1}^k d_G(s_i,t_i)$. The paths $P_i'$ are disjoint paths in the graph $Q(G)$ (they are disjoint as there is only one path in each connected component). It thus holds
\[
\sum_{i=1}^{k'} w(P'_i) \leq \sum_{e \in Q(G)} w(e)
\]
%
Let $v_{j,i}$ be the $i$-th vertex of $P'_j$. 
By the inductive hypothesis, we have $w(P_{j,i}) \leq O(1) d_G(v_{j,i},v_{j,i+1})$. Therefore, we have 
{\small
\[
w(P''_j) = \sum_{i=0}^{|P_j'|} w(P_{j,i}) \leq O(1) \sum_{i=0}^{|P_j'|} d_G(v_{j,i},v_{j,i+1}) \leq O(1) \sum_{i=0}^{|P_j'|} d_A(v_{j,i},v_{j,i+1}) = O(1) d_A(s_{i_{j-1}+1},t_{i_j}) = O(1) w(P'_j)
\]
}
%
Finally, each edge of $Q(G)$ is in at least one path $P_i$. It thus holds
\[
\sum_{e \in Q(G)} w(e) \leq \sum_{i = 1}^k w(P_i) = \sum_{i=1}^k d_G(s_i,t_i)
\]
Combining these three inequalities, we get
\[
\sum_{i=1}^{k'} w(P_i'') \leq O(1) \sum_{i=1}^{k'} w(P_j') \leq O(1) \sum_{e \in Q(G)} w(e) \leq O(1) \sum_{i=1}^k d_G(s_i,t_i)
\]
as we wanted to show.
\end{proof}

\subsection{Shortest \texorpdfstring{$st$}{st}-path problem}
We now show an algorithm for finding an approximately shortest path between a pair of vertices.
\begin{theorem}
Assume that we are given a graph $G = (V,E)$ and $s,t \in V$. For every fixed $\epsilon > 0$, there is an algorithm that computes an $O(1)$-approximate $st$-path; it performs $O(1)$ rounds both in expectation and with high probability, and uses $O(\Ss)$ space per machine and $O(\Mm)$ machines for $\Ss = n^{2/3+\epsilon}$, $\Mm = n^{1/3-\epsilon}$.
\end{theorem}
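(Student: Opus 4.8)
The plan is to derive the theorem almost immediately from \Cref{lem:mpc_spanners_paths_with_shortcuts}. I would set $\nabla(G) = \{s,t\}$, which has size $2 \leq O(n^{2/3})$, so the hypothesis of the lemma is satisfied. I would then invoke the second item of the lemma with one query pair, namely $(s_1,t_1) = (s,t)$. It returns $k' \leq 1$ disjoint paths $P_1,\dots,P_{k'}$; by the guarantees $i_0 = 0$, $i_{k'} = 1$, and $i_j < i_{j+1}$, the only possibilities are $k' = 1$, in which case $P_1$ is an $s_1 t_1 = st$-path with $w(P_1) \leq O(1)\sum_{i=1}^{1} d_G(s_i,t_i) = O(1)\, d_G(s,t)$, i.e.\ an $O(1)$-approximate shortest $st$-path already laid out across consecutive machines in path order, or $k' = 0$.

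The case $k' = 0$ occurs exactly when $s$ and $t$ lie in different connected components of $G$, since whenever they are connected the construction in the proof of \Cref{lem:mpc_spanners_paths_with_shortcuts} produces $P_1$. To make the algorithm report that no $st$-path exists in that case, I would additionally run the first item of the lemma to obtain the $O(1)$-spanner $A$ on vertex set $\nabla(G) = \{s,t\}$; since $d_G(s,t) \leq d_A(s,t) \leq O(1)\, d_G(s,t)$, the value $d_A(s,t)$ is finite if and only if $s$ and $t$ are connected in $G$, and this can be checked on a single machine because $A$ has $O(n^{(1+\epsilon)2/3}) \leq O(\Ss)$ edges. (Equivalently, one can simply output ``disconnected'' whenever $k' = 0$.)

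All resource bounds are inherited verbatim: \Cref{lem:mpc_spanners_paths_with_shortcuts} performs $O(1)$ rounds both in expectation and with high probability and uses $O(\Ss)$ space per machine and $O(\Mm)$ machines for $\Ss = n^{2/3+\epsilon}$, $\Mm = n^{1/3-\epsilon}$, and the auxiliary connectivity check adds only $O(1)$ further rounds without increasing the space or machine count. There is essentially no technical obstacle beyond what is already handled inside \Cref{lem:mpc_spanners_paths_with_shortcuts}; the only points requiring a (trivial) argument are the detection of the disconnected case and the observation that the ``path stored in order across consecutive machines'' output format of the lemma is exactly the $st$-path output format demanded by the theorem.
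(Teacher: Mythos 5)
Your proposal takes essentially the same route as the paper: instantiate \Cref{lem:mpc_spanners_paths_with_shortcuts} with $\nabla(G)=\{s,t\}$ and the single query pair $(s_1,t_1)=(s,t)$, and read off the returned path. The paper's own proof is precisely this instantiation (your extra remarks on detecting the disconnected case and on the output format are harmless elaborations the paper leaves implicit).
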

\begin{proof}
This is a special case of the shortest paths with shortcuts problem from \Cref{lem:mpc_spanners_paths_with_shortcuts}. Specifically, we let the set $M$ of marked vertices be $\{s,t\}$. We then set $s_1 = s$ and $t_1 = t$. We then have by \Cref{lem:mpc_spanners_paths_with_shortcuts} the desired algorithm.
\end{proof}

\subsection{Diameter/Radius}
Since we give $O(1)$-factor approximation, the two problems are equivalent. We focus on giving approximation algorithm to the diameter problem. This problem is somewhat easier than the shortest path problem in that the answer is one number.
\begin{theorem}
Assume that we are given a graph $G = (V,E)$. For every fixed $\epsilon > 0$, there is an algorithm that computes an $O(1)$-approximation of the diameter and radius of $G$; it performs $O(1)$ rounds both in expectation and with high probability, and uses $O(\Ss)$ space per machine and $O(\Mm)$ machines for $\Ss = n^{2/3+\epsilon}$, $\Mm = n^{1/3-\epsilon}$.
\end{theorem}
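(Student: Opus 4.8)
The plan is to reduce both problems to computing an $O(1)$-approximation of the eccentricity $\mathrm{ecc}_G(v)=\max_{u\in V}d_G(v,u)$ of a single, arbitrarily chosen vertex $v$. Indeed, for connected $G$ we have $\mathrm{ecc}_G(v)\le\mathrm{diam}(G)\le 2\,\mathrm{ecc}_G(v)$ and $\tfrac12\mathrm{ecc}_G(v)\le\mathrm{rad}(G)\le\mathrm{ecc}_G(v)$, so any $O(1)$-approximation of $\mathrm{ecc}_G(v)$ is simultaneously an $O(1)$-approximation of $\mathrm{diam}(G)$ and of $\mathrm{rad}(G)$. (If $G$ is disconnected we first detect this with the algorithm of \Cref{thm:MPC_connected_components}; then $\mathrm{diam}(G)=\infty$, or one works component-wise.) So it suffices to $O(1)$-approximate $\mathrm{ecc}_G(v)$.

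\textbf{The recursive claim.} As in the previous sections, I would prove a stronger, recursion-friendly statement: given $G$, a marked set $\nabla(G)\ni v$ with $|\nabla(G)|=O(n^{2/3})$, a source set $\mathrm{Src}\subseteq\nabla(G)$ and nonnegative labels $\lambda\colon\mathrm{Src}\to\mathbb{R}_{\ge0}$, there is an $O(1)$-round, $O(\Ss)$-space algorithm returning $e$ with $\Phi\le e\le O(1)\,\Phi$, where $\Phi=\max_{u\in V(G)}\min_{s\in\mathrm{Src}}\bigl(\lambda(s)+d_G(s,u)\bigr)$; plain eccentricity is the case $\mathrm{Src}=\{v\}$, $\lambda(v)=0$. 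The base case ($|G|\le O(\Ss)$) is a multi-source Dijkstra on one machine. In the recursive case, compute an $r$-pseudodivision $\Rr$ with $r=|G|^{1-\epsilon}$ via \Cref{alg:MPC_meta_alg}, then: (a) run the spanner part of \Cref{lem:mpc_spanners_paths_with_shortcuts} on $G$ with marked set $\nabla(G)\cup\partial(\Rr)$ to get an $O(1)$-spanner $A$ of $D_G$ restricted to that set (which fits on one machine by the same size bounds as in that lemma); (b) on one machine compute $\hat\mu(x)=\min_{s\in\mathrm{Src}}\bigl(\lambda(s)+d_A(s,x)\bigr)$ for every $x\in\partial(\Rr)$; (c) recurse on each region $R$ with $\nabla(R)=\partial(R)\cup(\nabla(G)\cap R)$, $\mathrm{Src}_R=\partial(R)\cup(\mathrm{Src}\cap R)$, and labels $\hat\mu$ on $\partial(R)$ and $\lambda$ on $\mathrm{Src}\cap R$ (taking the minimum of the two if a vertex qualifies for both), obtaining $e_R$; (d) return $\max_{R\in\Rr}e_R$.

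\textbf{Correctness.} The heart is the following: for $u$ in a region $R$,
\[
\min_{s \in \mathrm{Src}}\bigl(\lambda(s) + d_G(s,u)\bigr)
 = \min\Bigl(\min_{s \in \mathrm{Src}\cap R}\bigl(\lambda(s) + d_{G[R]}(s,u)\bigr),\
   \min_{x \in \partial(R)}\bigl(\mu(x) + d_{G[R]}(x,u)\bigr)\Bigr),
\]
where $\mu(x)=\min_{s\in\mathrm{Src}\setminus R}\bigl(\lambda(s)+d_G(s,x)\bigr)$: a shortest walk from a source to $u$ that ever leaves $R$ must cross $\partial(R)$, and the part after its \emph{last} boundary vertex stays inside $R$ because no edge joins the interiors of two different regions. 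Since $\hat\mu\ge\mu$ and $d_{G[R]}\ge d_G$, every term in the recursive instance on $R$ is $\ge$ the true value $\min_{s\in\mathrm{Src}}(\lambda(s)+d_G(s,u))$; and picking $s$ to be either the true argmin (if its walk stays in $R$) or the last boundary vertex $x$ of that walk (using $\hat\mu(x)\le O(1)\,\mu(x)$-type bounds from the $O(1)$-spanner $A$) exhibits a term that is $\le O(1)$ times the true value. Thus each region's instance computes, up to a constant factor, $\max_{u\in R}\min_{s\in\mathrm{Src}}(\lambda(s)+d_G(s,u))$, and since $\Rr$ covers $V(G)$, taking the maximum over $R$ gives $\Phi\le\max_R e_R\le O(1)\Phi$; the recursion depth is $O(1)$, so the accumulated constant factor is still $O(1)$ (with the usual $f(\epsilon)$ hidden in $O(\cdot)$). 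The resource bounds are argued exactly as in \Cref{thm:MPC_connected_components} and \Cref{lem:mpc_spanners_paths_with_shortcuts}: depth $O(1)$ for $\Ss=n^{2/3+\epsilon}$, $O(1)$ rounds per level, and $|\nabla(R)|,|\mathrm{Src}_R|,|\partial(\Rr)|$ and the spanner $A$ all of size $O(n^{2/3+O(\epsilon)})\le O(\Ss)$, with $|\nabla(G)|$ and $|\mathrm{Src}|$ growing by only $o(n^{1/3})$ per level. I expect the main obstacle to be making the "stays inside $R$'' argument and the constant-factor stretch accounting across recursion levels fully rigorous; a secondary subtlety, absent from the earlier single-pass recursions, is that this algorithm touches the regions in two ordered rounds — build $A$, compute the labels $\hat\mu$, then recurse with those labels — which must be reflected in the round-complexity bookkeeping.
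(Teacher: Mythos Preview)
Your proof is correct but takes a genuinely different route from the paper. The paper approximates the diameter directly, without the detour through eccentricity or labeled multi-source distances: it computes an $r$-pseudodivision $\Rr$ with $r=|G|^{1-\epsilon}$, recursively approximates the diameter of each region, builds (via \Cref{lem:mpc_spanners_paths_with_shortcuts}) an $O(1)$-spanner $A$ of $D_G[\partial(\Rr)]$, lets $L$ be the maximum of $\mathrm{diam}(A)$ and the regions' approximate diameters, and returns $3L$. Correctness is a one-paragraph argument: split any longest shortest path into at most three pieces --- the prefix before its first vertex in $\partial(\Rr)$, the infix between its first and last such vertices, and the suffix after the last --- and observe that the longest piece is captured either by a single region's diameter (prefix or suffix, which lie entirely inside one region) or by $\mathrm{diam}(A)$ (infix); the converse bound $L\le O(1)\,\mathrm{diam}(G)$ is immediate. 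This sidesteps all of your label-passing machinery and the two-phase ordering you correctly flag as a subtlety. The tradeoff is that your framework is more general --- it actually produces an approximate eccentricity of a chosen vertex and would extend naturally to approximate single-source distances --- whereas the paper's three-piece trick is tailored to diameter/radius and buys a substantially shorter proof.
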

\begin{proof}
If the graph fits onto one machine, we compute the diameter exactly. We now deal with the case when this the graph does not fit onto one machine.
We use \Cref{alg:MPC_meta_alg} with $r = |G|^{1-\epsilon}$. By \Cref{lem:mpc_meta_alg}, we get an $r$-pseudodivision $\Rr$ and a memory layout such that each region is stored on consecutive machines and no machine stores edges from multiple regions. We recursively compute the approximate diameters of all regions. Since the depth of recursion is $O(1)$ (as each time, we recurse on graphs of size $\leq |G|^{1-\epsilon}$ and stop when $|G| \leq O(n^{2/3+\epsilon})$), this does not increase the asymptotic round complexity of the algorithm.
We use \Cref{lem:mpc_spanners_paths_with_shortcuts} to get a $O(1)$-spanner $A$ of $D_G[\partial(\Rr)]$. We find the diameter of $A$. Let $L$ be the maximum of this diameter and the approximate diameters of the regions of $\Rr$. We return $3L$.

We now argue correctness. We first prove that $diam(G) \leq 3L$. The argument is by induction. When the whole graphs fits into memory, the claim is true as we compute $diam(G)$ exactly; this proves the base case. Consider $u,v \in V(G)$ such that $d(u,v)$ is maximal and consider a shortest $uv$-path $P$. We divide $P$ into three parts -- the first between $u$ and the first vertex of $P$ that lies in $V(A)$, second between the first and last vertex of $P$ that lie in $V(A)$, and third between the last vertex of $P$ that lies in $V(A)$ and $v$. Let $P'$ be the longest of these three parts. It holds $d(P') \geq d(P)/3$. If $P'$ is the middle part of $P$, then we have $L \geq d_A(P') \geq d_G(P)/3$ and thus $3L \geq d_G(P) = diam(G)$. If $P'$ is either the first or the third part, it is the case that $P'$ is contained in one region $H \in \Rr$. Then the recursive call on $H$ returns value at least $d_H(G)$ by the inductive hypothesis. It then holds $L \geq d_H(G) \geq d(P)/3$ and thus $3L \geq d_G(P) = diam(G)$.

It remains to prove that $3L \leq O(1)diam(G)$. The argument is again by induction. The base case is trivial as in the case where the whole graph fits into memory, we compute the diameter exactly. The diameters of the regions of $\Rr$ are no greater than the diameter of $G$. Thus, the recursive calls return answer $\leq O(1) diam(G)$. Since $A$ is a spanner of $D_G[\partial(\Rr)]$, it holds for any $u,v \in \partial(\Rr)$ that $d_A(u,v) \leq O(1) d_G(u,v)$ and thus $diam(A) \leq O(1) diam(G)$. It follows that $L \leq O(1) diam(G)$, as we wanted to prove.
\end{proof}


\section{Finding small separators} \label{sec:sublinear_separators}
We start by giving several definitions that we will need in this section. These definitions are either new (such as (hybrid) $r$-pseudodivisions) or they differ slightly from the way they are sometime used (such as $q$-cuttings). We start by defining two relaxed versions of the $r$-division.
\begin{definition} \label{def:hybrid_division}
A $r$-pseudodivision is a division that satisfies condition $(1)$, i.e. that $|R|\leq r$ for $R\in\Rr$. Bounds on $|\partial(R)|$ and $|\Rr|$ like those in $(2),(3)$ have to then be specified explicitly.

A hybrid division is an arbitrary system of subsets of $V(G)$. A hybrid $r$-pseudodivision is a hybrid division satisfying condition $(1)$.
\end{definition}
The (hybrid) $r$-pseudodivisions have the advantage that they are simpler to compute while they are sufficient for many of the applications that $r$-divisions are commonly used for.

Given a hybrid $r$-division, one may easily get an $O(r)$-division in linear time. This may be done by, for any boundary edge $uv$, adding $v$ to the region of $u$ and symmetrically adding $u$ to the region of $v$. While this may be done by a linear-time sequential algorithm, this cannot be performed in sublinear time, which is the reason we need this notion.

Note that if we have a $r$-pseudodivision with $k$ regions and boundary sizes $\leq \ell$, we may get a $r k / k'$-pseudodivision with $k'$ regions and boundary sizes $\leq \ell k/k'$ by merging regions. It is thus possible to assume that the number of regions is always $\leq n/r$, at the cost of larger $r$ and larger boundaries. For this reason, we do not focus too much on minimizing the number of regions as we can always ensure that, if we need it.

We now give definitions of plane partitions and good cuttings.

\begin{definition}
An open (convex polygonal) plane partition is a partition of $\mathbb{R}^2$ into sectors $S_1, \cdots, S_k$, such that for every $i \in [k]$, the closure of $S_i$ is a polygon and each of its 
edges (excluding endpoints) and vertices belong to $S_j$ for some $j \in [k]$ (that is, one edge (excluding endpoints) or vertex may not lie in multiple $S_j$'s).
\end{definition}

As we only consider convex polygonal plane partitions in this paper, we leave out the ``convex polygonal" and only talk about plane partitions.

We now give a definition of a $q$-cutting.
This notion is central to our paper -- our algorithm for finding $r$-divisions works by first finding a cutting of the input into not too many sets, and then using it to find the division.
The definition we use differs slightly from the commonly used definition in that we require that it is a plane partition and not some plane partition up to a set of measure zero. Our definition also differs in that we treat vertices and edges separately, whereas in the previous work, a point would be considered a degenerate line segment.

\begin{definition} \label{def:good_partition}
We say a plane partition is a $q$-cutting if
\begin{enumerate}
    \item each sector contains $\leq n/q$ vertices
    \item the number of edges intersecting any sector is $\leq n/q$
\end{enumerate}
We say that a $q$-cutting is \emph{good} for a parameter $\delta$ \footnote{We will be using $\delta$ as the failure probability for our algorithms.}, if
\begin{enumerate}
    \setcounter{enumi}{2}
    \item the total number of sectors is $O\big(q (\log n + \log \delta^{-1})\big)$
\end{enumerate}
\end{definition}


\noindent
Given a plane partition, we define its sector graph as follows:

\begin{definition} \label{def:sector_graph}
Let us have a plane partition $\Pp$. We define the sector graph $G_s(\Pp)$ as follows. The vertex set is the set of sectors. There is an edge between vertices corresponding to two sectors $s_1,s_2$ if and only if there exists point $x \in s_1$ such that there exist points in $s_2$ arbitrarily close to $x$ or $x \in s_2$ and such points in $s_1$.
\end{definition}

Note that while it may seem at first glance that the sector graph is equal to the dual graph, this is not the case. The reason is that the $x \in s_1$ may coincide with a vertex $v$ of $\Pp$, resulting in an edge in the sector graph between vertices corresponding to $s_1$ and $s_2$ even if the sectors only meet at $v$ but do not share an edge. The edge set of the sector graph is thus a (potentially strict) superset of the dual graph's edge set.



\subsection{Finding a good \texorpdfstring{$s/n$}{s/n}-cutting}
In this section, we show how to find a good $s/n$-cutting. Our algorithm is a modification of the algorithm from \cite{Clarkson1989}. We show the algorithm so as to be self-contained, but also because it differs slightly from the algorithm of \cite{Clarkson1989} as we have to deal with vertices separately.

We assume that no two vertices have the same $x$-coordinate. This can be satisfied with high probability by rotating (lazily, whenever we query a vertex or edge) the input by an angle picked uniformly at random from a large enough set.

We interpret a trapezoid decomposition as a plane partition, by arbitrarily putting each edge and vertex into one of the sectors. In order to be able to do this consistently in a distributed fashion, we may pick a vector and put each vertex into the sector that is incident in the direction of the vector. We then put each edge into the sector that its left endpoint is in.


\begin{algorithm}
$E' \leftarrow$ sample $3 n (2\log n + \log \delta^{-1}) / s$ edges\\
$V' \leftarrow$ sample $n (2\log n + \log \delta^{-1}) / s$ vertices\\
Compute the trapezoidal map of $E' \cup V'$ using the algorithm from \cite{Mulmuley1990}, let $\mc{D}$ be the returned data structure for point location for this trapezoidal map.\\
\Return{$\mc{D}$}

\caption{Find a good $s/n$-cutting of the plane} \label{alg:partition_plane}
\end{algorithm}
 
\begin{lemma} \label{lem:main_algorithm_preprocess_r_div}\sloppy
Given $s$ and $\delta > 0$, \Cref{alg:partition_plane} returns a good $s/n$-cutting with probability at least $1-O(\delta)$, represented by a data structure $\mc{D}$ that answers point location queries in time $O(\log (n/s))$. 
It has query complexity $O\Big(\frac{n (\log n + \log \delta^{-1})}{s}\Big)$ and time complexity $O\Big(\frac{n \log (n/s)(\log n + \log \delta^{-1})}{s}\Big)$.
\end{lemma}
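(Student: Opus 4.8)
The plan is to verify the three defining properties of a good $s/n$-cutting (Definition~\ref{def:good_partition} with $q = s/n$, so the per-sector bounds are $\le n/q = s$) together with the stated complexity bounds. The complexity claims are immediate: we sample $\Theta(n(\log n + \log\delta^{-1})/s)$ edges and vertices (that many random-edge and random-vertex queries), and by \citet{Mulmuley1990} the trapezoidal map of a set of $m$ $x$-monotone segments is built in $O(m\log m)$ time and supports point location in $O(\log m)$; plugging $m = \Theta(n(\log n + \log\delta^{-1})/s)$ gives the asserted time $O\big(\tfrac{n\log(n/s)(\log n + \log\delta^{-1})}{s}\big)$ and query complexity. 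Property~(3) (the number of sectors is $O(q(\log n + \log\delta^{-1})) = O(\tfrac{s}{n}(\log n + \log\delta^{-1}))$\,---\,wait, this needs care) actually holds deterministically: a trapezoidal map of $m$ segments has $O(m)$ trapezoids, so the number of sectors is $O(n(\log n + \log\delta^{-1})/s)$, which is exactly $O(q^{-1}(\log n + \log\delta^{-1}))$ in the $q=s/n$ normalization the paper uses here; one just has to match conventions with Definition~\ref{def:good_partition}. So (3) is for free and deterministic.

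The real content is properties (1) and (2): with probability $\ge 1 - O(\delta)$, \emph{every} sector contains at most $s$ input vertices and is crossed by at most $s$ input edges. This is the standard Clarkson random-sampling / $\varepsilon$-net style argument, which I would run as follows. First handle vertices. A trapezoid (sector) of the decomposition of $E'\cup V'$ is determined by a bounded number of the sampled objects (its top edge, bottom edge, and the left/right vertical walls, each anchored at a sampled endpoint or sampled vertex) --- call this the ``defining set,'' of size $O(1)$. The key observation: if some sector $\sigma$ contains $> s$ input vertices, then consider the sub-sample restricted to the relevant slab; with the sample rate $n(2\log n + \log\delta^{-1})/s$, the probability that a fixed ``bad'' trapezoid (one that would contain $>s$ vertices) survives --- i.e. that its $O(1)$ defining objects are all sampled while none of the $>s$ vertices it contains is sampled --- is at most $(1 - s/n)^{s \cdot n(2\log n+\log\delta^{-1})/s \cdot (1/s)}$... more cleanly: the probability a fixed set of $> s$ vertices is missed by a sample of size $\Theta(n(\log n+\log\delta^{-1})/s)$ drawn with replacement is $\le e^{-\Theta(s \cdot (\log n + \log\delta^{-1})/s)} = e^{-\Theta(\log n + \log\delta^{-1})} \le \delta/\mathrm{poly}(n)$. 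Since the number of ``candidate'' trapezoids (over all choices of $O(1)$ defining objects among the at most $n$ vertices and $n$ edge-endpoints) is $\mathrm{poly}(n)$, a union bound over all candidate bad trapezoids gives failure probability $O(\delta)$. The same argument, verbatim, bounds the number of edges crossing any sector, because an edge crossing a trapezoid $\sigma$ without contributing to its defining set is ``missed,'' and again the probability that $>s$ such crossing edges are all simultaneously missed is $e^{-\Theta(\log n + \log\delta^{-1})}$; union bound over $\mathrm{poly}(n)$ candidate trapezoids finishes it. Combining the vertex and edge failure events costs another factor of $2$, still $O(\delta)$.

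The main obstacle is getting the union-bound bookkeeping exactly right in this mixed vertex/edge setting, which is precisely why the paper re-derives the algorithm rather than citing \citet{Clarkson1989} directly: one must be careful that (a) a trapezoid of the decomposition of $E'\cup V'$ has a genuinely $O(1)$-size defining set even though sampled \emph{vertices} (not just segment endpoints) can cut vertical walls, (b) ``contains $> s$ vertices'' and ``is crossed by $> s$ edges'' are the two separate conditioned events and each must be shown to force a low-probability ``all defining objects present, all witnesses absent'' configuration, and (c) the count of candidate bad trapezoids is polynomial so that the union bound is affordable against the $e^{-\Theta(\log n + \log\delta^{-1})}$ tail. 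Once the defining-set / candidate-trapezoid framework is set up, each individual probability estimate is a one-line computation of the form $(1 - p)^{k} \le e^{-pk}$ with $pk = \Theta(\log n + \log\delta^{-1})$. I would also note in passing the harmless assumption (stated just before the algorithm) that no two vertices share an $x$-coordinate, enforced with high probability by a random rotation, which makes the trapezoidal map well-defined with vertical walls through each relevant point.
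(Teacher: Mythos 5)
Your proposal takes essentially the same route as the paper: property (3) is deterministic (a trapezoidal map of $m$ segments has $O(m)$ faces), and properties (1)–(2) are handled by a union bound over candidate ``bad'' trapezoids whose per-trapezoid survival probability is an exponential tail of the form $(1 - s/n)^{\text{sample size}}$. The complexity claims are, as you say, immediate from the Mulmuley bound.

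The one place your write-up is looser than the paper's, and where the constants actually bind, is the count of candidate trapezoids. You appeal to a generic ``defining set of size $O(1)$,'' giving $\mathrm{poly}(n)$ candidates and then invoke a union bound ``affordable against the $e^{-\Theta(\log n + \log\delta^{-1})}$ tail.'' But with the sample size exactly as written in Algorithm~\ref{alg:partition_plane} (the $2\log n + \log\delta^{-1}$ factor), the per-trapezoid tail is exactly $\delta/n^2$, which only supports a union bound over $O(n^2)$ candidates. The paper uses the sharper observation that each candidate trapezoid is determined by its top and bottom bounding edges alone, so the candidate count is $O(n^2)$ and the union bound closes with no slack. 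If you instead count candidates via a size-$4$ defining set (top edge, bottom edge, and two wall-anchoring objects), you get $O(n^4)$ candidates and need a correspondingly larger constant in the sample size. So either tighten the candidate count as the paper does, or note that the $2$ in $2\log n$ should be replaced by the defining-set size. This is a constant-factor bookkeeping point and doesn't change the structure of the argument, but since you flagged the union-bound bookkeeping as the crux, it is exactly the detail that needs to be nailed down.
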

\begin{proof}
Let $\Pp$ be the trapezoid map of $E' \cup V'$. We need to prove that $\Pp$ satisfies the conditions 1 - 3 with probability $1-O(\delta)$. The property 3 --- that the number of sectors is $O\big(n (\log n + \log \delta^{-1})/s\big)$ --- amounts to showing that a trapezoidal map has number of faces linear in the number of line segments. This is well known to be true (see e.g. \cite[Lemma 6.2]{geometry_textbook}).

Consider a trapezoid $t$. We show that if $t$ violates property $1$ or $2$, then the probability that $t$ is a trapezoid in $\Pp$ is $\leq O(\delta/n^2)$. Any trapezoid in any possible trapezoidal map is uniquely determined by a pair of edges, namely its lower and upper boundaries. Therefore, there are $O(n^2)$ potential trapezoids. By the union bound, we thus get that none of the violating trapezoids lie in $\Pp$ with probability at least $1-O(\delta)$.

Consider the case that $t$ has size $>s$ --- it violates property $1$. For $t$ to be in $\Pp$, it has to be the case that none of the $>s$ vertices in $t$ have been sampled to $V'$. This happens with probability at most
\[
(1-s/n)^{n (2\log n + \log \delta^{-1})/ s} \leq \exp(- s \log (n^2/\delta)/ s) = \delta/n^2
\]

Consider the case that the sector $t$ is intersected by $>s$ edges --- it violates property $2$. 
For $t$ to lie in $\Pp$, it has to be the case that none of the $> s$ intersecting edges have been sampled into $E'$. As the number of edges is $\leq 3n-6 < 3n$, a uniformly sampled edge intersects $t$ with probability at least $s/(3n)$. The probability that we sample none of the intersecting edges is thus at most
\[
(1-s/(3 n))^{3 n (2\log n + \log \delta^{-1})/ s} \leq \exp(- s \log (n^2/\delta)/ s) = \delta/n^2
\]

The query complexity is clearly as claimed. As for the time complexity, the dominant cost is computing the trapezoidal map. For input of size $n$, the algorithm for computing trapezoidal decomposition from \cite{Mulmuley1990} runs in time $O(n \log n)$. Therefore, the time complexity is $O\Big(\frac{n (\log n + \log \delta^{-1})}{s} \log \frac{n (\log n + \log \delta^{-1})}{s}\Big) \subseteq O\Big(\frac{n (\log n + \log \delta^{-1}) \log (n/s)}{s}\Big)$.
\end{proof}

\begin{remark}
Note that our analysis does not depend on being able to sample edges (or vertices) exactly uniformly. For example, if we may pick an edge from distribution that gives probability $\geq 1/(2m)$ to every vertex, it is sufficient to let $E'$ be a sample of $6n(\log n + \log \delta^{-1})/s$ and basically the same analysis is sufficient to prove correctness of this algorithm.

This means that if we are not able to sample edges uniformly, we may use the algorithm by \citet{Eden2019} that allows us to approximately sample an edge in time $O(\frac{\alpha n}{m} \cdot \frac{\log ^{3} n}{\varepsilon})$ for $\alpha$ being the arboricity of the graph. This is equal to $O(\log ^{3} n)$ as $\alpha \leq 3$ for planar graphs and we may choose $\epsilon = 1/2$.
\end{remark}

\subsection{Finding an \texorpdfstring{$r$}{r}-pseudodivision, given a good cutting}
\begin{algorithm}
\SetKwInOut{Input}{input}
\Input{Graph $G$ with $n$ vertices and $m$ edges, parameter $s$, good $s/n$-cutting $\mc{P}$, represented by point location data structure $\mc{D}$}

Construct the sector graph $G_s$ of $\mc{P}$\\
$\Rr\leftarrow$ Find a (strong) $r'$-division of $G_s$  for $r' = r/s$

\Return{$(\Rr, \mc{D})$}
\caption{Compute an $r$-pseudodivision oracle} \label{alg:main_algorithm_preprocess_r_div}
\end{algorithm}
The interpretation of the output is the following. For each region in $R \in \Rr$, we have a region $R'$ of $G$ consisting of the vertices that lie in a sector whose vertex in $G_s(\Pp)$ is in $R$. An edge lies in the separator if it intersects one of the sectors that are in the boundary of $\Rr$.

\begin{theorem} \label[theorem]{thm:sublinear_separators}
Assume we are given $(\Rr,\mc{D})$ as returned by \Cref{alg:main_algorithm_preprocess_r_div} with parameters $s$ and $\delta>0$. There exists an algorithm that, given a vertex $v \in G$ returns in expected time $O\big(\log ((n \log \delta^{-1})/s)\big)$ and without making any queries  a set of regions $v$ belongs to. These sets form, with probability at least $1-O(\delta)$, a hybrid $r$-pseudodivision $\Rr'$ with boundary size $O(\sqrt{s r})$ per region. The number of regions is $O\big(n (\log n + \log \delta^{-1})/r\big)$. The set implements in $O(1)$ time the following operations: membership, cardinality, iterate through elements (taking $O(1)$ time per element).
\end{theorem}
\begin{proof}
We condition on $\Pp$ being a good $s/n$-cutting. This happens with probability $1-O(\delta)$ by \Cref{lem:main_algorithm_preprocess_r_div}. We define the division $\Rr'$ as follows. For each region $R \in \Rr$, we define a region $R' \in \Rr'$ consisting of the vertices that lie in sectors whose corresponding vertices are in $R$. 

In the division $\Rr$ of $G'$, each region has size $\leq r' = r/s$. As $\Pp$ is a good $s/n$-cutting, each sector has $\leq s$ vertices. The number of vertices in the region $R' \in \Rr'$ is thus $\leq s r' = r$.

We now argue that $\Rr'$ is a hybrid $r$-pseudodivision with boundary size per region $O(\sqrt{s r})$.
In the division $\Rr$ of $G'$, each region has boundary of size $O(\sqrt{r'}) = O(\sqrt{r/s})$. The vertices that lie in the boundary of $R'$ are exactly the vertices lying in the $O(\sqrt{r/s})$ boundary sectors. Since $\Pp$ is a good cutting, each sector has $O(s)$ vertices, meaning that $R'$ has $O(\sqrt{s r})$ boundary vertices. The boundary edges all intersect at least one of these boundary sectors as otherwise there would have to be adjacent sectors whose vertices in $G_s(\Pp)$ do not share a region; this is a contradiction with $\Rr$ being an $r$-pseudodivision. The number of edges crossing each sector of $\Pp$ is $O(s)$ by the definition of a good cutting.
The number of edges that cross the boundary sectors and thus the number of edges from $R'$ to vertices not in $R'$ is $O(s) \cdot O(\sqrt{r/s}) = O(\sqrt{r s})$.
This proves that the boundary of $R'$ has size $O(s) \cdot O(\sqrt{r/s}) = O(\sqrt{s r})$.

The number of regions in $\Rr$ is $\leq \frac{n (\log n + \log \delta^{-1})/s}{r'} = n (\log n + \delta^{-1}) /r$ by the definition of an $r'$-division, and this is the case even if $\Pp$ is not a good $s/n$-cutting. 

Given a vertex $v$, we may use $\mc{D}$ to find $S_v$ in time $O\big(\log ((n \log \delta^{-1})/s)\big)$. Let $u \in G_s(\Pp)$ be the corresponding vertex of $S_v$. The set of regions $v$ belongs to in $\Rr'$ is the same as the set of regions of $u \in \Rr$. As we mentioned in the preliminaries, we are assuming that we are storing this set in a representation which implements the operations mentioned in the statement in $O(1)$ expected time.
\end{proof}

\section{Beyond straight-line planar embeddings} \label{sec:beyond_straight_lines}

So far, we have only considered planar graphs with planar straight-line embeddings. In this section, we show that both the assumption on having a straight-line embedding, and the assumption of planarity can be relaxed. We show this separately, but it is also possible to combine this, resulting in algorithms for non-planar embeddings with general curves (satisfying some technical assumptions).

\subsection{Beyond straight-line embeddings}
We now show how the assumption of having a straight-line embedding can be weakened significantly. While we focus in this section on ``nice curves'', the same results can also be obtained for piecewise linear embeddings. We discuss this briefly at the end of this section.

Intuitively speaking, the more ``cuved'' the curves that embed the edges are, the more time and queries we will have to have to perform in order to get an $r$-pseudodivision of the same quality. A natural measure of ``how curved a curve is'' is the total absolute curvature. For a curve $C$, we denote this $A(C)$. This value can be interpreted as the ``total angle by which the curve turns''. Specifically, subdividing a curve at its inflection points (under mild technical assumption), $A(C)$ is equal to the sum over all the parts of the angles by which the curve turned in the respective parts.

We show that a simple modification of our algorithm gives an algorithm that is parameterized by $n + A(G)$ instead by $n$, for $A(G) = \sum_{e \in E(G)} A(e)$, under the mild technical assumption that the curve can be subdivided into countably many curves each without inflection points. Throughout this section, we assume that a neighbor query on $v$ gives, apart from the neighbor $u$, also the curve that embeds the edge $uv$. We assume that for an edge $e$ and an angle $\theta \in [0, 2\pi)$, we can find all points $x$ on the curve embedding $e$ such that the tangent to $e$ at $x$ has angle $\theta$ with the $x$-axis.

We now give a few definitions. We use $\kappa(x)$ to denote the curvature at $x \in C$. If the parameterization $\gamma$ is clear from the context, we may abuse the notation and use $\kappa(y)$ for $y \in \mathbb{R}$ instead of $\kappa(\gamma(y))$. For any point $x$ lying on a curve $C$ 
we define $\alpha_C(x)$ to be the angle between the tangent on $C$ at $x$ and the $x$-axis. We drop the subscript when the curve is clear from the context. Let $L(C)$ be the length of the curve $C$.
We use $A(C)$ to denote the total absolute curvature of $C$. The curvature can be equivalently characterized as follows. For proof of this, see for example \cite[Theorem~4.3]{Schlichtkrull}.
\begin{fact} \label[fact]{fact:curvature_is_derivative_of_angle}
Let us have a doubly differentiable curve $C \in \mc{C}^2$ and let $\gamma:[0,L(C)] \rightarrow \mathbb{R}^2$ be its parameterization that has unit speed. It holds $\kappa(y) = \frac{d \alpha(\gamma(y))}{d y}$ at all points of continuity of $\alpha(\gamma(y))$.
\end{fact}

We now prove a lemma that we will use in order to analyse of our modified algorithm.
\begin{lemma} \label{lem:lemma_on_rotated_input}
Let us have $\theta \sim [0,2\pi)$. Assume we are given a doubly differentiable curve $C$
parameterized by its length at unit speed by $\gamma:[0,L(C)] \rightarrow \mathbb{R}^2$ that can be covered up to a set of measure zero by a countable set $\Ii$ of open intervals such that no interval contains an inflection point.
Mark any point $y \in [0,L(C)]$ such that $\alpha(\gamma(y)) = \theta$ and for any $\epsilon > 0$, there exists $0 < \delta < \epsilon$ with $\alpha(\gamma(y-\delta)) \neq \theta$. Let $\ell$ be the number of marked points.
Then $E[\ell] = A(C)/(2\pi)$.
\end{lemma}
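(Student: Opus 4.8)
The plan is to evaluate the expectation over $\theta$ by Fubini, ``unfold'' the $2\pi$-periodicity of the tangent angle to reduce the problem to a single integral over $\mathbb{R}$, and then pin down that integral by sandwiching it between two Banach-indicatrix (one-dimensional coarea) expressions, using the no-inflection hypothesis to control the flat stretches of the curve.

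First I would fix the continuous lift $\alpha:[0,L(C)]\to\mathbb{R}$ of the oriented tangent angle of $\gamma$. By \Cref{fact:curvature_is_derivative_of_angle} we may take $\alpha(y)=\alpha(0)+\int_0^y\kappa(s)\,ds$, so $\alpha$ is absolutely continuous with $\alpha'=\kappa$ a.e., and $\Var_{[0,L(C)]}(\alpha)=\int_0^{L(C)}|\kappa|=A(C)$. Unwinding the definition, a point $y$ is marked for the value $\theta$ iff $\alpha(y)\equiv\theta\pmod{2\pi}$ and $\alpha$ is not identically equal to $\alpha(y)$ on any left interval $(y-\epsilon,y)$; call such a $y$ a \emph{left crossing at level $\alpha(y)$} (by continuity of $\alpha$ this is equivalent to the condition in the statement). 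Let $M(t)$ denote the number of left crossings at level $t\in\mathbb{R}$; then the number $\ell$ of marked points for a given $\theta$ equals $\sum_{k\in\mathbb{Z}}M(\theta+2\pi k)$. Since $M$ is supported on the bounded set $\alpha([0,L(C)])$, Tonelli gives
\[
E[\ell]=\frac{1}{2\pi}\int_0^{2\pi}\sum_{k\in\mathbb{Z}}M(\theta+2\pi k)\,d\theta=\frac{1}{2\pi}\int_{\mathbb{R}}M(t)\,dt ,
\]
so it remains to show $\int_{\mathbb{R}}M(t)\,dt=A(C)$.

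For the sandwich, recall that by hypothesis, up to a Lebesgue-null set, $[0,L(C)]$ is covered by countably many open intervals $I_j$ each containing no inflection point; thus $\kappa$ does not change sign on $I_j$, so $\alpha|_{I_j}$ is monotone, and we may assume the $I_j$ are pairwise disjoint. On a monotone interval $I_j$, for every value $t$ strictly between the two endpoint values of $\alpha|_{I_j}$ the level set $\alpha^{-1}(t)\cap I_j$ is a nonempty compact subinterval, and its left endpoint is the \emph{unique} left crossing at level $t$ inside $I_j$ (just to its left $\alpha$ takes values $\neq t$, while at every other point of the level set $\alpha$ is identically $t$ on a left neighbourhood); for $t$ outside the range of $\alpha|_{I_j}$ there is none. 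Hence, writing $N_j(t)$ for the number of left crossings at level $t$ lying in $I_j$, we have $\int_{\mathbb{R}}N_j(t)\,dt=\Var_{I_j}(\alpha)$. Since the $I_j$ are disjoint and every $N_j$-point is a genuine left crossing, $\sum_j N_j(t)\le M(t)$ for every $t$, and trivially $M(t)\le\#\alpha^{-1}(t)$. Integrating and using Tonelli on the left and the Banach indicatrix theorem on $[0,L(C)]$ on the right gives
\[
\sum_j\Var_{I_j}(\alpha)\ \le\ \int_{\mathbb{R}}M(t)\,dt\ \le\ \Var_{[0,L(C)]}(\alpha)=A(C).
\]
Finally, since $\alpha$ is absolutely continuous, $\Var_{I_j}(\alpha)=\int_{I_j}|\kappa|$, and as the $I_j$ exhaust $[0,L(C)]$ up to a null set (on which $\int|\kappa|=0$) we get $\sum_j\Var_{I_j}(\alpha)=\int_0^{L(C)}|\kappa|=A(C)$. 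The two bounds coincide, so $\int_{\mathbb{R}}M(t)\,dt=A(C)$, and combining with the displayed identity yields $E[\ell]=A(C)/(2\pi)$.

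The step needing the most care is the bookkeeping around flat stretches and interval boundaries: one must verify that the ``left crossing'' normalization is chosen so that, after integrating in $t$, the cheap lower bound (counting only the one left crossing per monotone piece) already forces equality — the no-inflection hypothesis is exactly what decomposes the relevant part of $[0,L(C)]$ into countably many monotone pieces and lets us route everything through the one-dimensional coarea / Banach-indicatrix identity instead of a delicate direct count. A secondary technical point is to confirm that $\alpha$ has just enough regularity for the Banach indicatrix theorem (continuity plus bounded variation, which follows from $\kappa\in L^1$, hence from $A(C)<\infty$), and to discard the single endpoint $y=0$, which affects $\ell$ only for one value of $\theta$ and is therefore irrelevant to the expectation.
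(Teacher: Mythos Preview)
Your proof is correct and shares the paper's core idea --- lift the tangent angle to a continuous real-valued function and use the no-inflection hypothesis to decompose the domain into monotone pieces on which exactly one marked point occurs per attained value --- but the implementation is genuinely different in two respects.

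First, the paper stays inside $[0,2\pi)$ throughout: after obtaining the monotone pieces it further subdivides each so that the angular variation on every piece is $<2\pi$, which guarantees that for each $\theta$ at most one value $t\equiv\theta\pmod{2\pi}$ lies in the range; it then computes $E[\ell_i]=P(\theta\in\alpha(C_i))$ directly as $|\beta(b_i)-\beta(a_i)|/(2\pi)$ and sums by monotone convergence. You instead \emph{unfold} the periodicity up front via Tonelli, rewriting $E[\ell]=\tfrac{1}{2\pi}\int_{\mathbb{R}}M(t)\,dt$, which eliminates the need for the extra $2\pi$-subdivision. Second, the paper asserts the exact equality $E[\ell]=\sum_i E[\ell_i]$ directly (implicitly absorbing the measure-zero complement and subdivision endpoints), whereas you bound $\int M(t)\,dt$ from above by the Banach indicatrix of the lifted angle and from below by the per-piece variations, and then observe the two bounds coincide because absolute continuity gives $\sum_j\Var_{I_j}(\alpha)=\int|\kappa|=A(C)$.

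What your route buys is a cleaner treatment of the edge cases (flat stretches, interval endpoints, the null-set complement of $\bigcup I_j$): the sandwich makes it transparent that those contribute nothing to the expectation, without having to argue about countability of subdivision points or whether any marked point can hide in the complement. What the paper's route buys is that it is entirely elementary --- no appeal to the Banach indicatrix theorem --- at the cost of the extra $2\pi$-subdivision step and a slightly more delicate justification of $\ell=\sum_i\ell_i$.
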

\begin{proof}
We may assume that $A(C) < \infty$, as otherwise the claim is $E[\ell] \leq \infty$, which trivially holds. 
For a point $x \in C$ and $y_x = \gamma^{-1}(x)$, we let $\beta(x) = \alpha(\gamma(0)) + \int_0^{y_x} \kappa(y) dy$.
It holds $\alpha(x) = \beta(x) \mod 2\pi$ by Fact~\ref{fact:curvature_is_derivative_of_angle}, second fundamental theorem of calculus and the fact that all points of discontinuity of $\alpha(\gamma(y))$ are jump discontinuities where the function value jumps between $0$ and $2\pi$.
We subdivide the intervals by the smallest possible number of numbers, so as to make sure that for any interval $I = (a,b)$, it holds that $|\beta(\gamma(a)) - \beta(\gamma(b))| < 2\pi$. We denote the set of these subdivided intervals by $\Ii'$.

We subdivided each $I \in \Ii$ into $\lfloor|\beta(\gamma(a)) - \beta(\gamma(b))|/(2\pi)\rfloor+1$ intervals. At the same time, the interval $I$ contributes at least\footnote{In fact, it contributes exactly this much, but we will not need this.} $|\beta(\gamma(y)) - \beta(\gamma(y'))|$ to the total absolute curvature.
Since we are assuming that the total absolute curvature is finite, this made only finitely many subdivisions. $\Ii'$ is thus still countable. Let us have an arbitrary bijection $f:\mathbb{N} \rightarrow \Ii'$. Let $I_i = f(i)$ and let $C_i = \gamma(I_i)$ and let $a$ and $b$ be the endpoints of $I_i$; that is $I_i = (a_i,b_i)$.
Let $\ell_i$ be the number of points $y \in I_i$ such that $\alpha(\gamma(x)) = \theta$. 
Since it holds $\ell_i \geq 0$ for all $i$, we have by the monotone convergence theorem that $E[\ell] = \sum_{i=1}^\infty E[\ell_i]$.

On any interval $I_i$, the function $\beta$ is differentiable and (non-strictly) monotonous.
Therefore, $\gamma^{-1}(\beta^{-1}(\theta)) \cap I_i$ is always either empty or an interval $I$. If this set is empty, we do not mark any point. Otherwise, only the value $\inf(I)$ may be marked.
This specifically means that we mark at most one number in $I_i$. This proves the first equality in
\[
E[\ell_i] = P(\theta \in \alpha(C_i)) = |\beta(\gamma(a_i)) - \beta(\gamma(b_{i}))|/2\pi = \frac{1}{2\pi}\left|\int_{y = a_i}^{b_i} \kappa(y) \,dy\right| = \frac{1}{2\pi}\int_{y = a_i}^{b_i} |\kappa(y)| \,dy
\]
The second equality holds because $\alpha(C_i) = \beta(C_i) \mod 2\pi$, the third equality holds by the definition of $\beta$ and basic properties of integrals, and the last equality holds because $\alpha(\gamma(\cdot))$ is on $I_i$ monotonous and the $\kappa(y) = \alpha'(\gamma(y)$ thus has the same sign on the whole interval. We thus have
\[
E[\ell] = \sum_{i=1}^\infty E[\ell_i] = \frac{1}{2\pi}\sum_{i=1}^\infty \int_{y = a_i}^{b_i} |\kappa(y)| \,dy = \frac{1}{2\pi}\int_0^{L(C)} |\kappa(y)| \,dy= A(C)/(2\pi)
\]
\end{proof}

We are now ready to describe our modified algorithm. The only part of our paper where we have used the straight-line assumption is the algorithm for finding a good cutting. In the rest of the paper, we have been using this as a black box. In fact, more can be said about how we used this assumption. We have used that for any subset of edges, each trapezoid in the trapezoidal decomposition has boundary consisting of (parts of) two edges and two of the vertical line segments that are added (where one of them may be degenerate and have length zero). For this, it is sufficient that all the edges are $x$-monotonous. Since we are assuming that we may for any edge $e$ find all points on $e$ that have angle $\pi/2$ with the $x$-axis (i.e.~points where the curve is vertical), we may find all such points and subdivide an edge at these points. Even if we had access to this modified graph, there may be many such points and the graph resulting from subdividing those edge may be arbitrarily large.

For this reason, we pick $\theta \sim [0,2\pi)$ and rotate the whole input by this angle. A curve is perpendicular to the $x$-axis at a point $x$ iff the angle at that point in the input is $-\theta + \pi/2$ or $-\theta+3\pi/2$. Since these two angles are uniform on $[0,2\pi)$, the expected number of such points on one edge is $O(A(e))$ by \Cref{lem:lemma_on_rotated_input}.

Let us have a parameter $s$. We now give an algorithm that returns a \relax cutting (for this parameter $s$) with probability at least $1-O(\delta)$, with expected number of regions $O((n + A(G)) \log(n)/s)$. It has query complexity $O(\frac{n \log n }{s})$.

%

%

\begin{algorithm}
$\theta \sim \mathit{Unif}([0,2\pi))$\\
Rotate the whole input by angle $\theta$ around the origin; perform the rotation lazily, rotating each vertex or edge when accessed in the rest of the algorithm\\ 
$E' \leftarrow$ sample $6 n (2\log n + \log \delta^{-1}) / s$ edges\\
$E'' = \emptyset$\\
\For{$e \in E'$}{
    $\gamma \leftarrow$ parameterization by length at unit speed of the curve that embeds $e$\\
    $D \leftarrow \emptyset$\\
    For any $y \in [0,L(e)]$ such that $\alpha(\gamma(y))$ is $\pi/2$ or $3\pi/2$, and for any $\epsilon > 0$, there exists $0 < \delta < \epsilon$ with $\alpha(\gamma(y-\delta)) \neq \pi/2,3\pi/2$, add $x$ to $D$\\
    Subdivide the curve embedding $e$ at all points in $D$\\
    Add the resulting curves to $E''$
}
$V' \leftarrow$ sample $n (2\log n + \log \delta^{-1}) / s$ vertices\\
Compute the trapezoidal map of $E'' \cup V'$ using the algorithm from \cite{Mulmuley1990}, let $\mc{D}$ be a data structure for point location for this trapezoidal map\\
\Return{$\mc{D}$}

\caption{Find a \relax $s/n$-cutting of a non-straight-line graph} \label{alg:non_straight_line}
\end{algorithm}

The proof of the following claim is very similar to that of \Cref{lem:main_algorithm_preprocess_r_div}. We therefore just sketch the proof, focusing on the parts where the proof differs.

\begin{lemma} \label{lem:curved_good_partition}
\Cref{alg:partition_plane} returns a \relax cutting with probability at least $1-O(\delta)$ with expected number of sectors $O((n + A(G)) \log (n)/s)$, represented by a data structure $\mc{D}$ that for a given vertex answers which region it lies in. It has query complexity $O(\frac{n (\log n + \log \delta^{-1})}{s})$.
\end{lemma}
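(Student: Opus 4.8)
The plan is to follow the proof of \Cref{lem:main_algorithm_preprocess_r_div} almost verbatim, replacing the input edge set by the subdivided edge set $E''$ and using \Cref{lem:lemma_on_rotated_input} to control its expected size. First I would observe that after the (lazy) rotation by $\theta \sim \mathit{Unif}([0,2\pi))$, each original edge $e$ is subdivided at the points of $D$ where its tangent is vertical; by \Cref{lem:lemma_on_rotated_input} applied to the two angles $-\theta + \pi/2$ and $-\theta + 3\pi/2$ (each uniform on $[0,2\pi)$), the expected number of such points on $e$ is $A(e)/(2\pi) + A(e)/(2\pi) = A(e)/\pi$, so in expectation $e$ is replaced by $1 + A(e)/\pi$ curves, all of which are $x$-monotone. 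Hence $E[\,|E''|\,] \leq \sum_{e} \big(1 + A(e)/\pi\big) = n + A(G)/\pi = O(n + A(G))$ — wait, this is the bound over \emph{all} edges; since we only process the $O(n(\log n + \log\delta^{-1})/s)$ sampled edges of $E'$, a cleaner route is: condition on $E'$, note $E[|E''|] = \sum_{e\in E'}(1 + A(e)/\pi)$, and then take expectation over the uniform choice of $E'$, giving $E[|E''|] = O\big(\frac{n(\log n+\log\delta^{-1})}{s}\big)\cdot\big(1 + \tfrac{1}{n}\sum_e A(e)/\pi\big) = O\big(\frac{(n + A(G))(\log n + \log\delta^{-1})}{s}\big)$.

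Next I would argue the cutting properties. Since all curves in $E'' \cup V'$ are $x$-monotone, the trapezoidal map is well-defined and, by \cite[Lemma 6.2]{geometry_textbook}, has a number of faces linear in $|E''| + |V'|$; taking expectations yields the claimed $O\big((n+A(G))(\log n + \log\delta^{-1})/s\big)$ expected number of sectors, which is property 3 of a good cutting (in expectation). For properties 1 and 2 (each sector has $\leq s$ vertices and is crossed by $\leq s$ original edges), I would reuse the union-bound argument of \Cref{lem:main_algorithm_preprocess_r_div} essentially unchanged: any trapezoid in any possible trapezoidal map of subpieces of the $n$ input edges is still determined by its upper and lower bounding arcs, and each input edge contributes at most a bounded number of such arc-candidates in expectation — more robustly, one notes that a trapezoid is determined by choosing, for each of its two horizontal bounding arcs, one of the original $n$ edges plus a vertical-tangent point on it, and the number of potential trapezoids relevant to the union bound is polynomial in $n + A(G)$, so replacing the $n^2$ in the failure probabilities by $(n + A(G))^2$ (i.e. using $\delta^{-1}$ appropriately, or simply noting $\log((n+A(G))^2/\delta) = O(\log(n/\delta))$ when $A(G) = \mathrm{poly}(n)$, and otherwise absorbing into the sampling constants) keeps the per-trapezoid failure probability at $O(\delta/(n+A(G))^2)$. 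Crucially, for property 2 we must count \emph{original} edges crossing a sector, not pieces: a sector crossed by $> s$ original edges is crossed by $>s$ of the subpieces too (each original edge crossing it contributes at least one crossing subpiece), and since we sample $E'$ with the enlarged constant $6n(2\log n + \log\delta^{-1})/s$, a uniformly sampled original edge that crosses the sector lands in $E'$ (and hence a piece of it lands in $E''$) with probability $\geq s/(3n)$, exactly as before; this is where the factor-$2$ slack in the sample size (present already in \Cref{alg:non_straight_line}) is used to absorb the subdivision.

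For the query complexity, I would note that the algorithm performs $|E'| + |V'| = O(n(\log n + \log\delta^{-1})/s)$ sampling queries plus, for each sampled edge, a bounded number of tangent-angle queries per vertical-tangent point; since the expected total number of such points is $O(A(G)(\log n + \log\delta^{-1})/(sn))\cdot O(n) $... but the statement only claims query complexity $O(n(\log n + \log\delta^{-1})/s)$, so I would instead charge the tangent queries: each sampled edge needs a constant number of queries to enumerate its vertical-tangent points (per the model assumption that one query on an edge and angle $\theta$ returns \emph{all} such points), giving $O(|E'|) = O(n(\log n + \log\delta^{-1})/s)$ queries total, independent of $A(G)$. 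Finally, point location in the trapezoidal map of $E'' \cup V'$ is answered by the structure $\mc D$ produced by \cite{Mulmuley1990}, and interpreting the trapezoidal map as a plane partition (assigning each arc and vertex to one incident sector via a fixed direction vector, exactly as in the straight-line case) yields the data structure that, given a vertex's coordinates, reports its sector. \textbf{The main obstacle} is the bookkeeping around property 2 — making precise that the union bound ranges only over ``real'' candidate trapezoids (those bounded by original input edges or their vertical-tangent subdivision points) rather than over an uncontrolled number of subpieces — together with confirming that the doubling of the sample size in \Cref{alg:non_straight_line} is exactly what compensates for the subdivision; everything else is a transcription of the straight-line proof with $n$ replaced by $n + A(G)$ in the counting of sectors and with expectations taken over both $\theta$ and the edge sample.
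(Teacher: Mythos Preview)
Your proposal is correct and follows the same approach as the paper: compute $E[|E''|]$ via \Cref{lem:lemma_on_rotated_input} and the law of total expectation, invoke linearity of the trapezoidal map in its input size for the sector count, and then fall back on the straight-line argument for properties~1 and~2. The paper is far terser --- after the expectation calculation it simply states that ``the rest of the proof is exactly the same as that of \Cref{lem:main_algorithm_preprocess_r_div}'' --- so your explicit discussion of the union bound over potential trapezoids (and the role of the doubled sample size) is an elaboration the paper leaves entirely implicit.
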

\begin{proof}

The query complexity is clearly as claimed. Let $\Pp$ be the trapezoid map of $E'' \cup V'$. We prove that $\Pp$ has the claimed properties
with probability $1-O(\delta)$.

We start by arguing
that the number of sectors is in expectation $O((n + A(G)) \log (n)/s)$. The expected total absolute curvature of the edges in $E'$ is $E[A(E')] = |E'| A(G)/n$. Conditioned on $A(E')$, we have $E[|E''| \, |A(E')] = |E'| + A(E')$ by \Cref{lem:lemma_on_rotated_input}. By the law of total expectation, we have $E[|E''|] = |E'|(1+A(G)/n) = O((n + A(G)) \log(n)/s)$. A trapezoid decomposition has linear number of edges in the size of the input (see e.g. \cite[Lemma 6.2]{geometry_textbook}). The complexity of $\Pp$ is thus as claimed.
The rest of the proof is exactly the same as that of \Cref{lem:main_algorithm_preprocess_r_div}.
\end{proof}

This allows us to prove what is essentialy a non-straight-line version of \Cref{thm:sublinear_separators}. In this part of the paper, we focus only on the query complexity. The proof is analogous to that of \Cref{thm:sublinear_separators}, so we do not show it in full detail.
\begin{theorem} \label[theorem]{thm:curved_r_divisions}
There is an algorithm that preprocesses the input using $O(n \log (n)/s)$ queries and produces an oracle $\Dd$. Given a vertex, $\Dd$ answers region queries in a way that is consistent with an $r$-pseudodivision with per-region boundary size $O(\sqrt{sr})$ and $O((n+A(G)) \log (n)/r)$ regions.
\end{theorem}
\begin{proof}[Proof sketch.]
We use \Cref{lem:curved_good_partition} to find a \relax cutting with probability $1-O(\delta)$ and expected number of regions $O((n+A(G))\log(n)/s)$. We construct the sector graph $G_s$ of this partition. We find an $r'$-pseudodivision of $G_s$ for $r' = r/s$. Given a vertex $v$, we report that it belongs to the region of the vertex corresponding to the sector that $v$ lies in. The size of any region is then at most $r' \cdot s = r$. The size of boundary of any region is $O(\sqrt{r'} \cdot s) = O(\sqrt{sr})$ by the definition of a \relax cutting. The expected number of regions is $O(\frac{(n+A(G))\log(n)/s}{r'}) = O((n+A(G)) \log (n)/r)$
\end{proof}

Note that if, on average, $A(e) = O(1)$ then the resulting bound is the same as what we get when we have a straight-line embedding (in terms of query complexity). We believe that this would be very often the case in practice and that this new algorithm thus removes to a large extent the limitations of our algorithm for straight-line embeddings.

We may use \Cref{thm:curved_r_divisions} in both the sublinear-time parameter estimation, and in MPC. By changing parameters of the respective algorithms, we get the following two theorems:
\begin{theorem}
Let $\Pi$ be an additive $O(1)$-Lipschitz property. There is an algorithm that returns, with probability at least $2/3$, an additive $\pm \epsilon n$ approximation of $\Pi(G)$. It has query complexity $O\Big(\frac{(n + A(G))\log^{3/2} n}{\sqrt{n} \epsilon^{5/2}}\Big)$.
\end{theorem}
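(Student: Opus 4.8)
The plan is to re-run \Cref{alg:estimate_Lipschitz_parameter} essentially verbatim, but to build the hybrid $r$-pseudodivision in its first line using the non-straight-line construction of \Cref{thm:curved_r_divisions} in place of \Cref{alg:main_algorithm_preprocess_r_div}, and to re-tune the parameters $r$ and $s$. The correctness argument is then word-for-word the same as in the analysis of \Cref{alg:estimate_Lipschitz_parameter} in \Cref{sec:estimate_Lipschitz_parameters}: conditioning on the cutting being good (which happens with probability $1-O(\delta)$, and we again take $\delta=1/12$), we form $G'$ from $G$ by deleting all boundary vertices of $\Rr$, all vertices incident to a boundary edge, and all vertices of degree $>18\alpha/\epsilon$. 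Since $G$ is planar there are fewer than $\epsilon n/(3\alpha)$ vertices of large degree, so provided $|\partial(\Rr)|\le \epsilon n/(6\alpha)$ we again get $|\Pi(G)-\Pi(G')|\le \tfrac{2}{3}\epsilon n$. The moment computations for $A$ are unchanged, namely $E[A_1]=\Pi(G')$ and $\Var(A/k)\le \alpha^2 n^2/k\le \tfrac{1}{36}\epsilon^2 n^2$ for $k=36\alpha^2/\epsilon^2$, so Chebyshev together with a union bound over the two bad events gives the $\pm\epsilon n$ guarantee with probability $\ge 2/3$.

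The only genuinely new part is the parameter choice and the resulting query bound, and this is where $A(G)$ enters. By \Cref{thm:curved_r_divisions}, with parameters $s$ and $\delta$ the oracle is consistent with an $r$-pseudodivision whose per-region boundary has size $O(\sqrt{sr})$ and which has $O((n+A(G))\log(n)/r)$ regions, and this uses $O(n\log(n)/s)$ queries. The key observation is that the query cost of the preprocessing does \emph{not} depend on $A(G)$: we still sample only $O(n\log(n)/s)$ edges, and the subdivision of curves into $x$-monotone pieces happens only to the sampled curves inside the cutting algorithm, never to the graph on which the BFS of \Cref{alg:estimate_Lipschitz_parameter} is run. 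Hence the only place $A(G)$ surfaces is the total boundary, $|\partial(\Rr)| = O\big(\sqrt{sr}\cdot (n+A(G))\log(n)/r\big) = O\big(\sqrt{s/r}\,(n+A(G))\log n\big)$, and we need this to be at most $\epsilon n/(6\alpha)$; writing $N = n+A(G)$, this is the constraint $s/r = O\big(\epsilon^2 n^2 / (N^2\log^2 n)\big)$.

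It remains to balance costs. The total query complexity is the preprocessing cost $O(n\log(n)/s)$ plus the cost of the main loop, which is $O(1/\epsilon^2)$ iterations, each a BFS confined to a single region of size $\le r$ touching only vertices of degree $O(1/\epsilon)$, hence $O(r/\epsilon^3)$ in total. Taking $s$ as large as the boundary constraint permits, $s = \Theta\big(r\,\epsilon^2 n^2/(N^2\log^2 n)\big)$, the preprocessing cost becomes $\Theta\big(N^2\log^3(n)/(\epsilon^2 n r)\big)$; balancing this against $r/\epsilon^3$ yields $r = \Theta\big(N\log^{3/2}(n)\sqrt{\epsilon}/\sqrt n\big)$ and $s=\Theta\big(\epsilon^{5/2} n^{3/2}/(N\sqrt{\log n})\big)$, for a total query complexity of $O\big(N\log^{3/2}(n)/(\sqrt n\,\epsilon^{5/2})\big) = O\big((n+A(G))\log^{3/2}(n)/(\sqrt n\,\epsilon^{5/2})\big)$, as claimed; note this recovers the straight-line bound $O(\sqrt{n\log^3 n}/\epsilon^{5/2})$ exactly when $A(G)=O(n)$, and indeed the parameters $r,s$ specialize to those of \Cref{alg:estimate_Lipschitz_parameter} in that regime. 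I expect the main thing to be careful about to be precisely this bookkeeping — making sure $A(G)$ inflates only the number of regions (hence only the constraint tying $s$ to $r$) and not the query cost of any individual step — after which the optimization over $(r,s)$ is the same two-term balance as in the straight-line analysis.
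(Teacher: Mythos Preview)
Your proposal is correct and follows exactly the approach the paper indicates: the paper itself only says ``by changing parameters of the respective algorithms'' and leaves the details implicit, and you have filled them in accurately. Your identification that $A(G)$ enters solely through the region count (and hence the total boundary $|\partial(\Rr)|=O(\sqrt{s/r}\,(n+A(G))\log n)$) while the preprocessing query cost stays $O(n\log(n)/s)$ is the key observation, and your two-term balance yielding $r=\Theta\big((n+A(G))\log^{3/2}(n)\sqrt{\epsilon}/\sqrt{n}\big)$ and $s=\Theta\big(\epsilon^{5/2} n^{3/2}/((n+A(G))\sqrt{\log n})\big)$ is exactly the intended re-tuning, correctly specializing to the straight-line parameters when $A(G)=O(n)$.
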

\begin{theorem}
Assume that we are given a graph $G = (V,E)$. For every fixed $\epsilon > 0$, there is an algorithm that $(1)$ counts the number of connected components, $(2)$ finds a bipartition of $G$ or report that $G$ is not bipartite, $(3)$ compute the minimum spanning tree of $G$; compute an $O(1)$-approximation of $(4)$ a shortest $st$-path for given $s,t$, $(5)$ the diameter and radius of $G$; it performs $O(1)$ rounds both in expectation and with high probability, and uses $\Ss = O((n+A(G))^{2/3+\epsilon})$ space per machine and $O((A(G)+n)^{1/3 - \epsilon})$ machines, assuming each curve is stored in $O(1)$ space.
\end{theorem}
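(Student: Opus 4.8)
The plan is to observe that everything in \Cref{sec:sublinear_separators} and \Cref{sec:mpc_algorithms} treats the $r$-pseudodivision oracle as a black box, and that the \emph{only} place the straight-line hypothesis is genuinely used is inside the cutting routine (\Cref{alg:partition_plane}). So it suffices to plug in the curved replacement already developed in this section --- \Cref{alg:non_straight_line} together with \Cref{thm:curved_r_divisions} --- and then re-tune the three numerical invariants that drive the MPC arguments (``the oracle fits on one machine'', ``the boundary of each region fits on one machine'', ``the recursion has depth $O(1)$''), now with $N:=n+A(G)$ playing the role previously played by $n$.

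First I would build a curved analogue of \Cref{alg:MPC_meta_alg}. Sample $\Theta(N\log n/s)$ edges and $\Theta(N\log n/s)$ vertices; pick a single shared random angle $\theta$; have each machine subdivide its sampled curves at their points of vertical tangency as in \Cref{alg:non_straight_line}. By \Cref{lem:lemma_on_rotated_input} the expected total number of resulting pieces is $O(N\log n/s)$, so by Markov's inequality we may restart (Las Vegas, $O(1)$ expected restarts) whenever it exceeds a fixed constant multiple of this; since $A(G)$ is not known in advance, run this with a doubling guess $N_0,2N_0,4N_0,\dots$ for $N$, restarting whenever the sample overflows the current guess, so the first correct guess succeeds with high probability at a cost of only $O(1)$ extra rounds. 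Then ship all pieces and all sampled vertices to machine~$1$, construct the oracle $\Dd$ via \Cref{thm:curved_r_divisions}, broadcast $\Dd$, and redistribute the edges of $G$ exactly as in \Cref{alg:MPC_meta_alg} (here we use the hypothesis that each curve is stored in $O(1)$ words, so $\Dd$ and the redistributed graph occupy $O(N)$ words in total). Choosing $s=N\log n/\Ss$ makes the sample and $\Dd$ fit in $O(\Ss)$ words, and with $r\le N^{1-\epsilon}$ the per-region boundary is $O(\sqrt{sr})=O(\sqrt{N^{2-\epsilon}\log n/\Ss})=O(\Ss)$ for $\Ss=N^{2/3+\epsilon}$; this is exactly the ``why $\Omega(n^{2/3})$ space'' computation of \Cref{sec:mpc_algorithms} with $n$ replaced by $N$.

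With that subroutine in hand I would run each of the five recursive algorithms of \Cref{sec:mpc_algorithms} verbatim, substituting the curved subroutine for \Cref{alg:MPC_meta_alg}. All the correctness statements there --- the Claims, and \Cref{lem:replace_component_keeps_bipartitions} through \Cref{lem:compress_a_graph} and \Cref{lem:mpc_spanners_paths_with_shortcuts} --- are purely combinatorial and carry over unchanged; the only thing that needs rechecking is the resource and round accounting, which is identical after replacing $n$ by $N$ and using that total absolute curvature is monotone under passing to subgraphs, so $A(G[R])\le A(G)$ for every region $R$.

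The step I expect to be the main obstacle is making the recursion bottom out in constant depth. In the straight-line algorithms the potential that shrinks by a $1-\epsilon$ power per level is $|G|$; here the cost of a recursive call on region $R$ is governed by $|G[R]|+A(G[R])$, and a priori a single region could inherit nearly all of $A(G)$, so vertex count alone is no longer a decaying potential (already for $A(G)=\Theta(n)$ the naive recursion fails to shrink). The fix I would use is to have the curved cutting subdivide each sampled curve not only at vertical tangencies but also at its inflection points and at the points where the running curvature has accumulated another $2\pi$ --- precisely the refinement $\Ii'$ from the proof of \Cref{lem:lemma_on_rotated_input} --- so that every resulting piece has total absolute curvature $O(1)$; the expected number of pieces is then still $O\!\big(\sum_{e}(1+A(e))\big)=O(N)$, and a region containing $\le r$ pieces has total curvature $O(r)$. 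Working with this piece-subdivided graph $\tilde G$ in place of $G$, the number of pieces is a single potential that drops by a $1-\epsilon$ power each level and we stop at $O(\Ss)=O(N^{2/3+\epsilon})$ pieces, giving $O(1)$ depth; re-subdividing a region under a fresh rotation multiplies its piece count by only $O(1)$ since each $O(1)$-curvature piece has $O(1)$ vertical tangencies in any direction. It remains to transfer each problem from $\tilde G$ back to $G$: counting components is unaffected; approximate shortest paths and diameter are preserved by splitting each original weight among its pieces; MST is handled by giving the extra pieces of a subdivided edge a weight below all original weights, so the MST of $\tilde G$ includes the ``real'' piece exactly when the MST of $G$ includes the original edge; and bipartition is handled by inserting one further subdivision point on any super-edge that would otherwise carry an even number of pieces, so that $\tilde G$ is bipartite precisely when $G$ is and their $2$-colorings correspond. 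Checking that none of these gadgets, nor the re-subdivision blow-up, ever pushes per-machine space past $N^{2/3+\epsilon}$ is the only genuinely fiddly bookkeeping in the argument.
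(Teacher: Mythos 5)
The paper offers no proof of this theorem at all: the entire justification is the one-sentence remark ``By changing parameters of the respective algorithms, we get the following two theorems.'' Your attempt to actually carry out that parameter change is therefore more of a proof than what the paper contains, and in carrying it out you have uncovered a genuine subtlety that the paper's phrasing does not address.

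Concretely, your third paragraph is on the mark. Re-running the straight-line accounting with $N=n+A(G)$ in place of $n$, the constraint that both the oracle and the total boundary fit in $\Ss$ words forces $r\gtrsim\bigl((|G|+A(G[\cdot]))\log|G|\bigr)^3/\Ss^3$ at every level, while the recursion needs $r\le|G|^{1-\epsilon'}$ for some fixed $\epsilon'>0$. If a single region $R$ keeps $A(G[R])=\Theta(N)$ while $|R|$ shrinks, these two requirements become incompatible once $|R|$ drops to roughly $N^{1-3\epsilon}$, and the recursion cannot reach the target size $\Ss=N^{2/3+\epsilon}$ unless $\epsilon\ge 1/12$. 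This directly contradicts the paper's ``for every fixed $\epsilon>0$'' claim, and nothing in \Cref{thm:curved_r_divisions} or in the surrounding text guards against curvature concentrating in a single edge that remains interior across several levels of the division. So either the paper's authors had a repair in mind that they did not record, or the theorem as stated has a gap. You are right to flag it.

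Your proposed repair --- pre-subdividing each curve at inflection points and at accumulated-curvature increments of $2\pi$ (essentially the $\Ii'$ refinement from the proof of \Cref{lem:lemma_on_rotated_input}) so that every piece has total absolute curvature $O(1)$, then recursing on piece count rather than vertex count --- is a reasonable and natural fix, and the reductions you sketch (parity-padding for bipartition, weight-splitting for distances, tiny weights for the artificial MST pieces) are the right sort of gadgets. That said, the proposal is, as you yourself note, only a sketch in two places that matter for correctness: (1) the sampling of \emph{pieces} rather than original edges needs care, since piece counts per edge are not known a priori (your doubling guess handles the global $A(G)$ but not the per-edge imbalance when sampling), and (2) the claim that re-subdividing a region under a fresh rotation only multiplies piece count by $O(1)$ in the worst case relies on an expectation bound from \Cref{lem:lemma_on_rotated_input}, so you would need a restart/concentration argument here too to keep the round count $O(1)$ in expectation rather than merely the piece count. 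These are completable but not completed. In short: you have found a real hole and the right shape of patch, but the patch itself still needs the bookkeeping finished before this would count as a proof.
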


\begin{remark}
The above approach of dividing the curve at some points can be used to work with piecewise-linear edges. Specifically, we may subdivide the edges at the endpoints of the line segments. Each part of an edge then will be $x$-monotonic (as it will be a line segment). $A(G)$ in the above bounds is then replaced by the total number of line segments that embed the edges.
\end{remark}

\subsection{Beyond planar embeddings}
Assuming that we have a $s/n$-cutting with $k$ sectors, then the approach used in \Cref{sec:sublinear_separators} leads without any modifications to an $r$-pseudodivision with the same boundary sizes but $k s/r$ regions instead of $O(n \log(n)/r)$.
It thus remains to show how to find a $s/n$-cutting in the non-planar case that does not have too many sectors. This was again solved in \cite{Clarkson1989}, and we show a modification of their algorithm that deals separately with the vertices.

For a not necessarily planar embedding of a graph, its planarization\footnote{This is different from the notion used in the graph drawing literature where it refers to the problem of removing the smallest possible number of edges to make an input graph planar; see e.g. \cite{Tamassia2013}} is the graph obtained by adding a vertex $w$ for each crossing and replacing each edge $uv$ that passes through a crossing replaced by $w$, by two edges, $uw$ and $wv$. We modify \Cref{alg:partition_plane} to enable it to work with non-planar graphs.
\begin{algorithm}
$E' \leftarrow$ sample $3 n (2\log n + \log \delta^{-1}) / s$ edges\\
$E'' \leftarrow$ planarization of $E'$\\
$V' \leftarrow$ sample $n (2\log n + \log \delta^{-1}) / s$ vertices\\
Compute the trapezoidal map of $E'' \cup V'$ using the algorithm from \cite{Mulmuley1990}, let $\mc{D}$ be the returned data structure for point location for this trapezoidal map.\\
\Return{$\mc{D}$}

\caption{Find a good cutting of the plane} \label{alg:partition_plane_nonplanar}
\end{algorithm}

By exactly the same argument as in the proof of \Cref{thm:sublinear_separators}, we get that the returned trapezoidal map is \relax. It remains to bound the expected number of sectors. Size of the trapezoidal map is linear in the size of $E'' \cup V'$. It holds $|V'| = O(n \log (n)/s)$. It holds $|E''| = |E'| +  cr(E')$. Each of the $cr(G)$ crossings in $E(G)$ is also a crossing in $E'$ with probability
\[
O\Big(\Big(\frac{n \log (n)/s}{m}\Big)^2\Big) = O\big((\log(n)/s)^2\big) 
\]
and we thus have that the expected complexity of the trapezoid decomposition is $O(n \log(n)/s + cr(G) \log^2(n)/s^2)$.
We get the following claim:
\begin{theorem} \label[theorem]{thm:nonplanar}
There is an algorithm that preprocesses the input using $O(n \log (n)/s)$ queries and produces an oracle $(\Rr,\Dd)$.
Then there exists an algorithm that, given a vertex $v \in G$ returns in expected time $O(\log n/s)$ and without making any queries the set of regions $v$ belongs to. These sets form, with probability at least $1-O(\delta)$, a hybrid $r$-pseudodivision $\Rr'$ with boundary size $O(\sqrt{s r})$ per region. The number of regions is $O(n \log (n)/r + cr(G) \log^2(n)/(sr))$. The set implements in expected $O(1)$ time operations: membership, cardinality, iterate through elements (taking $O(1)$ time per element).
\end{theorem}

\section{Open Problems}


\oldparagraph{\textit{(1.)} Property testing with respect to hamming distance.} In this paper, we have shown results for property testing with respect to the normalized vertex edit distance. Is is possible to get similar property testing results with respect to the more common normalized hamming distance? \citet{Czumaj2019} have given a characterization of testable properties in planar graphs (without having an embedding). Proving a separation between the settings with and without the embedding would be very interesting.

\oldparagraph{\textit{(2.)} Random rotation trick.} Our result on non-straight-line embeddings relies on \Cref{lem:lemma_on_rotated_input}. What other problems could this be used for? We also believe the assumption of the curve being divisible into countably many curves without inflection points can be removed. Is this so? It also seems plausible that one could find a generalization of this result to non-differentiable curves. This result would then imply both our result and the result on pievewise-linear edges.

\oldparagraph{\textit{(3.)} Minimum ($st$-)cut MPC algorithm.} Is it possible to compute an (approximately) minimum cut in $O(1)$ MPC rounds and $n^{1-\Omega(1)}$ space per machine? It would be natural to use planar graph duality. However, it is far from clear how to do this since we would need to get a good cutting for the dual of the input graph.

\oldparagraph{\textit{(4.)} Streaming algorithms for embedded planar graphs.} Can some of our techniques be used to give efficient streaming algorithms for embedded planar graphs? Our sublinear-time algorithm can be simulated in $O(1)$ passes. We can then uniformly sample regions from the resulting $r$-pseudodivision. This allows us to approximate Lipschitz additive parameters in the streaming setting. Can one improve upon this simple approach? It would be especially interesting if one could prove a separation between the setting with and without the embedding. The lower bound from \cite{Assadi2021} should be useful in this.

\oldparagraph{\textit{(5.)} MPC algorithms with $n^{2/3-\Omega(1)}$ space per machine.} This seems especially feasible for problems where we recurse on subproblems on planar graphs: counting connected components, bipartition, and minimum spanning tree. This problem may however be rather hard. In that case, is it at least possible to get such algorithm for the euclidean minimum spanning tree problem? Is this possible for the 1-vs-2-cycles problem? Especially for the 1-vs-2-cycles problem, it seems plausible that $O(n^\epsilon)$ space is possible for any $\epsilon > 0$.

\oldparagraph{\textit{(6.)} $(1+\epsilon)$-approximation of shortest paths in MPC.} There exist efficient $1+\epsilon$ approximate distance oracles for planar graphs \cite{Le2021}. Can some of the techniques be used to give better approximation for shortest paths in MPC with $O(n^{1-\Omega(1)})$ space per machine? Are \emph{exact} shortest paths possible?

\section*{Acknowledgement}
The authors are grateful Krzysztof Nowicki whose comments have helped improve our MPC algorithms and who pointed out implications of our work to Euclidean MST.
The authors are grateful to Philip Klein for pointing out some relevant literature.
The authors are also grateful to Mikkel Thorup and Tomáš Gavenčiak for helpful discussions.

\bibliographystyle{plainnat}
\bibliography{literature}

\appendix


\section{Parameter estimation and property testing in planar graphs}
\subsection{Previous work on parameter estimation and property testing in planar graphs}

A series of papers culminating in \cite{Kumar2021} has considered \emph{partition oracles} in planar (and more generally minor-free) graphs with bounded degrees. Partition oracles have been first (implicitly) used by \citet{Benjamini2010}. 
A partition oracle divides the vertices into sets of size $O_\epsilon(1)$ such that there are $\leq \epsilon d n$ edges going between different sets (more specifically, given a vertex, the partition oracle answers which set it lies in). Note that one may use a $\Theta(1/\epsilon^2)$-division as the partition, and being able to implement oracles for $r$-divisions would thus imply a partition oracle. We use this idea in \Cref{sec:estimate_Lipschitz_parameters}.

Partition oracles may be used for estimation of additive Lipschitz parameters and property testing\footnote{In fact, even non-additive properties may be estimated using partition oracles, as shown by \citet{Newman2013}.}. For parameter estimation, we first remove the edges that go between different sets. We then estimate by sampling ``how much a connected component contributes on average to the value of the parameter of the whole graph".

The technically challenging part is implementing the partition oracle. Recently, a polynomial-complexity partition oracle has been obtained by \citet{Kumar2021}. Their algorithm has complexity $O(
d^{685}/\epsilon^{30650})$. This result improves upon the previous work on this problem \cite{Hassidim2009,Levi2015}.

A characterization of testable properties in general (unbounded degree) planar graphs has been proven by \citet{Czumaj2019}. They show that, intuitively speaking, a property is testable, if and only if testing it reduces to testing $H$-subgraph-freeness for some graphs $H$. 

\subsection{Preliminaries on property testing}
For most computational problems, there are trivial linear lower bounds. In order to be able to achieve sublinear complexity, it is then necessary to relax the problem specification. One natural approach, which works for graph parameters is that of settling for an approximate solution. For graph properties, the notion of property testing is commonly used.

A property on graphs of order $n$ is defined as a subset of all such graphs that is closed under isomorphism. Let us have some metric $d(\cdot, \cdot)$ on graphs. Commonly used metric in property testing is the hamming distance\footnote{number of edges that have to be changed in order to get from one graph to an isomorphic copy of the other.} normalized by the number of edges. In this paper, we instead consider the vertex edit distance normalized by the number of vertices. This is defined, for graphs $G_1,G_2$ as the smallest number of vertices that have to be removed from $G_1$ and $G_2$ in order to get isomorphic subgraphs. 
Note that because of the different normalization, these two metrics are incomparable.

Let us have a property $\Psi$. In property testing, one requires that if $G$ does have property $\Psi$, then the algorithm says that this is indeed the case (with high probability). If, on the other hand, the distance from any graph with property $\Psi$ is at least $\epsilon$, then the algorithm is required to reject the input. The algorithm is allowed to answer arbitrarily otherwise.

We also consider \emph{tolerant property testing}. Tolerant property testing makes the requirements stricter in that the algorithm must also accept the input if the distance to $\Psi$ is $\leq \epsilon_1$ while rejecting if it is $\geq \epsilon_2$. Property testing is then equivalent to tolerant property testing with $\epsilon_1 = 0$ and $\epsilon_2 = \epsilon$. Tolerant property testing can also be equivalently viewed as approximating the distance to $\Psi$.

\subsection{Estimating additive parameters and property testing} 
\label{sec:estimate_Lipschitz_parameters}
In this section, we consider additive $O(1)$-Lipschitz parameters. Recall that a parameter $\Pi$ is $\alpha$-Lipschitz with respect to a metric $d$ if for any $G_1,G_2$, it holds that $|\Pi(G_1) - \Pi(G_2)| \leq \alpha d(G_1,G_2)$. We consider the vertex edit distance metric, which is defined as the smallest number of vertices that needs to be removed from $G_1$ and $G_2$ in order to get two isomorphic subgraphs of $G_1$ and $G_2$. A property is additive if for graphs $G_1,G_2$ on disjoint vertex sets, it holds that $\Pi(G_1 \cup G_2) = \Pi(G_1) + \Pi(G_2)$.

These two conditions are satisfied by a large set of combinatorial optimization problems, including:
minimum vertex cover\footnote{Smallest subset $S$ of vertices such that each edge is incident to some vertex in $S$},
minimum edge cover\footnote{Smallest subset $S$ of edges such that each vertex is incident to some edge in $S$},
maximum independent set\footnote{Largest subset $S$ of vertices such that there are no two adjacent vertices in $S$},
minimum feedback vertex set\footnote{Smallest subset of vertices such that removing these vertices results in a graph with no cycles},
minimum feedback edge set\footnote{Smallest subset of edges such that removing these vertices results in a graph with no cycles},
minimum dominating set\footnote{Smallest subset $S$ of vertices such that each vertex is adjacent to some vertex in $S$},
maximum matching\footnote{Largest subset of edges such that no two of the edges share a vertex},
minimum maximal matching\footnote{Smallest subset of edges such that no two of them share a vertex and adding any other edge would violate this property},
$H$-packing\footnote{Largest set of vertex-disjoint subgraphs of $G$ such that each of these subgraphs is isomorphic to $H$}.

Another class of such problems come from (tolerant) property testing problems with respect to the vertex edit distance. For a property $\Psi$ closed under taking subgraphs, we define a parameter $\Pi_\Psi$ as the smallest number of vertices we need to remove from $G$ to get $G'$ such that $G'$ satisfies $\Psi$. $\Pi$ is then decomposable and $1$-Lipschitz. Such problems include, for example, $k$-coloring or subgraph freeness.

\begin{algorithm}
Compute a hybrid $r$-pseudodivision $\Rr$ using \Cref{alg:main_algorithm_preprocess_r_div} with $r = \sqrt{\epsilon n \log^3 n}$, $s = c \epsilon^{5/2} \sqrt{n /\log n}$ for $c>0$ small enough constant, and $\delta = 1/12$\\
$A \leftarrow 0$\\
\RepTimes{$k = \frac{36 \alpha^2}{\epsilon^2}$}{ \label{line:Lipschitz_loop}
    $v \leftarrow$ sample a vertex uniformly\\ 
    \If{$v$ is an interior vertex of some region $R \in \Rr$ and $d(v) \leq \alpha 18/\epsilon$}{
        $H \leftarrow$ interior vertices of $R$ reachable from $v$ through vertices of degree $\leq \alpha 18/\epsilon$\\
        $A \leftarrow A + \frac{n}{|H|}\Pi(H)$
    }
}

\Return{$A/k$}
\caption{Estimate a $\alpha$-Lipschitz parameter $\Pi$} \label{alg:estimate_Lipschitz_parameter} 
\end{algorithm}

\begin{theorem}
Let $\Pi$ be an additive $\alpha$-Lipschitz property. \Cref{alg:estimate_Lipschitz_parameter} returns, with probability at least $2/3$, an additive $\pm \epsilon n$ approximation of $\Pi(G)$. It has query complexity $O(\sqrt{n \log^3 n}/\epsilon^{5/2})$.
\end{theorem}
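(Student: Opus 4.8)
The plan is to show two things: correctness (the returned value is a $\pm\epsilon n$ additive approximation with probability $\geq 2/3$) and the query bound. I will organize correctness around three sources of error and control each by a constant fraction of the $\epsilon n$ budget, then union-bound the bad events so that the total failure probability is $\leq 1/3$.

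\textbf{Step 1: the cutting succeeds.} Invoke \Cref{thm:sublinear_separators} with the chosen parameters: with probability $\geq 1 - O(\delta) = 1 - O(1/12)$ the oracle yields a hybrid $r$-pseudodivision $\Rr$ with $O(n(\log n + \log\delta^{-1})/r)$ regions and boundary size $O(\sqrt{sr})$ per region. Plugging in $r = \sqrt{\epsilon n \log^3 n}$ and $s = c\epsilon^{5/2}\sqrt{n/\log n}$ gives, after routine arithmetic, $\sqrt{sr} = O(\sqrt{c}\,\epsilon^{3/2}\sqrt{n})$ per region and $O(n\log n / r) = O(\sqrt{n\log n/\epsilon})$ regions, so the total number of boundary vertices over all regions is $O(\sqrt{c}\,\epsilon\, n)$, which is $\leq (\epsilon/3\alpha)\cdot n$ for $c$ a small enough constant (the $\alpha$-Lipschitz constant is absorbed into $c$). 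Condition on this good event henceforth.

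\textbf{Step 2: passing from $G$ to a "nice" subgraph $G'$.} Let $G'$ be $G$ with all boundary vertices of $\Rr$ removed together with all vertices of degree $> 18\alpha/\epsilon$. The boundary vertices number $\leq (\epsilon/3\alpha) n$ by Step 1. For the high-degree vertices: since $G$ is planar it has $\leq 3n$ edges, so $\sum_v \deg(v) \leq 6n$, hence at most $6n/(18\alpha/\epsilon) = \epsilon n/(3\alpha)$ vertices have degree exceeding $18\alpha/\epsilon$. Thus $G'$ is obtained from $G$ by deleting at most $(2\epsilon/3\alpha)\, n$ vertices, and by $\alpha$-Lipschitzness $|\Pi(G) - \Pi(G')| \leq \alpha \cdot (2\epsilon/3\alpha) n = (2\epsilon/3) n$. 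Moreover every connected component of $G'$ lies entirely inside a single region $R\in\Rr$ (any edge leaving a region passes through a boundary vertex, which has been deleted) and consists of interior vertices of degree $\leq 18\alpha/\epsilon$ — exactly the set that the BFS in the algorithm explores from a sampled vertex $v$ — so $H_v$ in the algorithm equals the component of $v$ in $G'$ whenever the \texttt{if} fires, and $H_v = \emptyset$ (contributes $0$) otherwise, consistently with $\Pi(\emptyset)=0$. By additivity $\Pi(G') = \sum_{H \in cc(G')} \Pi(H)$.

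\textbf{Step 3: the Horvitz–Thompson estimator.} For a uniformly random vertex $v$, let $X = \frac{n}{|H_v|}\Pi(H_v)$ (with $X=0$ when $v$ is deleted in $G'$). Then $\mathbb{E}[X] = \sum_{H\in cc(G')} \Pr[v\in H]\cdot \frac{n}{|H|}\Pi(H) = \sum_H \frac{|H|}{n}\cdot\frac{n}{|H|}\Pi(H) = \Pi(G')$, so $X$ is unbiased. For concentration I bound the variance: since $\Pi$ is $\alpha$-Lipschitz and $\Pi(\emptyset)=0$, $|\Pi(H)| \leq \alpha|H|$, hence $|X| \leq \alpha n$ always, so $\Var(X) \leq \mathbb{E}[X^2] \leq \alpha n\cdot\mathbb{E}[|X|]$; and $\mathbb{E}[|X|] = \sum_H \frac{|H|}{n}\cdot\frac{n}{|H|}|\Pi(H)| \leq \alpha \sum_H |H| \leq \alpha n$, giving $\Var(X) \leq \alpha^2 n^2$. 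Averaging $k = 36\alpha^2/\epsilon^2$ independent copies $A/k$ has variance $\leq \alpha^2 n^2/k = \epsilon^2 n^2/36$, so by Chebyshev $\Pr[\,|A/k - \Pi(G')| \geq \epsilon n/2\,] \leq \frac{\epsilon^2 n^2/36}{\epsilon^2 n^2/4} = 1/9$. Combining with $|\Pi(G)-\Pi(G')| \leq 2\epsilon n/3 \leq \epsilon n/2$... — here I need to be slightly more careful with constants; the clean route is to split the budget as $|\Pi(G)-\Pi(G')|\leq \epsilon n/2$ by taking the Step-2 deletions to total $\leq n\epsilon/(2\alpha)$ (tighten $c$ and the degree threshold accordingly) and the estimator error $\leq \epsilon n/2$. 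Then $|A/k - \Pi(G)| \leq \epsilon n$ fails with probability at most $\Pr[\text{bad cutting}] + \Pr[\text{Chebyshev fails}] \leq O(1/12) + 1/9 \leq 1/3$, with the constant in $\delta$ chosen so the sum is $\leq 1/3$.

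\textbf{Step 4: complexity.} Building the cutting costs, by \Cref{thm:sublinear_separators} (via \Cref{lem:main_algorithm_preprocess_r_div}), query complexity $O(n\log n / s) = O(\sqrt{n\log n}/\epsilon^{5/2}\cdot\log n) = O(\sqrt{n\log^3 n}/\epsilon^{5/2})$. Each of the $k = O(\alpha^2/\epsilon^2)$ iterations does one uniform-vertex query, one oracle lookup (no queries), and a BFS within one component of $G'$; the component has size $\leq r$ but more usefully the BFS touches only degree-$\leq 18\alpha/\epsilon$ vertices and explores at most $r = O(\sqrt{\epsilon n\log^3 n})$ of them, so a single iteration uses $O(\sqrt{\epsilon n \log^3 n}\cdot(\alpha/\epsilon)) $ queries; multiplied by $k$ this is dominated by — one should double check — the $r$-per-iteration cost can exceed the preprocessing cost, so the tighter accounting is that the BFS explores a component of $G'$, whose size is at most $r/s\cdot s$; the intended bound is that the total query cost is $O(\sqrt{n\log^3 n}/\epsilon^{5/2})$, matching the statement, and this is where the precise choice $r = \sqrt{\epsilon n\log^3 n}$ is calibrated.

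\textbf{Main obstacle.} The delicate part is the bookkeeping of constants in Steps 2–3 — splitting the $\epsilon n$ error budget between the $G\to G'$ passage and the estimator so that, with $k = 36\alpha^2/\epsilon^2$ and $\delta = 1/12$ as fixed in the algorithm, the Chebyshev bound plus the cutting-failure probability sum to at most $1/3$ — together with verifying that the choices $r = \sqrt{\epsilon n\log^3 n}$, $s = c\epsilon^{5/2}\sqrt{n/\log n}$ simultaneously make the total boundary $O(\epsilon n/\alpha)$, keep each region explorable within the stated query budget, and make the preprocessing cost $O(\sqrt{n\log^3 n}/\epsilon^{5/2})$. The conceptual content (Lipschitz truncation + additivity + Horvitz–Thompson) is standard; the work is in the parameter calibration.
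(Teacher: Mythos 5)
Your approach mirrors the paper's exactly: condition on the cutting succeeding, pass to a pruned subgraph $G'$, use the Horvitz--Thompson estimator, apply Chebyshev, and union-bound. The conceptual content is right, but the budget split in Step~3 has a fixable slip, and you should not ``tighten $c$ and the degree threshold'' to repair it --- the algorithm's parameters are already calibrated for a different split. With $k = 36\alpha^2/\epsilon^2$ you have $\Var(A/k)\leq \epsilon^2 n^2/36$, so apply Chebyshev at threshold $\epsilon n/3$ (not $\epsilon n/2$), giving failure probability $\frac{\epsilon^2 n^2/36}{(\epsilon n/3)^2}=1/4$. Combined with $|\Pi(G)-\Pi(G')|\leq \tfrac{2}{3}\epsilon n$ from Step~2 this gives total error $\leq \epsilon n$, and total failure probability $\leq 1/4 + 1/12 = 1/3$. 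Your split ($\epsilon n/2 + 2\epsilon n/3 > \epsilon n$) does not close without rewriting the algorithm, whereas the $\epsilon n/3 + 2\epsilon n/3$ split closes as is.

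Two smaller points. First, in Step~2 you delete only boundary \emph{vertices}; since $\Rr$ is a \emph{hybrid} pseudodivision an edge may join two regions with neither endpoint designated as a boundary vertex, so you must also delete the boundary edges $\partial_E(\Rr)$ (or, equivalently, count the endpoints of such edges among the deletions, paying $2\alpha$ per edge) --- the paper's $O(\sqrt{sr}\cdot n\log n/r)$ bound already covers both vertices and edges, so the arithmetic goes through unchanged. Second, Step~4 is correct but you left it hanging: each iteration's BFS touches at most $r=\sqrt{\epsilon n\log^3 n}$ vertices, each of degree $\leq 18\alpha/\epsilon$, so $O(\sqrt{\epsilon n\log^3 n}\cdot \alpha/\epsilon) = O(\sqrt{n\log^3 n/\epsilon})$ queries per iteration; multiplying by $k = O(\alpha^2/\epsilon^2)$ gives $O(\sqrt{n\log^3 n}/\epsilon^{5/2})$, which matches the $O(n\log n/s) = O(\sqrt{n\log^3 n}/\epsilon^{5/2})$ preprocessing cost, so neither dominates and the stated bound follows.
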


\begin{proof}
Throughout this proof, we condition on \Cref{alg:main_algorithm_preprocess_r_div} succeeding.
We begin by analyzing the expectation and variance of $A$. Let $A_i$ be the increment of $A$ in the $i$-th iteration of the loop on \cref{line:Lipschitz_loop}. Let $V'$ be the set of vertices of degree $> \alpha 18/\epsilon$. Let $G'$ be the graph obtained from $G[V \setminus (\partial_V(\Rr) \cup V')]$ by removing the edges in $\partial_E(\Rr)$
. Note that the algorithm can be seen as in each iteration performing breadth-first-search in $G'$ from $v$ when $v \in V(G')$
and letting $H$ to be the connected component that $v$ lies in, and otherwise letting $H$ be the empty graph, in which case $A_i = 0$. 

\begin{align}
E[A_1] = \sum_{H \in cc(G')} \frac{|H|}{n} \cdot \frac{n}{|H|} \Pi(H) = \sum_{H \in cc(G')} \Pi(H) = \Pi\big(\bigcup_{H \in cc(G')} H\big) = \Pi(G')
\end{align}
We now analyze the variance of $A$. Recall that $\sup X$ is defined as the smallest $x$ such that $P(X > x) = 0$.
\begin{align}
\Var(A) = k \Var(A_1)  \leq k E(A_1^2) \leq k \sup(A_1)^2 \leq k \alpha^2 n^2
\end{align}
where we are using that $\sup(A_1) = \frac{n}{|H|}\sup(\Pi(H)) \leq \alpha n$ where we have $\Pi(H) \leq \alpha |H|$ since the parameter is $\alpha$-Lipschitz and it follows by decomposability that $\Pi((\emptyset, \emptyset)) = 0$. Therefore $\Var(A/k) \leq \alpha^2 n^2 / k \leq \frac{1}{36} \epsilon^2 n^2$.
By the Chebyshev inequality, it now holds that
\[
P\Big(|A/k - \Pi(G')| \geq \epsilon n/3\Big) \leq \frac{\frac{1}{36} \epsilon^2 n^2}{(\epsilon n/3)^2} = 1/4
\]
We now bound $|\Pi(G') - \Pi(G)|$. By \Cref{thm:sublinear_separators}, it holds (recall that we are conditioning on \Cref{alg:main_algorithm_preprocess_r_div} succeeding)
\[
|\partial \Rr| \leq O(\sqrt{s r} \cdot \frac{n \log n}{r}) = O(\sqrt{\tfrac{s}{r}}\cdot n\log n) = O(\frac{c^{1/2}\epsilon}{\log n}\cdot n\log n) = O(c^{1/2} \epsilon n) \leq \epsilon n /(6 \alpha)
\]
where the last inequality holds for $c$ small enough. Since $\Pi$ is $\alpha$-Lipschitz, removing a vertex changes the value by at most $\alpha$ and removing an edge by removing its endpoints changes the value by at most $2\alpha$. Removing $\partial \Rr$ from $G$ thus changes the parameter $\Pi$ by at most $\epsilon n / 3$. Since $G$ is planar, it holds $\|G\| \leq 3n - 6$. There are thus less than $\epsilon n/(3 \alpha)$ vertices with degree $> \alpha 18/\epsilon$. Removing these vertices changes the parameter $\Pi$ by at most $\epsilon n / 3$. Putting this together,
we have $|\Pi(G) - \Pi(G')| \leq \frac{2}{3}\epsilon n$.

By the union bound, the algorithm succeeds and $|A/k - \Pi(G')| \leq \epsilon n/3$ with probability $\geq 3/4$. When the algorithm succeeds, it holds  $|\Pi(G) - \Pi(G')| \leq \frac{2}{3}\epsilon n$. Moreover, the probability that $\Cref{alg:main_algorithm_preprocess_r_div}$ succeeded is at least $1-1/12$. We thus have that with probability at least $1-(1/4 + 1/12) = 2/3$, it holds $|A/k - \Pi(G)| \leq \epsilon n$.

The size of any region in $\Rr$ is $\leq r = \sqrt{\epsilon n \log^3 n}$. In each iteration of the loop in line \cref{line:Lipschitz_loop}, we perform breadth-first-search on a subset of one region. We only explore neighborhoods of vertices of degrees $\leq O(1/\epsilon)$, and thus perform at most $O(1/\epsilon)$ queries per explored vertex. This means that each iteration of the loop uses $\sqrt{n \smash{\log^3} (n)/\epsilon}$ queries. We perform $O(1/\epsilon^2)$ iterations of the loop and the total complexity is thus $O(\sqrt{n \log^3 n}/\epsilon^{5/2})$. The complexity of \Cref{alg:main_algorithm_preprocess_r_div} is $O(n \log (n)/s) = O(\sqrt{n \log^3 n}/\epsilon^{5/2})$.
\end{proof}

\begin{remark}
If we only want an algorithm that returns $0$ when $\Pi(G) = 0$ and it returns $>0$ when $\Pi(G) \geq \epsilon n$, we may get a more efficient algorithm. This is useful for non-tolerant property testing. To get this guarantee, it is sufficient to set $k = \Theta(1/\epsilon)$. We then may set $r = \sqrt{n \log^3 n}, s = c \epsilon^2 \sqrt{n \log n}$ for $c>0$ being a small enough constant. This results in query complexity of $O(\sqrt{n \log^3 n}/\epsilon^{2})$ instead of $O(\sqrt{n \log^3 n}/\epsilon^{5/2})$.
\end{remark}

\end{document}